\newcommand{\CC}{{\mathbb C}}
\newcommand{\RR}{{\mathbb R}}
\newcommand{\NN}{{\mathbb N}}
\newcommand{\TT}{{\mathbb T}}
\newcommand{\ZZ}{{\mathbb Z}}
\newcommand{\cB}{{\mathcal{B}}}
\newcommand{\cD}{{\mathcal{D}}}
\newcommand{\cF}{{\mathcal{F}}}
\newcommand{\cH}{{\mathcal{H}}}
\newcommand{\cK}{{\mathcal{K}}}
\newcommand{\bPhi}{{\mbox{\boldmath $\Phi$}}}
\newcommand{\bPsi}{{\mbox{\boldmath $\Psi$}}}
\newcommand{\bOmega}{{\mbox{\boldmath $\Omega$}}}
\newcommand{\bx}{{\mbox{\boldmath $x$}}}
\newcommand{\by}{{\mbox{\boldmath $y$}}}
\newcommand{\bK}{{\mbox{\boldmath $K$}}}
\newcommand{\bO}{{\mbox{\boldmath $O$}}}
\newcommand{\sbO}{{\mbox{\footnotesize \boldmath $O$}}}
\newcommand{\bP}{{\mbox{\boldmath $P$}}}
\newcommand{\sbP}{{\mbox{\footnotesize \boldmath $P$}}}
\newcommand{\bQ}{{\mbox{\boldmath $Q$}}}
\newcommand{\bfA}{{\mbox{\boldmath $\mathfrak A$}}}
\newcommand{\obfA}{\overline{\mbox{\boldmath $\mathfrak A$}}}
\newcommand{\fB}{{\mathfrak B}}
\newcommand{\fC}{{\mathfrak C}}
\newcommand{\bfF}{{\mbox{\boldmath $\mathfrak F$}}}
\newcommand{\obfF}{\overline{\mbox{\boldmath $\mathfrak F$}}}
\newcommand{\bfK}{{\mbox{\boldmath $\mathfrak K$}}}
\newcommand{\fK}{{\mathfrak K}}
\newcommand{\bfN}{{\mbox{\boldmath $\mathfrak N$}}}
\newcommand{\fR}{{\mathfrak R}}
\newcommand{\bfR}{{\mbox{\boldmath $\mathfrak R$}}}
\newcommand{\bpartial}{{\mbox{\boldmath $\partial$}}}
\newcommand{\eg}{{\it e.g.}}
\newcommand{\supp}{\text{supp}}
\newcommand{\scirc}{\mbox{\footnotesize $\circ$}}
\newcommand{\one}{{\mbox{\boldmath $1$}}}
\newcommand{\ad}{\text{Ad} \, }
\newcommand{\mybinom}[3][0.8]{\scalebox{#1}{$\dbinom{#2}{#3}$}}
\def\eg{{\it e.g.\ }}
\def\ie{{\it i.e.\ }}
\title{The resolvent algebra of non-relativistic Bose fields:
  sectors, morphisms, fields and dynamics}
\author{Detlev Buchholz}
\institute{Mathematisches Institut, 
Universit\"at G\"ottingen, \ 37073 G\"ottingen, Germany \\ 
\email detlev.buchholz@mathematik.uni-goettingen.de}
\authorrunning{Detlev Buchholz} 
\titlerunning{The resolvent algebra: sectors, morphisms, fields and dynamics}
\date{}
\begin{document}

\maketitle

\begin{abstract} 
\noindent
It was recently shown~\cite{Bu1} that the 
resolvent algebra of a non-relati\-vistic Bose field 
determines a gauge invariant (particle number preserving) 
kinematical algebra of observables 
which is stable under the automorphic action of a large family of 
interacting dynamics involving pair potentials. In the present article,   
this observable algebra is extended to a field algebra 
by adding to it isometries, 
which transform as tensors under gauge transformations and 
induce particle number changing morphisms of the observables.
Different morphisms are linked by intertwiners in the observable 
algebra. It is shown that such intertwiners also induce  
time translations of the morphisms. 
As a consequence, the field algebra is stable under the 
automorphic action of the interacting dynamics as well. These 
results establish a concrete C*-algebraic framework 
for interacting non-relativistic Bose systems in infinite 
space. It provides an adequate basis for studies of 
long range phenomena, such as phase transitions, 
stability properties of equilibrium states, condensates, 
and the breakdown of symmetries.
\end{abstract}
\keywords{Resolvent algebra, Bose fields, Observables,
  Morphisms, Dynamics}
\section{Introduction}
\setcounter{equation}{0}

\noindent We continue here our study of the
stability properties of the resolvent algebra of a non-relativistic
Bose field under the action of interacting dynamics, involving a large 
family of pair potentials. It is our goal to establish an  
algebraic framework which allows one to treat interacting bosonic systems
in infinite volume without having to rely on finite
volume approximations. This seems desirable since it
puts one into the position to apply methods of the theory of operator 
algebras to specific problems in many body theory.
Conceivable applications are the
stability of equilibrium states,
phase transitions, moving systems such as flows, condensates
and the spontaneous breakdown of symmetries,
cf.~\cite{BrRo,CoDeZi,LiSeSoYn,PiSt,Ve}. 
Once the algebra, including the dynamics, has been constructed,
all states of interest appear as elements of its dual
space and are thus accessible to further study. 

\medskip 
A first step in this program was recently accomplished
in \cite{Bu1}. There it was shown that a slight extension of
the subalgebra of the resolvent algebra,
consisting of gauge invariant (particle number preserving) 
observables, is stable under the action of dynamics
involving pair potentials. It is the aim of the 
present article to extend this observable algebra to a larger field
algebra of operators, which change the particle numbers. 
Our approach is based on ideas
developed by Doplicher, Haag and Roberts in a 
general analysis of superselection sectors in relativistic quantum
field theory \cite{DoHaRo}. These ideas can be carried 
over to the non-relativistic setting with appropriate
modifications.  

\medskip
We proceed from the fact that states with fixed particle number
constitute superselection sectors of the algebra of
observables, \ie states with different particle numbers 
induce disjoint irreducible representations of this algebra.
Thus the particle number plays in the present context the role of a
charge quantum number. According to the deep 
insights of Doplicher, Haag and Roberts, such data
determine charge carrying morphisms of the observable
algebra, which connect different representations.  
These morphisms are given by the adjoint action
of isometric operators which can be interpreted as charge 
carrying fields. Different morphisms
carrying the same charge are related by observable  
intertwining operators. Such intertwiners also exist 
for so-called covariant morphisms which are
shifted by space
and time translations; they allow to extend these
shifts to the charged fields. So  
the observable algebra already contains 
the pertinent information about the underlying charged 
fields and their dynamics. 

\medskip
In the present case, the resolvent algebra is 
composed from the outset of particle number changing
operators. Nevertheless, it is meaningful to adopt the strategy
of Doplicher, Haag and Roberts, \ie to identify isometries whose 
adjoint actions define particle number changing morphisms of the 
observable subalgebra and to determine the corresponding intertwiners. 
In contrast to the relativistic case, these morphisms do not 
preserve the unit operator, and the intertwiners between 
them are only partial 
isometries. This is due to the fact that the
charged representations of the 
algebra of observables are not faithful in the present 
\mbox{non-relativistic} 
setting, \eg the vacuum representation
is only one dimensional. In spite of these differences, shifts of 
the morphisms by space and time translations can be defined 
since the algebra of observables is stable under these actions;
moreover, intertwining operators between the shifted morphisms 
exist. So one is faced with the question of whether these 
intertwiners are elements of the algebra of observables, \ie
whether the morphisms are covariant in the sense of Doplicher,
Haag and Roberts. The
proof that this is the case represents the technically 
most difficult part of the present investigation and
is deferred to the appendix. Having
settled this point, it follows 
that the C*-algebra, which is generated by the algebra
of observables and any one of the particle number
changing isometries, is stable under space and 
time translations. This algebra thus constitutes the desired 
extension of the observables to a 
bosonic field algebra which is stable under 
the action of symmetries and the dynamics. 

\medskip
Our article is organized as follows. In the subsequent
section, we collect 
some well known facts regarding canonical Bose fields,
establish our notation,  
and recall the definition of the resolvent algebra. 
In Sec.~3 we prove that this algebra admits a harmonic analysis 
with regard to the action of the gauge group.
We also recall some facts about the structure of the algebra 
generated by the gauge invariant observables 
and define the bosonic field algebra.
In Sec.~4 we consider localized morphisms of the observable
algebra and discuss their properties under symmetry
transformations. We present a condition in terms of
intertwining operators between these morphisms which
implies that these transformations can be extended to
automorphisms of the field algebra. 
The formalism is used in Sec.~5 for the discussion
of symmetries and dynamics. In particular, it is shown that the 
field algebra is stable under space and time translations for 
a large family of dynamics involving two-body potentials;
technical details are given in the appendix. 
The article closes with a summary and remarks 
on the treatment of further dynamics of 
physical interest. 
 
\section{Preliminaries}
\setcounter{equation}{0}

The resolvent algebra of canonical quantum systems 
has been abstractly defined in~\cite{BuGr1}. It is generated 
by symbols $R(\lambda, f)$, the resolvents of the underlying
canonical operators, where $\lambda \in \RR \backslash \{ 0 \}$
and, in the case of a scalar Bose field, $f \in \cD(\RR^s)$,
the space of complex valued test functions with 
compact support in position space. This space is regarded as a 
real symplectic space, equipped with some symplectic 
form $\sigma$, cf.\ below. 
The symbols $R(\lambda, f)$ satisfy a number of relations,  
encoding all algebraic properties of the fields, and their
polynomials generate a C*-algebra, the resolvent algebra. 
It was shown in~\cite{BuGr1}  that this algebra
is faithfully represented on the bosonic Fock space and we 
will deal with this concrete representation in the present
investigation. 

\medskip
We make use of the notation in \cite{Bu1} and  
denote by $\cF = \bigoplus_{n=0}^\infty \, \cF_n$ the symmetric 
(bosonic) Fock space. Its one-dimensional 
subspace $\cF_0$ consists of complex multiples of the vacuum 
vector $\bOmega$. The space $\cF_1 \simeq L^2(\RR^s)$ is the single
particle space with the standard scalar product
$\langle \Psi, \Phi \rangle \doteq \int \! d\bx \, 
\overline{\Psi}(\bx) \Phi(\bx)$,
and the $n$-particle subspace $\cF_n$ is spanned by the 
symmetric tensor products of single particle vectors,  
$ |\Phi_1 \rangle \otimes_s \cdots \otimes_s | \Phi_n \rangle$,  
$n \in \NN$.

\medskip
On Fock space $\cF$ there act 
the creation and annihilation operators 
$a^*$ and $a$, which are regularized 
with test functions $f,g \in \cD(\RR^s) \subset L^2(\RR^s)$.
They satisfy on their standard domains  
of definition the commutation relations 
$$
[a(f), a^*(g)] = \langle f, g \rangle \, \one \, ,
\quad [a(f), a(g)] = [a^*(f), a^*(g)] = 0 \, .
$$
We recall that $a^*(f)$ is complex linear in $f$ wheras $a(f)$, being
the hermitean conjugate of $a^*(f)$, is antilinear in $f$.

\medskip
This structure can be rephrased in terms of a single real linear,  
symmetric field operator~$\phi$ given by 
$\phi(f) \doteq \big(a^*(f) + a(f)\big)$, $f \in \cD(\RR^s)$. 
It satisfies the commutation relations
$$
[\phi(f) , \phi(g)] = i \sigma(f,g) \, \one
\, , \quad f,g \in \cD(\RR^s) \, ,
$$
where 
$\sigma(f,g) \doteq 2 \, \text{Im}\big(\langle f, g \rangle\big)$
is a non-degenerate real linear symplectic form on $\cD(\RR^s)$,
which thus is regarded as a symplectic space. Note that the
creation and annihilation operators can be recovered from the 
field by the formulas
$$ 
2 a^*(f) = \phi(f) - i \, \phi(if) \, , \quad
2 a(f) =  \phi(f) + i \, \phi(if) \, .
$$
The resolvents of the field operator,  
\begin{equation} \label{e2.1}
R(\lambda,f) \doteq \big(i \lambda  + \phi(f)\big)^{-1} \, , \quad 
\lambda  \in \RR \backslash \{ 0 \} \, , \ f \in \cD(\RR^s) \, , 
\end{equation}
generate, by taking their sums and products and
proceeding to the norm closure on $\cF$, the resolvent 
algebra $\bfR$, based on the symplectic space 
$(\cD(\RR^s), \sigma)$. As already mentioned, the 
algebra $\bfR$ provides a concrete and faithful
representation of the abstractly defined resolvent 
algebra, based on this symplectic space \cite[Thm.~4.10]{BuGr1}.

\section{Gauge transformations, tensors and observables}
\setcounter{equation}{0}

On the resolvent algebra $\bfR$ acts the global gauge group 
$U(1) \simeq \TT$
by maps $\gamma$, which are defined on the basic resolvents 
according to  
$$
\gamma_u\big(R(\lambda, f)\big) \doteq 
R(\lambda, e^{iu}f) \, , \quad u \in [0,2\pi] \, , 
$$
for $\lambda \in \RR \backslash \{ 0 \}$, $f \in \cD(\RR^s)$. 
These maps are unitarily implemented on Fock space 
by exponentials of the particle number operator $N$, 
$$ 
\gamma_u\big(R(\lambda, f)\big) = 
e^{iu N} R(\lambda, f) e^{-iu N} \, , \quad u \in [0,2\pi] \, .
$$
Thus these maps define a group of 
automorphisms $\gamma_{\, \TT}$ of the resolvent algebra.
This can also be seen in the abstract setting 
since the defining relations of the resolvent algebra remain unchanged 
under their action~\cite[Def.\ 3.1]{BuGr1}. 

\medskip
The action of the gauge group on the resolvent 
algebra $\bfR$ is \textit{not} pointwise norm continuous,
cf.\ \cite[Thm.\ 5.3(ii)]{BuGr1}. Nevertheless, one can 
perform a harmonic analysis of the elements of $\bfR$ 
with regard to this group 
by exploiting the fact that it acts pointwise 
continuously on $\bfR$ in the strong operator topology of 
the Fock representation. Wheras the definition of 
the harmonics by Fourier integrals relies on this weaker 
topology, the harmonics themselves are elements of the
C*-algebra $\bfR$, as is shown in the subsequent lemma.
Let us recall in this context that the resolvent algebra is faithfully 
represented on Fock space.

\begin{lemma} \label{l3.1}
Let $R \in \bfR$. The integrals 
$$
R_m \doteq (2 \pi)^{-1} \int_0^{2 \pi} \! du \, e^{-ium} \, e^{iuN} R e^{-iuN} \, ,
\quad m \in \ZZ \, ,
$$
being defined in the strong operator topology on $\cF$, are elements 
of the resolvent algebra, \ie $R_m \in \bfR$, $m \in \ZZ$. 
For fixed $m$, the operators 
$R_m$ transform as tensors (harmonics) under the 
gauge transformations,
$\gamma_u(R_m) = e^{ium} R_m$, $u \in [0,2\pi]$.
\end{lemma}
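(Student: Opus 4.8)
The plan is to identify $R_m$ with a component of $R$ in the grading of $\cF$ by the particle number, reduce the statement to ``localized'' elements, and there invoke the structure of the resolvent algebra over finite-dimensional symplectic spaces. First I would observe that, with $P_n$ the projection onto $\cF_n$ and $e^{iuN}=\sum_{n\ge 0}e^{iun}P_n$, the map $u\mapsto e^{iuN}Re^{-iuN}\Psi$ is norm continuous for every $\Psi\in\cF$, so the integral converges in the strong operator topology and equals the ``$m$-th diagonal'' $R_m=\sum_{n\ge 0}P_{n+m}RP_n$ (with $P_j:=0$ for $j<0$). Since the $\gamma_u$ are isometric automorphisms one gets $\|R_m\|\le\|R\|$, so the map $\Pi_m\colon R\mapsto R_m$ on $B(\cF)$ is linear and norm contractive. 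Because $\bfR$ is the norm closure of the linear span of the monomials $R(\lambda_1,f_1)\cdots R(\lambda_k,f_k)$, it therefore suffices to show $R_m\in\bfR$ when $R=M$ is such a monomial.

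Next I would localize. The test functions $f_1,\dots,f_k$ together with $if_1,\dots,if_k$ span a finite-dimensional complex subspace $V\subset\cD(\RR^s)$ on which $\sigma$ is non-degenerate (it is the imaginary part of the positive definite form $\langle f,g\rangle$ restricted to $V$). Hence $M$ lies in the C*-subalgebra $\bfR(V)\subset\bfR$ generated by the resolvents $R(\lambda,g)$, $g\in V$, which is a concretely represented copy of the resolvent algebra over a symplectic space of dimension $2n=\dim_\RR V$. Since $e^{iu}V=V$, the gauge group maps $\bfR(V)$ into itself, and the restriction $\gamma_u|_{\bfR(V)}$ is implemented by the strongly continuous group $e^{iuN_V}$, where $N_V$ is the number operator of the modes spanning $V$. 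So everything comes down to proving that $\Pi_m$ maps $\bfR(V)$ into $\bfR(V)$.

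For this last point I would use the structure of $\bfR(V)$ established in~\cite{BuGr1}. The algebra contains the compact operators $\cK$ as an ideal, and on $\cK$ the gauge action \emph{is} pointwise norm continuous: conjugating a fixed compact operator by the strongly continuous unitaries $e^{iuN_V}$ is norm continuous in $u$. Consequently, for $R\in\cK$ the defining integral is a norm-convergent Bochner integral of elements of $\cK$, so $R_m\in\cK\subset\bfR(V)$; and since the spectral projections of $N_V$ for bounded intervals are finite rank, hence in $\cK$, the ``low particle number part'' $QRQ$ of any $R\in\bfR(V)$ again has norm continuous gauge orbit, giving $(QRQ)_m\in\bfR(V)$. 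To capture the remainder one would pass to the successive subquotients of a composition series of $\bfR(V)$ by gauge-invariant ideals, on which the gauge transformations act geometrically on the underlying phase-space strata in a correspondingly tractable way, and then propagate the information back to $\bfR(V)$ itself.

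I expect this last propagation to be the main obstacle: the gauge action on $\bfR(V)$ itself is \emph{not} pointwise norm continuous (cf.\ \cite[Thm.~5.3(ii)]{BuGr1}), so no Bochner-integral argument is available at the top of the series, and the substance lies in controlling the error between the harmonic of $R$ and a genuine element of $\bfR(V)$ by an element of a smaller ideal, using the explicit generators of the ideals from~\cite{BuGr1}. The tensor property, by contrast, is immediate: $\gamma_v$, being implemented by a unitary, commutes with the strong integral, so substituting $u\mapsto u+v$ in the $2\pi$-periodic integrand gives $\gamma_v(R_m)=(2\pi)^{-1}\int_0^{2\pi}du\,e^{-ium}\gamma_{u+v}(R)=e^{ivm}R_m$ for all $v\in[0,2\pi]$.
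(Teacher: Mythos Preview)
Your reduction to monomials and localization to the resolvent algebra $\fR(L)$ over the finite-dimensional complex span $L$ of $f_1,\dots,f_k$ is exactly what the paper does, and your argument for the tensor property is correct. The gap is in your final step: the ``propagation through subquotients of a composition series'' is left as a programme rather than an argument, and you yourself flag it as the main obstacle. Your truncation $QMQ$ only gives $QM_mQ\in\fR(L)$ for each finite-rank gauge-invariant $Q$, with merely strong convergence $QM_mQ\to M_m$; that does not place $M_m$ in any C*-algebra.

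The paper sidesteps this entirely. The point you are missing is that for a monomial $M=\prod_j R(\lambda_j,f_j)$, the harmonic $M_m$ already lies in the \emph{compact} ideal of $\fR(L)$ --- no propagation through quotients is needed. The paper shows this by writing
\[
M_m^{*}M_m \;=\; (2\pi)^{-2}\int_0^{2\pi}\!\!\!\int_0^{2\pi}\! du\,dv\; e^{i(u-v)m}\,\gamma_u(M^{*})\,\gamma_v(M)
\]
and observing that the integrand is compact on $\cF(L)$ for almost all $(u,v)$: by the ideal structure of $\fR(L)$ \cite[Prop.~4.4]{Bu3}, the product $\gamma_u(M^{*})\gamma_v(M)$ lies in the principal ideal generated by the reordered product $\prod_j R(\lambda_j,e^{iu}f_j)^{*}R(\lambda_j,e^{iv}f_j)$, and since $\sigma(e^{iu}f_j,e^{iv}f_j)=2\sin(v-u)\,\|f_j\|_2^2\neq 0$ whenever $u\not\equiv v\pmod\pi$, each adjacent pair consists of resolvents of canonically conjugate operators, which forces the generator (and hence the whole ideal) into the compacts on $\cF(L)$ \cite[Thm.~5.4]{BuGr2}. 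The double integral of a bounded, almost-everywhere compact-valued function is compact; hence $M_m^{*}M_m$ is compact, and by square root and polar decomposition so is $M_m\upharpoonright\cF(L)$. Thus $M_m$ sits in the compact ideal of $\fR(L)\subset\bfR$.

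In hindsight your own observation that ``on the compact ideal the Bochner argument works'' was closer to the whole story than you realized: the harmonics of monomials never leave the compacts.
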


\noindent \textbf{Remark:} Note that there exist   
elements $R \in \bfR$, such as the basic
resolvents, which can not be 
approximated by the coresponding sums 
$\sum_m R_m$ in the norm topology. 
So harmonic synthesis fails in the C*-algebra $\bfR$. 
We will return to this point further below.

\begin{proof}
Since the polynomials of the basic resolvents are norm dense
in $\bfR$ and the map $R \mapsto R_m$ is norm
continuous, it suffices to establish the statement
for monomials. So, for $j = 1, \dots , k$, 
let \mbox{$\lambda_j \in \RR \backslash \{ 0 \}$}, 
$f_j \in \cD(\RR^s) \backslash \{ 0 \}$ and let 
$M \doteq \prod_{j=1}^k R(\lambda_j, f_j)$
be the corresponding ordered product of resolvents. Since
the function $u \mapsto e^{iuN}$ on $\cF$ is strong operator 
contiunuous, the integrals
$$
M_m = (2 \pi)^{-1} \int_0^{2 \pi} \! du \, e^{-ium} \, e^{iuN} M e^{-iuN} \, ,
\quad m \in \ZZ \, ,
$$
are defined in this topology. 
For the proof that they are elements of $\bfR$, let 
$L \subset L^2(\RR^s)$ be the complex subspace spanned 
by $f_1, \dots , f_k$ and let $\cF(L) \subset \cF$ 
be the Fock space based on  $L$. Since $(L, \sigma)$ 
is a finite dimensional non-degenerate 
symplectic subspace of $(\cD(\RR^s),\sigma)$, 
there is a corresponding resolvent algebra 
$\fR(L) \subset \bfR$ which is generated by the resolvents 
$R(\lambda,f)$, where \mbox{$\lambda \in \RR \backslash \{ 0 \}$}, 
\mbox{$f \in L$}.
This subalgebra acts faithfully on 
$\cF(L)$, cf.\ \cite[Thm.\ 4.10]{BuGr1}, 
and it contains some compact ideal which is 
represented on this by space by the algebra of compact operators,  
cf.\ \cite[Thm.\ 5.4]{BuGr1}.

\medskip
Now, given any $m \in \ZZ$, consider the function
$$
u,v \mapsto e^{i(u-v)m} e^{iuN} M^* e^{-iuN} e^{ivN} M e^{-ivN} \, , \quad
u,v \in [0, 2 \pi] \, . 
$$ 
Since 
$e^{iuN} R(\lambda, f) e^{-iuN} = R(\lambda, e^{iu} f)$,
and similarly for the adjoint resolvents, 
the values of this function lie in the intersections of the 
principal ideals in $\fR(L)$, which are generated
by the individual gauge-transformed resolvents in the  
above $2k$-fold product. According
to \cite[Prop.~4.4]{Bu3}, this intersection coincides 
with the principal ideal generated by the reordered product 
$$
R(\lambda_1, e^{iu} f_1)^*  R(\lambda_1, e^{iv} f_1)
\cdots R(\lambda_k, e^{iu} f_k)^*  R(\lambda_k, e^{iv} f_k) \, .
$$
Since the functions $f_1, \dots , f_k$ span the space 
$L$, the latter operator acts as a compact operator 
on $\cF(L)$ if all 
adjacent pairs of resolvents are generated by canonically 
conjugate operators~\cite[Thm.~5.4]{BuGr2}, \ie if 
$$
\sigma(e^{iu} f_j, e^{iv} f_j) = 
i (e^{i(v-u)} - e^{i(u-v)}) \, \langle f_j, f_j \rangle  \neq 0 
\quad \text{for} \quad j = 1, \dots, k \, .
$$ 
So the above function
has, for almost all $(u,v) \in [0, 2 \pi] \times [0, 2 \pi]$, 
values in compact operators on $\cF(L)$; moreover, it is bounded. 
Hence the double integral  
$$
M_m^{*} \, M_m = 
\int_0^{2 \pi} \! \! du \int_0^{2 \pi} \! \! dv \, 
e^{i(u-v)m} \, e^{iuN} M^* e^{-iuN} e^{ivN} M e^{-ivN}
$$
is a compact operator on this space as well. 
Taking its square root and performing a polar 
decomposition on $\cF(L)$, we find that 
$M_m \upharpoonright \cF(L)$ is also compact. It implies 
that $M_m$ is an element of the compact ideal of 
$\fR(L)$. Since $\fR(L) \subset \bfR$, we conclude that 
$M_m \in \bfR$ for any $m \in \ZZ$, completing
the proof of the main part of the statement. The remaining part
concerning the action of
the gauge transformations on the operators $R_m$ 
follows easily from the fact that on Fock space
this action can be interchanged with the integration.  \qed 
\end{proof} 

The preceding proposition augments a    
result in \cite{Bu1} according to which means over the
gauge group, corresponding to the value $m=0$ in
the above statement, map the algebra 
$\bfR$ into a subalgebra $\bfA \subset \bfR$ of gauge invariant 
operators. The latter operators preserve the particle numbers
of states and are interpreted as observables. 
It was crucial for the discussion of dynamics in \cite{Bu1} 
that one has detailed information about the structure of the 
algebra $\bfA$. Since this matters also 
in the present investigation we briefly recall here some
relevant facts. 

\medskip 
Given $n \in \NN_0$, the restriction
$\bfA \upharpoonright \cF_n$ defines an irreducible but
non-faithful representation of the observable algebra
on the $n$-particle space. The represented operators 
coincide with the elements of some C*-algebra $\fK_n$
on $\cF_n$, which has the following stucture: let $\fC_k$ be the 
algebra of compact operators on $\cF_k$, 
then the algebra $\fK_n$ is given by 
\begin{equation} \label{e3.1}
\fK_n \doteq \sum_{k = 0}^n \, \fC_k \otimes_s 
\underbrace{1 \otimes_s \cdots \otimes_s 1}_{n-k} \, , 
\end{equation}
where $\fC_0 \doteq \CC \, 1$. 
Note that the algebra of compact operators is
nuclar, so its C*-tensor products are unique and one also has 
for their symmetrized tensor products the relation 
\begin{equation} \label{e3.2}
\fC_k = \underbrace{\fC_1 \otimes_s \cdots \otimes_s \fC_1}_k \, .
\end{equation}
It has been shown in \cite[Lem.\ 3.3]{Bu1} that  
the restricted observable algebra satisfies 
the equality $\bfA \upharpoonright \cF_n = \fK_n$. 
Since these restrictions are not faithful, it is important to identify
in the algebras $\fK_n$, $n \in \NN_0$, those 
operators which arise from a given observable in $\bfA$. This
is accomplished by inverse maps $\kappa_n : \fK_n \rightarrow \fK_{n-1}$.
They are homomorphisms which act on the generating
operators of $\fK_n$ according to the formula,
$k = 0, \dots , n$, 
\begin{equation} \label{e3.3}
\kappa_n(C_k \otimes_s 
\underbrace{1 \otimes_s \cdots \otimes_s 1}_{n-k})
\doteq (n-k)/n \ C_k \otimes_s 
\underbrace{1 \otimes_s \cdots \otimes_s 1}_{n-k-1} \, ,
\quad C_k \in \fC_k \, ,
\end{equation}
cf.\ \cite{Bu1}. For notational convenience we 
put $\kappa_0 \doteq 0$ and define 
$\fK_n \doteq \{ 0 \}$ for $n < 0$.

\medskip
A sequence of operators $\bK \doteq \{ K_n \in \fK_n \}_{n \in \NN_0}$
is said to be coherent if its elements are 
uniformly bounded and $\kappa_n(K_n) = K_{n-1}$, $n \in \NN_0$.
Such coherent sequences are by definition 
the elements of the (bounded) 
inverse limit $\bfK$ of the inverse system 
$\{\fK_n, \kappa_n \}_{n \in \NN_0}$. This inverse limit is again a C*-algebra,
where the algebraic operations are component-wise defined. 
It has been shown in \mbox{\cite[Lem. 3.4]{Bu1}} that all elements 
$A \in \bfA$ of the algebra of observables determine such coherent 
sequences,
$\bK(A) \doteq \{ A \upharpoonright \cF_n \}_{n \in \NN_0} \in \bfK$,
but this map is not surjective. This can be remedied
by extending the algebra $\bfA$ to a C*-algebra $\obfA$ 
which consists of all bounded operators $\overline{A}$ on $\cF$
such that 
\begin{equation} \label{e3.4}
\overline{A} \upharpoonright {\textstyle \bigoplus_{k=0}^n} \, \cF_k \, \in \,  
\bfA \upharpoonright {\textstyle \bigoplus_{k=0}^n} \, \cF_k
\quad \text{for any} \ n \in \NN_0 \, .
\end{equation} 
So the restrictions of the 
algebras $\bfA$ and $\obfA$ coincide on all states
with limited particle number. It has been shown \cite[Thm.\ 3.5]{Bu1} 
that this extended algebra $\obfA$ is isomorphic to~the inverse 
limit $\bfK$, the isomorphism being given by the map 
\begin{equation} \label{e3.5}
\overline{A} \mapsto 
\bK(\overline{A}) \doteq 
\{ \overline{A} \upharpoonright \cF_n \}_{n \in \NN_0} \in \bfK \, ,
\quad \overline{A} \in \obfA \, .
\end{equation}

\medskip
We turn now to the operators in $\bfR$ which transform as 
arbitrary tensors under gauge transformations. As already
mentioned, these operators do not generate the full
resolvent algebra (harmonic synthesis fails).
We therefore proceed to a more convenient algebra, which admits
harmonic analysis and synthesis and also 
contains the extended algebra of observables $\obfA$.

\medskip 
Let $\bfR_M$, $M \in \NN_0$, be the (norm closed) subspace of $\bfR$ which is
generated by all linear combinations of tensors
$R_m \in \bfR$ with $-M \leq m \leq M$.
Clearly, \mbox{$\bfR_{M_1} \subset \bfR_{M_2}$} if 
$M_1 \leq M_2$. Since any tensor satisfies the equality
$R_m{}^{\! *} = R^{\, *}{}_{\! -m}$, the
spaces $\bfR_M$ are symmetric, $\bfR_M{}^{\! *} = \bfR_M$.
It is also clear that the product
of tensors is again a tensor,
$R^\prime_{m^\prime} \, R^{\prime \prime}_{m^{\prime \prime}}
= R_{m^\prime + m^{\prime \prime}}$; hence one has 
with regard to pointwise multiplication the inclusion 
$\bfR_{M^\prime} \, \bfR_{M^{\prime \prime}} \subset
\bfR_{M^\prime + M^{\prime \prime}}$.  
It follows that $\bigcup_M \bfR_M \subset \bfR$ is a
*-algebra whose
norm closure in $\bfR$ will be denoted by $\bfF$ 
and called field algebra. It contains 
the algebra of observables,~$\bfA = \bfR_0 \subset \bfF$. 

\medskip 
In order to incorporate also the extended algebra $\obfA$, we proceed 
similarly as in case of the observables and consider the 
set of bounded operators $\overline{F}$ on~$\cF$
for which there is some $M \in \NN_0$ such that 
\begin{equation} \label{e3.6}
\overline{F} \upharpoonright {\textstyle \bigoplus_{k=0}^n} \,\cF_k
\in \bfR_M \upharpoonright {\textstyle \bigoplus_{k=0}^n} \, \cF_k
\quad \text{for all} \ n \in \NN_0 \, .
\end{equation}
The closure of this space of operators with regard to the 
operator norm on~$\cF$ is denoted by~$\obfF$. We show
in the subsequent lemma that $\obfF$ is a C*-algebra, which has 
the desired properties. Since it contains the algebra $\bfF$,
we will refer to it as extended field algebra. 

\begin{lemma} \label{l3.2}
The norm closed space $\obfF$, defined above, is a C*-algebra.
It extends the field algebra, $\obfF \supset \bfF$, and
contains the extended algebra of observables, 
\mbox{$\obfA \subset \obfF$}. The gauge group acts pointwise 
norm continuously on $\obfF$ by the adjoint action of 
the exponentials of the number operator, and harmonic analysis 
and synthesis do work on this algebra. In particular,
$\obfA$ is the fixed point algebra in~$\obfF$ under gauge
transformations. 
\end{lemma}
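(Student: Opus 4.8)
The plan is to do everything first on the dense $*$-subalgebra $\obfF_0\subset\obfF$ consisting of those bounded operators that already satisfy condition~(\ref{e3.6}) \emph{before} the norm closure, and then to transfer the relevant structure to $\obfF$ by continuity. The key preliminary observation is that the harmonic projection $R\mapsto R_m$ of Lemma~\ref{l3.1} extends to a contractive linear map on all of $B(\cF)$ (on matrix elements between vectors of sharp particle numbers $j,k$ it simply picks out the component with $j-k=m$). From this one gets the clean description $\bfR_M=\{\,R\in\bfR:R_m=0\text{ for }|m|>M\,\}$ and, more generally, that for $\overline F\in\obfF_0$ with parameter $M$ and witnesses $G_n\in\bfR_M$ in~(\ref{e3.6}) one has $\overline F_m=0$ for $|m|>M$ and $\overline F_mP_n=(G_n)_mP_n$ for all $n$, where $P_n$ is the projection onto $\bigoplus_{k=0}^n\cF_k$. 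In particular every element of $\obfF_0$ is the \emph{finite} sum $\overline F=\sum_{|m|\le M}\overline F_m$ of its gauge harmonics, and each $\overline F_m$ again lies in $\obfF_0$.

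First I would record the algebraic and gauge stability of $\obfF_0$. Since $P_n$ commutes with $N$ and elements of $\bfR_{M'}$ change the particle number by at most $M'$, one has $G'_nP_n=P_{n+M'}G'_nP_n$, whence for $\overline F,\overline F'\in\obfF_0$ (parameters $M,M'$) the product satisfies $\overline F\,\overline F'P_n=G_{n+M'}G'_nP_n$ with $G_{n+M'}G'_n\in\bfR_{M+M'}$; a dual computation gives $\overline F^{*}P_n=G_{n+M}^{*}P_n$ with $G_{n+M}^{*}\in\bfR_M$, and closure under linear combinations is immediate. Thus $\obfF_0$ is a $*$-algebra, and since $\ad e^{iuN}$ carries $R_m$ to $e^{ium}R_m$ and is isometric it preserves each $\bfR_M$, hence $\obfF_0$. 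Passing to norm closures, $\obfF$ is a C*-subalgebra of $B(\cF)$ stable under the gauge automorphisms $\gamma_u=\ad e^{iuN}$. Moreover $\bfF\subseteq\obfF$, because $\bigcup_M\bfR_M\subseteq\obfF_0$ (take $G_n=F$), and $\obfA\subseteq\obfF$, because by~(\ref{e3.4}) every $\overline A\in\obfA$ meets~(\ref{e3.6}) with $M=0$.

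For pointwise norm continuity of the gauge action note that on $\obfF_0$ one has $\gamma_u(\overline F)=\sum_{|m|\le M}e^{ium}\overline F_m$, a finite trigonometric polynomial in $u$ with values in $\obfF_0$, hence norm continuous; a general $\overline F\in\obfF$ is a norm limit of such elements, and since each $\gamma_u$ is isometric the approximating continuous functions $u\mapsto\gamma_u(\overline F^{(k)})$ converge \emph{uniformly} to $u\mapsto\gamma_u(\overline F)$, which is therefore continuous. With continuity in hand, harmonic analysis and synthesis are the standard facts for actions of the compact group $\TT$: the harmonics $\overline F_m=(2\pi)^{-1}\int_0^{2\pi}e^{-ium}\gamma_u(\overline F)\,du$ are now norm-convergent integrals lying in $\obfF$ (they are norm limits of the $\obfF_0$-harmonics $(\overline F^{(k)})_m$), and the Fej\'er means $\sigma_N(\overline F)=\sum_{|m|<N}\big(1-|m|/N\big)\overline F_m=(2\pi)^{-1}\int_0^{2\pi}F_N(u)\,\gamma_u(\overline F)\,du$ converge to $\overline F$ in norm, $F_N$ being the (positive, approximate-identity) Fej\'er kernel on $\TT$; on $\obfF_0$ the synthesis is already exact and finite.

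It remains to identify the gauge fixed points. The inclusion $\obfA\subseteq\obfF^{\gamma_{\TT}}$ is clear since elements of $\obfA$ are block diagonal in $N$ and so commute with every $e^{iuN}$. Conversely, if $\overline F\in\obfF$ with $\gamma_u(\overline F)=\overline F$ for all $u$, then $\overline F_m=0$ for $m\neq0$, so $\overline F=\overline F_0$ by synthesis; writing $\overline F_0=\lim_k(\overline H^{(k)})_0$ with $\overline H^{(k)}\in\obfF_0$, each $(\overline H^{(k)})_0$ satisfies~(\ref{e3.6}) with $M=0$, i.e.\ satisfies~(\ref{e3.4}), and hence lies in $\obfA$; as $\obfA$ is norm closed --- being isomorphic to the inverse limit $\bfK$ by \cite[Thm.~3.5]{Bu1} --- we conclude $\overline F=\overline F_0\in\obfA$. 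I expect the only genuinely delicate point to be the bookkeeping with the particle-number bands around~(\ref{e3.6}) --- fixing the precise meaning of the restrictions $\cdot\upharpoonright\bigoplus_k\cF_k$ and checking that products, adjoints and gauge transforms stay within the same finite band --- together with the routine but pervasive task of verifying that each property proved on $\obfF_0$ survives the norm closure, which uses only norm continuity of multiplication, involution, the $\gamma_u$ and the harmonic projections, plus norm closedness of $\obfA$.
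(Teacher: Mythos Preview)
Your proof is correct and follows essentially the same route as the paper: work on the pre-closure $\obfF_0$ of operators satisfying (\ref{e3.6}), use the particle-number band structure to verify closure under products and adjoints, exploit the finite harmonic decomposition $\overline F=\sum_{|m|\le M}\overline F_m$ on $\obfF_0$ for norm continuity of the gauge action, and then transfer everything to $\obfF$ by density and isometry of~$\gamma_u$. Your explicit characterization $\bfR_M=\{R\in\bfR:R_m=0\text{ for }|m|>M\}$ and your use of Fej\'er means to make harmonic synthesis precise on \emph{all} of $\obfF$ (rather than only on the dense subalgebra, as the paper does) are small refinements, but the underlying argument is the same.
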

\begin{proof}
For the proof that the space $\obfF$ is a C*-algebra, it suffices to 
show that it is stable under taking products and adjoints
since, by definition, it is closed with regard to the operator 
norm on $\cF$. Moreover, it is sufficient to establish these
features for the norm-dense subspace of operators satisfying
condition \eqref{e3.6}. 

\medskip
Turning to the products, note that for
any tensor $R_m$ one has $R_m \cF_n \subset \cF_{m+n}$ if
$m+n \geq 0$ and $R_m \cF_n = 0$ if $m+n < 0$.
Hence \ $\bfR_M  \bigoplus_{k=0}^n \cF_k
\subset \bigoplus_{k=0}^{n + M} \cF_k$ for $n, M \in \NN_0$.  
Now let $\overline{F}_1, \overline{F}_2 \in \obfF$ be operators
satisfying condition \eqref{e3.6} and pick any $n \in \NN_0$. 
Then there is some $M_1 \in \NN_0$ which does not depend on 
$n$ and some operator $R_{1,n} \in \bfR_{M_1}$ such that 
$$
\overline{F}_1 \upharpoonright \textstyle{\bigoplus_{k=0}^n} \, \cF_k
=  R_{1,n} \upharpoonright \textstyle{\bigoplus_{k=0}^n} \, \cF_k
\subset \, \textstyle{\bigoplus_{k=0}^{n + M_1}} \, \cF_k \, .
$$
Similarly, there is some $M_2 \in \NN_0$ which does not depend on $n$ and 
some operator $R_{2,n + M_1} \in \bfR_{M_2}$ such that 
$$
\overline{F}_2 \upharpoonright \textstyle{\bigoplus_{k=0}^{n + M_1}} \, \cF_k
=  R_{2,n + M_1} \upharpoonright \textstyle{\bigoplus_{k=0}^{n + M_1}} \, \cF_k
\, .
$$
Since $R_{2,n + M_1} R_{1,n} \in \bfR_{M_2}  \bfR_{M_1} \subset 
 \bfR_{M_1 + M_2}$, it follows that 
$$
\overline{F}_2 \overline{F}_1 \upharpoonright 
\textstyle{\bigoplus_{k=0}^n} \, \cF_k = 
R_{2,n + M_1} R_{1,n} \upharpoonright 
\textstyle{\bigoplus_{k=0}^n} \, \cF_k 
\in \bfR_{M_1 + M_2} \upharpoonright 
\textstyle{\bigoplus_{k=0}^n} \, \cF_k \, ,
$$
proving that the product $\overline{F}_2 \overline{F}_1$ also satisfies
condition \eqref{e3.6}.

\medskip 
Next, let $\overline{F}$ be an operator 
satisfying condition \eqref{e3.6}
for some $M \in \NN_0$. Picking any $k \geq M$, it follows that 
$\overline{F} \, \cF_k \subset \textstyle{\bigoplus_{l=k-M}^{k + M}} \, \cF_l$.
Hence the space $\overline{F} \, \cF_k$ is orthogonal to 
$\bigoplus_{l=0}^n \, \cF_l$ for $k > n + M$. 
It implies that the adjoint
of $\overline{F}$ satisfies $\overline{F}^{\, *} \bigoplus_{k=0}^n \, \cF_k
\subset \bigoplus_{k=0}^{n + M} \, \cF_k$ for any $n \in \NN_0$.  
Now let $R_{n + M} \in \bfR_M$ be an operator such that 
$\overline{F} \upharpoonright \bigoplus_{k=0}^{n + M} \, \cF_k
= R_{n + M} \upharpoonright \bigoplus_{k=0}^{n + M} \, \cF_k$. 
Picking arbitrary vectors $\bPhi_n \in \bigoplus_{k=0}^{n} \, \cF_k$
and $\bPsi_{n + M} \in \bigoplus_{k=0}^{n + M} \, \cF_k$, one obtains
$$
\langle \bPsi_{n + M}, \overline{F}^* \bPhi_n \rangle =
\langle \overline{F} \, \bPsi_{n + M},  \bPhi_n \rangle =
\langle  R_{n + M} \, \bPsi_{n + M},  \bPhi_n \rangle =
\langle \bPsi_{n + M}, {R_{n + M}}^{\! *} \, \bPhi_n \rangle \, .
$$
Hence $ \overline{F}^* \! \upharpoonright \bigoplus_{k=0}^{n} \, \cF_k
= {R_{n + M}}^{\! *} \upharpoonright \bigoplus_{k=0}^{n} \, \cF_k$.
Since ${R_{n + M}}^{\! *} \in \bfR_M$, we conclude that 
$\overline{F}^*$ satisfies condition \eqref{e3.6},
showing that $\obfF$ is a C*-algebra. 

\medskip 
That $\obfF$ extends the field algebra $\bfF$ follows from the
fact that the elements of the *-algebra $\bigcup_M \bfR_M$
satisfy condition \eqref{e3.6} and are therefore contained in 
$\obfF$. But $\bfF$ is by definition the norm closure of 
this *-algebra and hence is contained in the norm closed
algebra $\obfF$. It is also clear that $\obfF$ contains
the extended observable algebra $\obfA$ which is obtained 
by restricting condition \eqref{e3.6} to operators,
which satisfy this condition for $M=0$.

\medskip 
It remains to establish the statements concerning the action
of the gauge transformations, where it suffices again to verify
them for all operators satisfying condition \eqref{e3.6}. So let
$\overline{F}$ be any such operator which satisfies this 
condition for some $M \in \NN_0$. Given $n \in \NN_0$, there is
some operator $R_n \in \bfR_M$ such that 
$\overline{F} \upharpoonright \bigoplus_{k = 0 }^n \cF_k
= R_n \upharpoonright \bigoplus_{k = 0 }^n \cF_k $. 
Since the space $\bigoplus_{k = 0 }^n \cF_k$ is invariant 
under the action of the unitaries 
$e^{-iuN}$, $u \in [0,2\pi]$, one obtains for any $m \in \ZZ$
the equality of integrals, defined in the 
strong operator topology,
$$
\int_0^{2 \pi} \! \! \! du \, e^{-ium} \, e^{iuN} \overline{F} e^{-iuN}
\upharpoonright {\textstyle \bigoplus_{k = 0 }^n} \, \cF_k =
\int_0^{2 \pi} \! \! \! du \, e^{-ium} \, e^{iuN} R_n e^{-iuN}
\upharpoonright {\textstyle  \bigoplus_{k = 0 }^n} \, \cF_k \, .
$$
It follows from the definition of $\bfR_M$ that
$\int_0^{2 \pi} \! du \, e^{-ium} \, e^{iuN} R_n e^{-iuN}
\in \bfR_M$ for $|m| \leq M$ and that the integrals
vanish if $|m| > M$. Consequently, the bounded operators
$\overline{F}_m \doteq (2 \pi)^{-1}
\int_0^{2 \pi} \! du \, e^{-ium} \, e^{iuN} \overline{F} e^{-iuN}$,
$m \in \ZZ$, satisfy condition \eqref{e3.6}. Hence they are
elements of $\obfF$ and harmonic analysis
is possible for them.  Moreover,
$\overline{F} = \sum_{m = -M}^M \overline{F}_m$,
\ie harmonic synthesis holds as well. It is also
apparent from the latter equality that the gauge
transformations act norm continuously on these
operators, so the above integrals are even defined in
the norm topology. Since these special
operators form a norm dense subset of $\obfF$
and the maps $\overline{F} \mapsto \overline{F}_m$
are norm continuous, $m \in \ZZ$, the preceding 
properties are shared by all elements of $\obfF$.

\medskip Finally, let $\overline{F} \in \obfF$ be any gauge 
invariant operator. According to the definition of 
$\obfF$ it can be approximated in norm by operators 
satisfying condition \eqref{e3.6}. Taking a mean over the
gauge group one sees that $\overline{F}$ can also be 
approximated in norm by operators satisfying this condition 
for $M = 0$, \ie by elements of $\obfA$. But 
$\obfA$ is norm closed, so $\overline{F} \in \obfA$,
completing the proof.
\qed \end{proof} 

Having clarified the properties of the extended field 
algebra $\obfF$, we will exhibt now its structure in more concrete terms. 
We will make use of the fact that $\obfF$
contains isometries which
transform as elementary tensors under the action of the gauge group. 
Let $f \in \cD(\RR^s) \subset L^2(\RR^s)$ be normalized, $\| f \|_2 = 1$,
and let $N_f \doteq a^*(f)a(f)$. We define on $\cF$ the operators
\begin{equation} \label{e3.7}
W_f \doteq a(f)^* (1 + N_f)^{-1/2} \, , \quad 
W_f^{\, *} = (1 + N_f)^{-1/2} a(f) \, .  
\end{equation}
It follows from the commutation relations of
the creation and annihilation operators 
that $W_f^{\, *} W_f = 1$. Thus $W_f$ is an isometry 
and $W_f W_f^{\, *} = E_f$ is the projection 
onto the orthogonal complement of the kernel of $a(f)$ in $\cF$. Moreover, 
\begin{equation} \label{e3.8}
e^{iuN} W_f e^{-iuN} = e^{iu} \, W_f \,, \quad  u \in [0,2 \pi] \, , 
\end{equation}
and an analogous relation holds for the adjoint operator $W_f^*$. 

\medskip
It is essential for the subsequent analysis that 
$W_f, \, W_f^* \in \obfF$. 
For the proof of this statement, 
we proceed as in Lemma~\ref{l3.1}. 
Let $L \subset L^2(\RR^s)$ be the complex ray
spanned by $f$, let $\cF(L) \subset \cF$ be the
Fock space based on $L$, and let $\bfR(L) \subset \bfR$
be the subalgebra of the resolvent algebra, which is
generated by resolvents with test functions
in $L$. The operator $(1 + N_f)^{-1/2}$ is gauge 
invariant and acts as a compact operator on $\cF(L)$, 
so the spectral projections $P_f(k)$, $k \in \NN_0$, in its 
spectral resolution 
have finite rank on $\cF(L)$. Consequently, the operators 
$(1 + N_f)^{-1/2} a(f) \, \big(\sum_{k=0}^n P_f(k)\big)$ also have 
finite rank on~$\cF(L)$. They are therefore elements of the compact ideal 
of $\bfR(L)$, $n \in \NN_0$. Moreover, these operators are 
tensors under gauge transformations 
corresponding to the fixed value $m = -1$. Since $P_f(k) \, \cF_n = 0$ if
$k > n$, this implies that for any~$n \in \NN_0$
$$
W_f^* \upharpoonright {\textstyle \bigoplus_{k = 0}^n} \, \cF_k
=  (1 + N_f)^{-1/2} a(f){\textstyle  \, \big(\sum_{k=0}^n P_f(k)\big) 
\upharpoonright \bigoplus_{k = 0}^n} \, \cF_k
\subset \bfR_1 \upharpoonright {\textstyle \bigoplus_{k = 0}^n} \, \cF_k \, .
$$
Thus $W_f^*$ satisfies condition \eqref{e3.6} for $M = 1$,
and the same argument applies to~$W_f$, proving the 
assertion. It is easy now
to establish the following proposition,
which provides the basis for the
subsequent discussions. 

\begin{proposition} \label{p3.3}
  Let $f \in \cD(\RR^s)$ be a fixed, normalized test function,
  $\| f \|_2 =1$. The 
extended field algebra $\obfF$ coincides with the C*-algebra that 
is generated by the extended algebra of observables $\obfA$
and the isometric tensor $W_f, W_f^* \in \obfF$,
defined in Eqn.~\eqref{e3.7}. 
\end{proposition}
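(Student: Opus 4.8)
The plan is to establish the two inclusions separately, the trivial one being that the C*-algebra $\mathfrak C$ generated by $\obfA$ and $W_f, W_f^*$ is contained in $\obfF$: indeed $\obfA \subset \obfF$ by Lemma~\ref{l3.2}, and it was shown in the paragraph preceding the proposition that $W_f, W_f^* \in \obfF$; since $\obfF$ is a C*-algebra it must contain $\mathfrak C$. So the substance of the proof is the reverse inclusion $\obfF \subset \mathfrak C$.

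For this I would first invoke the harmonic analysis and synthesis established in Lemma~\ref{l3.2}: every element of $\obfF$ is a norm limit of finite sums $\sum_{m=-M}^{M}\overline{F}_m$ with each $\overline{F}_m \in \obfF$ a tensor of charge $m$, \ie $\gamma_u(\overline{F}_m) = e^{ium}\overline{F}_m$. Since $\mathfrak C$ is norm closed, it suffices to show that every such tensor $\overline{F}_m \in \obfF$ lies in $\mathfrak C$. Note that here we genuinely work with finite sums, so no delicate norm-convergence issue of the type flagged in the Remark after Lemma~\ref{l3.1} arises.

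The key point is the identity $(W_f^*)^m W_f^{\,m} = 1$ for all $m \in \NN_0$, which follows by an immediate induction from $W_f^* W_f = 1$ (the innermost pair collapses at each step). Now take a tensor $\overline{F}_m \in \obfF$ of charge $m \geq 0$. Using the transformation law $\gamma_u(W_f) = e^{iu}W_f$ from Eqn.~\eqref{e3.8}, one computes $\gamma_u\big(\overline{F}_m (W_f^*)^m\big) = e^{ium}e^{-ium}\,\overline{F}_m (W_f^*)^m = \overline{F}_m (W_f^*)^m$, so this operator is gauge invariant; being moreover a product of elements of the C*-algebra $\obfF$, it lies in the fixed point algebra of $\obfF$ under the gauge action, which by Lemma~\ref{l3.2} is exactly $\obfA$. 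Multiplying on the right by $W_f^{\,m}$ and applying the identity above yields $\overline{F}_m = \big(\overline{F}_m (W_f^*)^m\big)W_f^{\,m} \in \mathfrak C$. For a tensor of charge $m < 0$ one applies this to the adjoint $\overline{F}_m^{\,*}$, which is a tensor of charge $-m > 0$, and then takes adjoints again, using that $\mathfrak C$ is a *-algebra. This covers all tensor components and finishes the proof.

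I do not expect a real obstacle here; the argument is essentially bookkeeping once Lemma~\ref{l3.2} is in hand. The only points demanding a little care are checking that $\overline{F}_m (W_f^*)^m$ really lands in $\obfA$ — which uses \emph{both} its gauge invariance and its membership in $\obfF$ (the latter because $\obfF$ is closed under products, Lemma~\ref{l3.2}) — and making sure the reduction via harmonic synthesis is to a \emph{single} tensor rather than to an infinite harmonic expansion.
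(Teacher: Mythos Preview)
Your proof is correct and follows essentially the same route as the paper: reduce to tensors via harmonic synthesis (Lemma~\ref{l3.2}), then for $m>0$ observe that $\overline{F}_m (W_f^*)^m$ is a gauge-invariant element of $\obfF$, hence lies in $\obfA$, and recover $\overline{F}_m$ using $(W_f^*)^m W_f^m = \one$. The only cosmetic difference is your treatment of $m<0$ by passing to the adjoint $\overline{F}_m^{\,*}$ (charge $-m>0$) and then taking adjoints back, whereas the paper instead multiplies on the left by $W_f^{|m|}$ to form $W_f^{|m|}\overline{F}_m \in \obfA$ and writes $\overline{F}_m = (W_f^*)^{|m|}\bigl(W_f^{|m|}\overline{F}_m\bigr)$; both variants use the same isometry identity and are interchangeable.
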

\begin{proof}
By definition, $\obfF$ is the C*-algebra generated by the
space of all bounded operators on $\cF$ that
satisfy condition \eqref{e3.6} for some $M \in \NN_0$. 
The statement of Lemma~\ref{l3.2},  
regarding harmonic analysis and synthesis,
implies that this space coincides with the span
of tensor operators $\overline{F}_m \in \obfF$, $m \in \ZZ$. 
Given any such tensor for the value $m = 0$, one has
$\overline{F}_0 \in \obfA$ according to Lemma~\ref{l3.2}. 
If $\overline{F}_m \in \obfF$ corresponds to some value $m > 0$, 
the operator
$\overline{A}_m \doteq \overline{F}_m \, W_f^{* m}  \in \obfF$ is gauge
invariant, cf.\ Eqn.~\eqref{e3.8}, so $\overline{A}_m \in \obfA$. 
But $\overline{F}_m  = \overline{F}_m \, W_f^{* m} \, W_f^m =
\overline{A}_m \, W_f^m$, where we made use of the fact that
$W_f$ is an isometry. This shows that $\overline{F}_m$
is an element of the algebra generated by $\obfA$ and $W_f, W_f^*$.
Similarly, if $\overline{F}_m \in \obfF$ corresponds to some value $m < 0$, 
then $\overline{A}_m \doteq W_f^m \, \overline{F}_m \in \obfA$, so  
$\overline{F}_m = W_f^{* m} \, W_f^m \, \overline{F}_m = W_f^{* m} \,
\overline{A}_m$ is also contained in the algebra
generated by  $\obfA$ and $W_f, W_f^*$. The statement then follows. 
\qed \end{proof}  

\medskip
The algebra of observables $\obfA$ and the field algebra $\obfF$
were concretely constructed on Fock space, which we regard as
their defining (hence faithful) representations. 
Since we will deal in the following exclusively 
with these extended algebras, we
omit the term ``extended'' from now on, speaking
simply of observables and fields. To simplify the
notation, we also omit the bar
${}^{\mbox{\bf --}}$ from the elements of these algebras.

\section{Morphisms, intertwiners and covariance}
\setcounter{equation}{0}

In this section we adjust 
the framework of Doplicher, Haag and Roberts, mentioned in the
introduction, to the non-relativistic case. 
For the convenience of the reader, 
we recall in some detail the basic notions underlying their 
approach and point out some differences with regard to the present 
situation. After a recapitulation 
of the local structure of the 
algebra of observables $\obfA$ and properties of its  
representations, we recall the concept of localized morphisms 
of $\obfA$ and explain how it is related to the field algebra $\obfF$. 
We will see that the question of invariance of the
field algebra $\obfF$ under the action of 
symmetry transformations, such as the time
evolution for given dynamics, is equivalent to
the question of covariance of the localized morphisms.
Thus the answer to this question is  
encoded in the observable algebra. This insight justifies 
our two step approach to the problem, where we first focussed  
on the observables \cite{Bu1} and now turn to the fields. 

\medskip 
As already pointed out in \cite{Bu1}, the algebra $\obfA$ has some 
local structure with regard to the underlying space $\RR^s$.   
Given any bounded region $\bO \subset \RR^s$ with open
interiour, one first defines
a corresponding algebra $\bfR(\bO) \subset \bfR$: 
it is the C*-algebra generated by all resolvents 
$R(\lambda,f)$, where $\supp f \subset \bO$, 
$\lambda \in \RR \backslash \{ 0 \}$. Its gauge invariant
subalgebra is denoted by $\bfA(\bO)$. The corresponding 
extended algebra $\obfA(\bO)$ is then defined as the set of bounded
operators on $\cF$ which satisfy the strengthened condition \eqref{e3.4},
where $\bfA$ is replaced by $\bfA(\bO)$. One has by
construction $\, \obfA(\bO_1) \subset \obfA(\bO_2)$ if $\, \bO_1 \subset \bO_2$,
and the algebras corresponding to disjoint sets commute as
a consequence of the canonical commutation relations,
$[ \obfA(\bO_1), \obfA(\bO_2)] = 0$ if   
$\bO_1 \bigcap \bO_2 = \emptyset$. 
So the assignment $\bO \mapsto \obfA(\bO)$ defines 
a local net of C*-algebras on~$\RR^s$. 
However, in contrast to the relativistic case,
the C*-inductive limit 
of this net does not coincide with the global algebra $\obfA$. 
But it is still true that the restriction of 
$\bigcup_{\sbO \subset \RR^s} \obfA(\bO)$
to any subspace of $\cF$ with limited particle number is norm dense 
in the restriction of $\obfA$ to the respective space.  

\medskip 
The basic irreducible representations of the algebra $\obfA$
are obtained by restricting it to the subspaces $\cF_n \subset \cF$
for given particle number $n \in \NN_0$. As was already mentioned,
these representations are disjoint for different values of~$n$,
\ie the corresponding states are superselected. 
We briefly indicate the proof: let $L \subset L^2(\RR^s)$
be any finite dimensional complex subspace
and let $N_L$ be the particle number 
operator on $\cF_L \subset \cF$.
The resolvents $(1 + N_L)^{-1}$ are elements 
of~$\bfA$, cf.\ the proof of Lemma~\ref{l3.1}. Now for any 
increasing sequence of subspaces $L$, exhausting $L^2(\RR^s)$,
the limit of the corresponding resolvents exists in the strong operator 
topology on $\cF$ and is given by $(1 + N)^{-1}$,
the resolvent of the number operator. It is contained in 
the closure of $\bfA$ with regard to this topology,
commutes with all elements of $\bfA$ and has different 
sharp values on the 
subspaces $\cF_n$, $n \in \NN_0$. So this limit defines a 
superselected global observable that  
distinguishes the corresponding irreducible 
representations of~$\obfA$. 

\medskip
In close analogy to the relativistic case, there exist
also in the present situation  
states in any of these representations which cannot
be discriminated from states in any other representation 
by observables which are localized in the complement~$\bO_0^\perp$ of 
a given region $\bO_0 \subset \RR^s$. In particular, 
given $n \in \NN_0$,  there exist
vectors $\bPhi_n \in \cF_n$ such that for any 
$\bO \subset \bO_0^\perp$ one has 
$$
\langle \bPhi_n, A \bPhi_n \rangle = \langle \bOmega, A 
\bOmega \rangle \, , \quad  A \in \obfA(\bO) \, .
$$
This relation agrees with the selection criterion 
of Doplicher, Haag and Roberts in relativistic quantum field
theory for states that can be interpreted as local excitations of the
vacuum. As they have shown, these  
states can be obtained by composition of the 
vacuum state with morphisms of the algebra of observables,
where a morphism is a linear, symmetric and
multiplicative map of this algebra into itself. 
Their conclusion fails, however, in the present setting  
since the vacuum representation of $\obfA$ is one-dimensional.
In contrast, all local observables in relativistic 
quantum field theory retain their quantum
nature in the vacuum representation due to fluctuations 
caused by neutral 
particle-antiparticle pairs. This feature is excluded 
from the outset in the non-relativistic setting since the 
generator $N$ of the gauge transformations is bounded from below.

\medskip 
Wheras there do not exist morphisms of $\obfA$ whose
composition with vector states in $\cF$ increases the particle number,
there can and (as we shall see) do exist morphisms $\rho$ which decrease 
this number. More concretely, for any $n \in \NN_0$ and $\bPhi_n \in \cF_n$,
there exists some vector $\bPhi_{n-1} \in \cF_{n-1}$ such that 
$$
\langle \bPhi_n, \rho(A) \, \bPhi_n \rangle =
\langle \bPhi_{n-1}, A \, \bPhi_{n -1} \rangle \, , \quad
A \in \obfA \, ,
$$
where we put $\cF_n \doteq \{ 0 \}$ if $n < 0$. 
One then has $\langle \bOmega, \rho(\one) \bOmega \rangle = 0$,
\ie $\rho(\one)$ is a projection in $\obfA$ 
which is different from $\one$ and 
has the vacuum in its kernel. 
Any such morphism defines a representation 
$\rho : \obfA \rightarrow \rho(\obfA)$
on Fock space~$\cF$. Since $\rho(\obfA)$ annihilates 
the space $\big(\one - \rho(\one)\big) \cF$, it is sensible to restrict 
this representation to $\rho(\one) \cF \subset \cF$,
on which $\rho(\obfA)$ is faithfully represented. 
Taking this into account, it is apparent 
how to extend the concept of localized morphism,  
used in the relativistic setting, to the present framework: the  
morphism $\rho$ is said to be localized in 
a given region $\bO_0 \subset \RR^s$ if it satisfies for any 
contractible region $\bO \subset \bO_0^\perp$  
$$
\rho(A) = A \, \rho(\one) = \rho(\one) \, A \, , \quad A \in \obfA(\bO) \, .
$$
One then has for any $\bPhi_n \in \rho(\one)  \cF_n$, $n \in \NN_0$, 
and $\bO \subset \bO_0^\perp$ 
$$ \langle \bPhi_n, \rho(A) \bPhi_n \rangle 
= \langle \bPhi_n, A \bPhi_n \rangle \, ,
\quad A \in \obfA(\bO) \, .
$$ 
Comparing this equality with the relation given above, 
one sees that states in different sectors cannot be
distinguished by observations in the
complement of the localization region of $\rho$, 
similarly to the relativistic case.  

\medskip 
In the subsequent analysis we will have to compare morphisms
$\rho_1, \rho_2$ of the algebra $\obfA$ which differ,  
for example, by their localization or by some symmetry transformation. 
Since observables
do not change the particle number, it is physically
meaningful to form classes of morphisms which can be transformed into
each other by operations described by elements of $\obfA$. 
We say the morphisms are 
equivalent and write $\rho_1 \simeq \rho_2$ if there exists some
partial isometry $X_{1,2} \in \obfA$ such that 
$X_{1,2} X_{1,2}^* = \rho_1(\one)$, $X_{1,2}^* X_{1,2} = \rho_2(\one)$
and $\rho_1(A) X_{1,2} = X_{1,2} \, \rho_2(A)$, $A \in \obfA$. 
Such partial isometries are called intertwining operators. 
Their existence implies that the representation $\rho_1$ of $\obfA$,
which is defined on $\rho_1(\one) \cF$, is equivalent to 
the representation $\rho_2$, which is defined on 
$\rho_2(\one) \cF$. So $\rho_1$ and $\rho_2$
describe physically indistinguishable representations of 
the observables. 

\medskip 
After this outline of concepts related to the observable
algebra and its basic representations, let us discuss now  
how they can be realized in terms of the field
algebra $\obfF$. Doplicher and 
Roberts have shown how the field algebra and the gauge group 
can be recovered in the relativistic setting from localized 
morphisms of the observables \cite{DoRo}. 
We are here in a more comfortable situation since the
algebra~$\obfF$ is already at our disposal. Yet its 
specific definition was guided by the above background
information. 

\medskip 
For the construction of localized morphisms of $\obfA$
we make use of the isometries,
defined in Eqn.~\eqref{e3.7}, which generate together with the
observables the field algebra $\obfF$.  We pick a fixed
$f \in \cD(\RR^s)$ that is normalized and consider the map \
$\rho_f : \obfA \rightarrow \rho_f(\obfA)$ \ given by
\begin{equation} \label{e4.1}
\rho_f(A) \doteq W_f \, A \, W_f^* \, , \quad A \in \obfA \, .
\end{equation}
It follows from results in the preceding section that the 
range of this map lies in~$\obfA$: \ $W_f \, A \, W_f^*$ is 
according to relation \eqref{e3.8} and 
Proposition~\ref{p3.3} a gauge invariant element 
of~$\obfF$ and thus contained in $\obfA$, cf.\ 
Lemma~\ref{l3.2}. Moreover, $\rho_f$ is linear,
symmetric, and multiplicative (since $W_f$ is an isometry).
Hence $\rho_f$ is a morphism of $\obfA$ and one has 
$\rho_f(\one) = W_f W_f^* = E_f$. 
It is also easily seen that $\rho_f$ is localized
in the region $\supp f \subset \RR^s$. Finally, 
for any two normalized elements $f_1, f_2 \in \cD(\RR^s)$ 
the corresponding morphisms $\rho_{f_1}$ and $\rho_{f_2}$ are
equivalent: they are related by the partial isometry
$X_{f_1, f_2} \doteq W_{f_1}  W_{f_2}^*$, which is a gauge invariant element of 
$\obfF$ and therefore contained in $\obfA$. 
Thus the choice of localization region of the morphisms 
is just a matter or convenience, they are 
transportable in the sense of Doplicher, Haag and Roberts. 
As a matter of fact, all morphisms satisfying the 
preceding conditions are related to the morphisms
$\rho_f$ by intertwining operators in  
the weak closure of some local subalgebra of~$\obfA$. 

\medskip
Let us turn now to the central issue of this section, namely
the question under which circumstances the action of a 
symmetry group on the observable algebra $\obfA$ can be extended to the 
field algebra $\obfF$. So let $G$ be some group that is 
represented on $\cF$ by unitary operators, $g \mapsto U(g)$;
its adjoint action, denoted by $\ad U(g)$,   
is assumed to leave the algebra of observables
invariant, $\ad U(g) \, (\obfA) = \obfA$,  $g \in G$.
Note that this relation does not fix the unitary operators 
since the algebra $\obfA$ has a non-trivial commutant 
on $\cF$ that consists of the abelian von Neumann algebra~$\bfN$ 
generated by the particle number operator $N$. (The latter 
assertion follows from the fact 
that $\bfN$ is contained in the weak closure of
$\obfA$ and that the restrictions 
$\obfA \upharpoonright \cF_n$ are irreducible, $n \in \NN_0$.)
We will restrict our attention to the cases where 
the unitaries $U(g)$, $g \in G$, commute with $N$.
This covers the space-time
translations which are of primary interest  here. 

\medskip 
The action of $G$ on $\obfA$ can be lifted to the morphisms $\rho_f$, 
putting 
\begin{equation} \label{e4.2}
{}^g \! \rho_f \doteq \ad U(g) \ \scirc \, \rho_f \, 
\scirc \, \ad U(g)^* \, ,
\quad g \in G \, ,
\end{equation}
where the circle indicates the composition of maps. 
Clearly, the maps ${}^g \! \rho_f$ are again morphisms 
of $\obfA$, $g \in G$. Moreover, one has 
${}^{g_1} ({}^{g_2} \! \rho_f) = {}^{g_1 g_2} \! \rho_f$ for
$g_1, g_2 \in G$.
Note that we do not require that the transformed morphisms   
are also localized, which would be too restrictive 
an assumption in the present non-relativistic setting.  In analogy to 
Doplicher, Haag and Roberts, we say that the
morphism $\rho_f$ transforms
covariantly under the action of $G$ if 
${}^g \! \rho_f \simeq \rho_f$ for all $g \in G$.
The following lemma shows that the problem of stability
of the field algebra $\obfF$ with regard to the action of the group 
$G$ is encoded in properties of the observables.

\begin{lemma} \label{l4.1}
Let $G$ be a group and let $U$ be a unitary representation 
of $G$ on~$\cF$ which commutes with the particle number
operator $N$ and whose adjoint action leaves the observable algebra
$\obfA$ invariant, $\ad U(g)(\obfA)  = \obfA$, 
$g \in G$. The following statements are equivalent. 

\medskip
\noindent (i) \ $\rho_f$ is covariant with regard
to the action of $G$,  \ ${}^g \! \rho_f \simeq \rho_f$, \ $g \in G$.  

\medskip
\noindent (ii) There exists a unitary representation 
$g \mapsto Z_{f,N}(g)$ 
with values in the 
von Neumann algebra~$\bfN$, generated by $N$, such that \
$W_f \ad U(g)(W_f^*)  \, Z_{f,N}(g)^* \in \obfA$, \ $g \in G$.

\medskip
\noindent (iii)  There exists a unitary representation 
$g \mapsto U_N(g)$ with values in $\bfN$ such that 
the representation $g \mapsto V(g) \doteq U(g) U_{N}(g)$ satisfies 
$\ad V(g)(\obfF) = \obfF$, \mbox{$g \in G$}.
Thus the representation $V$ of $G$ extends the adjoint action of $U$ 
from the observable algebra to the field algebra.
\end{lemma}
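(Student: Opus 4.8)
The plan is to organise everything around the \emph{canonical field intertwiner} $Y_g \doteq W_f \, \ad U(g)(W_f^*)$. A short computation, using $W_f^* W_f = \one$ and the identity ${}^g\!\rho_f(A) = \ad U(g)(W_f)\, A\, \ad U(g)(W_f^*)$, shows that $Y_g$ satisfies $\rho_f(A)\, Y_g = Y_g\, {}^g\!\rho_f(A)$ for all $A \in \obfA$, with $Y_g Y_g^* = W_f W_f^* = \rho_f(\one)$ and $Y_g^* Y_g = {}^g\!\rho_f(\one)$. Thus $Y_g$ is already a partial isometry implementing ${}^g\!\rho_f \simeq \rho_f$ with the correct source and range projections; its only defect is that a priori it sits in $B(\cF)$ rather than in $\obfA$. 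I also note that $E_f = W_f W_f^* \in \obfF$ is gauge invariant, hence $E_f \in \obfA$ by Lemma~\ref{l3.2}, and therefore ${}^g\!\rho_f(\one) = \ad U(g)(E_f) \in \obfA$ for every $g$. With these preliminaries, $(\mathrm{ii}) \Rightarrow (\mathrm{i})$ is immediate: put $X_g \doteq \big(Y_g \, Z_{f,N}(g)^*\big)^* = Z_{f,N}(g)\, \ad U(g)(W_f)\, W_f^*$, which lies in $\obfA$ by hypothesis. Since $Z_{f,N}(g) \in \bfN = \obfA'$ commutes with every element of $\obfA$ and with $Y_g$ (the latter preserving each $\cF_n$), the intertwining relation and the projection identities for $Y_g$ pass verbatim to $X_g$, so $X_g \in \obfA$ is an intertwiner and ${}^g\!\rho_f \simeq \rho_f$.

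For $(\mathrm{i}) \Rightarrow (\mathrm{ii})$ — the delicate direction — take an intertwiner $X_g \in \obfA$ for ${}^g\!\rho_f \simeq \rho_f$. Then $X_g^* Y_g^*$ intertwines $\rho_f$ with itself and therefore lies in $\rho_f(\obfA)' \cap E_f B(\cF) E_f$; transporting through the unitary $W_f \colon \cF \to E_f \cF$ and invoking the fact, recalled before the lemma, that $\obfA' = \bfN$ on $\cF$, one obtains a unitary $Z_{f,N}(g) \in \bfN$ with $Y_g \, Z_{f,N}(g)^* \in \obfA$, for each single $g$. What remains is to arrange that $g \mapsto Z_{f,N}(g)$ be a \emph{representation}. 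From $Y_{g_1 g_2} = Y_{g_1}\, \ad U(g_1)(Y_{g_2})$, together with $\ad U(g_1)(\obfA) = \obfA$, the fact that $\ad U(g_1)$ fixes $\bfN$ pointwise, and that $Y_{g_1}$ commutes with $\bfN$, one deduces $Z_{f,N}(g_1) Z_{f,N}(g_2)\, Y_{g_1 g_2} \in \obfA$, to be compared with $Z_{f,N}(g_1 g_2)\, Y_{g_1 g_2} \in \obfA$. Here the rigidity of the inverse-limit description of $\obfA$ is the crucial input: if a unitary $(z_n)_n \in \bfN$ has the property that $(z_n) Y_g \in \obfA$ whenever $Y_g$ does, the coherence relations $\kappa_n$ of \eqref{e3.3} force $z_n = z_{n-1}$ for all $n \ge 2$, so the two products above can differ only by a single phase on $\bigoplus_{n \ge 1} \cF_n$ and arbitrarily on $\cF_0$. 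Since $Y_g$ annihilates $\cF_0$, one is free to normalise $Z_{f,N}(g)\!\upharpoonright\!\cF_0 = \one$ and $Z_{f,N}(g)\!\upharpoonright\!\cF_1 = \one$ for all $g$; the displayed comparison then gives $Z_{f,N}(g_1) Z_{f,N}(g_2) = Z_{f,N}(g_1 g_2)$ on all of $\cF$, as required. I expect this last step to be the main obstacle — one must pin down exactly the residual freedom in the choice of the intertwiners, match it against the freedom permitted by the relations $\kappa_n$, and check that the normalisation really kills the arising cocycle.

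Finally $(\mathrm{ii}) \Leftrightarrow (\mathrm{iii})$ is bridged by the shift behaviour of $\bfN$ on $W_f$: because $W_f$ raises the particle number by one, conjugation of $W_f$ by a unitary $U_N(g) \in \bfN$ yields $U_N(g)\, W_f\, U_N(g)^* = \mu_g\, W_f$ with $\mu_g \in \bfN$ unitary, and every unitary representation $g \mapsto \mu_g$ of $G$ in $\bfN$ arises this way from a unitary representation $g \mapsto U_N(g)$ in $\bfN$ — let $U_N(g)$ act on $\cF_n$ as the product of the $\cF_0, \dots, \cF_n$-components of $\mu_g$ — and conversely. Granting (iii), write $V(g) = U(g) U_N(g)$, so that $\ad V(g)(W_f) = \mu_g\, \ad U(g)(W_f)$ with $\mu_g$ the associated representation; then $\ad V(g)(W_f)\, W_f^*$ is a gauge invariant element of $\obfF$, hence lies in $\obfA$ by Lemma~\ref{l3.2}, and its adjoint is the operator of (ii) with $Z_{f,N}(g) = \mu_g$. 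Conversely, given (ii), let $g \mapsto U_N(g)$ be the representation in $\bfN$ associated to $\mu_g \doteq Z_{f,N}(g)$ and set $V(g) \doteq U(g) U_N(g)$. Then $\ad V(g)$ agrees with $\ad U(g)$ on $\obfA$ since $U_N(g) \in \bfN = \obfA'$, while $\ad V(g)(W_f) = Z_{f,N}(g)\, \ad U(g)(W_f) = \big(Z_{f,N}(g)\, \ad U(g)(W_f)\, W_f^*\big) W_f$ lies in $\obfF$, the bracket being the adjoint of the operator in (ii) and hence an element of $\obfA$; similarly $\ad V(g)(W_f^*) \in \obfF$. Since $\obfA$ together with $W_f, W_f^*$ generate $\obfF$ by Proposition~\ref{p3.3}, it follows that $\ad V(g)(\obfF) \subseteq \obfF$, with equality on applying the same argument to $g^{-1}$; and $V$ is a representation because $U$ and $U_N$ commute, both commuting with $N$. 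This completes the programme.
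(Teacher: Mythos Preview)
Your proof is correct and follows the same overall architecture as the paper's: both arguments are organised around the canonical intertwiner $Y_g = W_f\,\ad U(g)(W_f^*)$, and the implications $(\mathrm{ii})\Rightarrow(\mathrm{i})$, $(\mathrm{ii})\Leftrightarrow(\mathrm{iii})$ proceed along essentially identical lines (the shift construction $U_N(g)\!\upharpoonright\!\cF_n = \prod_{k\le n}\overline{\zeta_{f,k}}(g)$ is exactly the paper's).

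The genuine difference is in $(\mathrm{i})\Rightarrow(\mathrm{ii})$, specifically the step where the $2$-cocycle obstruction to $g\mapsto Z_{f,N}(g)$ being a representation is shown to be trivial. The paper conjugates the obstruction by $W_f^*,W_f$ to land in $\obfA$, observes that the resulting operator is spatially translation invariant, and then invokes the separate structural fact that $\obfA$ contains no such operators apart from scalars (a consequence of locality and the asymptotic vacuum structure). You instead compare the two elements $Z_{f,N}(g_1g_2)^*Y_{g_1g_2}$ and $Z_{f,N}(g_1)^*Z_{f,N}(g_2)^*Y_{g_1g_2}$ of $\obfA$ and use the coherence maps $\kappa_n$ directly: since $\kappa_n$ is $\CC$-linear and $(Y_g Z^*)\!\upharpoonright\!\cF_{n-1}\neq 0$ for $n\ge 2$, the ratio $z\in\bfN$ must satisfy $z_n=z_{n-1}$ for $n\ge 2$, whence it is a single phase on $\bigoplus_{n\ge 1}\cF_n$, absorbed by the normalisation $\zeta_{f,1}(g)=1$. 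Your route is more self-contained --- it uses only the inverse-limit description of $\obfA$ already in hand and avoids appealing to the spatial-translation lemma --- while the paper's route makes the physical origin of the rigidity (cluster decomposition) more visible. Two minor points of presentation: your phrase ``$(z_n)Y_g\in\obfA$ whenever $Y_g$ does'' is infelicitous since $Y_g\notin\obfA$; what you mean and use is that $A\in\obfA$ and $zA\in\obfA$ with $A\!\upharpoonright\!\cF_{n-1}\neq 0$ force $z_n=z_{n-1}$. And you should say explicitly that the normalisation $\zeta_{f,1}(g)=1$ is available because the intertwiner $X_g\in\obfA$ is only determined up to a global phase, which shifts all $\zeta_{f,n}(g)$, $n\ge 1$, by a common factor.
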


\vspace*{-2mm}
\begin{proof}
(i) $\rightarrow$ (ii): \ Let $Y_f(g) \doteq W_f \, \ad U(g)(W_f^*)$, 
$g \in G$. 
Then 
$$ Y_f(g) Y_f(g)^* = E_f = \rho_f(\one) \, , \quad
Y_f(g)^* Y_f(g) = \ad U(g)(E_f) = {}^g \! \rho_f(\one)
$$ 
and, $A \in \obfA$,   
$$
\rho(A) \, Y_f(g) = 
W_f \underbrace{\ad U(g)(W_f^*) \ad U(g)(W_f)}_1A  \, 
\ad U(g)(W_f^*) = 
Y_f(g) \, {}^g \! \rho_f(A) \, .
$$ 
Thus $Y_f(g)$ intertwines the morphisms $\rho_f$ and ${}^g \! \rho_f$.
Because of the ambiguities involved in the choice of the representation
$U$ of $G$, these operators may not be contained in $\obfA$,
however.
To deal with this problem, we proceed from the intertwiners $X_f(g) \in \obfA$ 
between $\rho_f$ and ${}^g \! \rho_f$, which exist according to (i),
and compare them with the operators $Y_f(g)$, defined above. 
Let $Z_f(g) \doteq Y_f(g) X_f(g)^*$, $g \in G$.
These operators satisfy 
$$
\rho_f(A) \, Z_f(g) = Y_f(g) \, {}^g \! \rho_f(A) \, X_f(g)^* =
Z_f(g) \, \rho_f(A) \, , \quad A \in \obfA \, ,
$$
and one easily verifies that 
$Z_f(g) Z_f(g)^* = Z_f(g)^* Z_f(g) = \rho_f(\one)$. 
Since the algebra $\rho_f(\obfA)$ acts irreducibly on the subspaces
$\rho_f(\one) \, \cF_n$, it follows from the preceding
relations that the restrictions of $Z_f(g)$ to these subspaces
are given by phase factors $\zeta_{f, n}(g) \in \TT$, $n \in \NN$.
Choosing any 
$\zeta_{f,0}(g) \in \TT$, which will be fixed
below, we define corresponding unitary operators
$Z_{f, N}(g)$ on $\cF$ by the equations 
$Z_{f, N}(g) \upharpoonright \cF_n \doteq \zeta_{f,n}(g) 1 \upharpoonright \cF_n$,
$n \in \NN_0$. These operators
are elements of $\bfN$ and one has 
$Z_f(g) = Z_{f, N}(g) \, \rho_f(\one) $, \ $g \in G$. 
Multiplying this 
relation from the right by $X_f(g)$ and taking into account that 
$\rho_f(\one) X_f(g) = X_f(g)$, we arrive at the equality  
$Y_f(g) = Z_{f,N}(g) X_f(g)$. 

\medskip
It follows from this equality that 
$g \mapsto Z_{f,N}(g)$ can be lifted to a true
representation of $G$ by fixing the values of 
$\zeta_{f,0}(g)$ and multiplying the resulting 
operators with $\overline{\zeta_{f,1}}(g)$;
the compensating factor $\zeta_{f,1}(g)$
can be absorbed in the observables $X_f(g)$.
To verify this we make use of the fact
that $g \mapsto U(g)$ defines a representation of $G$. The 
above equality implies that for $g_1,g_2 \in G$
\begin{align*}
\rho_f(\one) & = Y_f(g_1) Y_f(g_2) Y_f(g_1 g_2)^* \\ & = 
X_f(g_1) X_f(g_2) X_f(g_1 g_2)^* \, Z_{f,N}(g_1) Z_{f,N}(g_2) Z_{f,N}(g_1 g_2)^* \, .
\end{align*}
Since the operators $Z_{f,N}(g)$ are unitary, we can proceed to 
$$
\rho_f(\one) \, Z_{f,N}(g_1 g_2) Z_{f,N}(g_1)^* Z_{f,N}(g_2)^*
= X_f(g_1) X_f(g_2) X_f(g_1 g_2)^* \, .
$$
Multiplying this equality from the left by $W_f^*$,  from the right
by $W_f$, and taking into account that 
$W_f^* N W_f = N  + \one$, we arrive at
$$
 Z_{f,N+1}(g_1 g_2) \, Z_{f,N+1}(g_1)^* \, Z_{f,N+1}(g_2)^*
= W_f^* X_f(g_1) X_f(g_2) X_f(g_1 g_2)^* \, W_f \, , 
$$
where the operators $Z_{f,N+1}(g)$ are given by 
\mbox{$Z_{f,N+1}(g) \upharpoonright \cF_n = 
\zeta_{f, n + 1}(g) \one  \upharpoonright \cF_n$}, $n \in \NN_0$. 
The operator on the right hand side of the above 
equality is a gauge invariant element of $\obfF$ and
hence contained in $\obfA$. The operator on the 
left hand side commutes with spatial translations
and $\obfA$ does not contain such operators, apart
from mulitples of the identity (this is a consequence of the
locality properties of the observables and the 
spatially asymptotic vacuum structure of the states in $\cF$). 
It follows that 
$$ 
Z_{f,N+1}(g_1 g_2) Z_{f,N+1}(g_1)^* Z_{f,N+1}(g_2)^* = \chi_f(g_1, g_2) \, \one \, ,
$$
where $\chi_f(g_1, g_2) \in \TT$. We first apply this equality to 
the vacuum vector $\bOmega$, giving     
$\chi_f(g_1, g_2) = \zeta_{f,1}(g_1 g_2) \, \overline{\zeta_{f,1}}(g_1) 
\, \overline{\zeta_{f,1}}(g_2)$. Then we apply the equality   
to all other subspace $\cF_n$, showing that 
$g \mapsto \overline{\zeta_{f,1}}(g) \zeta_{f, n+1}(g)$
are one-dimensional representations of $G$, $n \in \NN$. 
Hence, putting $\zeta_{f, 0}(g) \doteq \zeta_{f,1}(g)$, 
we conclude that $g \mapsto \overline{\zeta_{f,1}}(g) \, Z_{f,N}(g)$
defines a unitary representation of $G$ on $\cF$. Since 
$Y_f(g) \big( \overline{\zeta_{f,1}}(g) \, Z_{f,N}(g) \big)^* 
= \zeta_{f,1}(g) \, X_f(g) \in \obfA$, this proves statement (ii).

\medskip 
(ii) $\rightarrow$ (iii): 
Let $g \mapsto Z_{f,N}(g)$ be the unitary representation of $G$ with 
values in~$\bfN$, given in (ii). Making use of its eigenvalues  
$\zeta_{f,n}(g)$ on $\cF_n$,  
we define operators $U_N (g)$, putting 
$$
U_N(g) \upharpoonright \cF_n \doteq 
\overline{\zeta_{f,n}}(g) \,  \overline{\zeta_{f,n-1}}(g) \cdots 
\overline{\zeta_{f,0}}(g) 
 \upharpoonright \cF_n \, , \quad n \in \NN_0 \, .
$$
Since $g \mapsto \overline{\zeta_{f,n}}(g) \in \TT$, $n \in \NN_0$, 
are one-dimensional representations
of $G$, it follows that $g \mapsto U_N(g)$ is  
a unitary representation of $G$ on $\cF$.
Moreover, by construction 
$$
U_N(g) W_f^* \, U_N(g)^* =  W_f^* \, U_{N-1}(g)  U_N(g)^*   
= W_f^* \, Z_{f,N}(g)^* \, .
$$  
Applying to this equality the adjoint action of
$U(g)$ and multiplying it then from the left by $W_f$ leads to  
\begin{align*}
W_f V(g) W_f^* V(g)^{-1} & = W_f U(g) U_{f,N}(g) W_f^* U_N(g)^* U(g)^* \\
& = W_f U(g)  W_f^* U(g)^* \, Z_{f,N}(g)^* \doteq 
X_f(g) \in \obfA \, .
\end{align*}
Hence $\ad V(g)(W_f^*) = W_f^* X_f(g) \in \obfF$,  
and this inclusion holds also for the adjoint operators.
Since $\obfF$ is generated by $\obfA$
and $W_f, W_f^*$ and $\obfA$ is stable under the adjoint action of  
$g \mapsto V(g)$, $g \in G$, statement (iii) follows.

\medskip 
\ (iii) $\rightarrow$ (i): \ By assumption,   
$X_f(g) \doteq W_f \, \ad V(g)(W_f^*)$ are elements 
of the field algebra $\obfF$. So,
being gauge invariant, they are contained in $\obfA$, $g \in G$. 
Since the adjoint actions 
of $V(g)$ and $U(g)$ coincide on the observable algebra,
it is also clear that the partial isometries 
$X_f(g) \in \obfA$ intertwine 
$\rho_f$ and ${}^g \! \rho_f$. Moreover, they have the correct
initial and final projections, $g \in G$. 
Hence ${}^g \rho_f \simeq \rho_f$. This completes
the proof of the lemma. 
\qed \end{proof} 

Thus the answer to the question of  
whether the adjoint action of the representation $g \mapsto U(g)$, 
leaving the observable algebra $\obfA$ invariant, can be extended to the
field algebra $\obfF$ is encoded in the operators
$Y_f(g) \doteq W_f \ad U(g)(W_f^*)$, $g \in G$. 
It is affirmative if and only if 
these operators are contained in the observable algebra $\obfA$, 
possibly multiplied by some unitary representation
of $G$ which is contained in the 
von Neumann algebra $\bfN$, generated by $N$. This suggests to treat 
concrete problems according to the following scheme: 
\begin{enumerate}
\item[(1)] Check whether the restrictions of the 
operators $Y_f(g)$ to the $n$-particle subspaces, 
$Y_{f,n}(g) \doteq Y_f(g) \upharpoonright \cF_n$, satisfy 
the condition 
$Y_{f,n}(g) \in \fK_n$, $n \in \NN_0$, cf.\ relation 
\eqref{e3.1} and the remarks thereafter. 

\vspace*{0.5mm}
\item[(2)] If so, check whether the operators $Y_{f,n}(g)$ satisfy
the (generalized) 
coherence condition $\kappa_n\big(Y_{f,n}(g)\big) = \xi_n(g) Y_{f,n-1}(g)$,
where $g \mapsto \xi_n(g) \in \TT$ 
are representations (characters) of $G$, $n \in \NN_0$, cf.\ 
relation~\eqref{e3.3} and the preceding lemma.   
\end{enumerate}
If both conditions are satisfied, the isomorphism
given in relation \eqref{e3.5} and the preceding lemma
imply that there exists an extension of the 
adjoint action of $g \mapsto U(g)$ on $\obfA$
to the field algebra $\obfF$, $g \in G$.

\section{Symmetries and dynamics}
\setcounter{equation}{0}

Having explained the general framwork, we will establish now the
covariance of morphisms for a large family of symmetry
transformations and dynamics. The invariance of the field algebra
under these transformations then follows. Our main results are
presented in this section and we will also give proofs here 
in the non-interacting case. Since the analysis is 
more laborious in the
presence of interaction caused by pair potentials, we will outline 
here only the steps involved in the proof and present the technical 
details in the appendix.

\medskip
Turning to the non-interacting case, 
let $g \mapsto U_1(g)$ be a unitary representation of some
group $G$ on $\cF_1$. Any such representation can be promoted 
to a unitary representations
$g \mapsto U_\otimes(g)$ on~$\cF$ by forming $n$-fold tensor
products of $U_1(g)$ on the subspaces $\cF_n$, $n \in \NN$;
on $\cF_0$ one chooses the trivial representation.
Examples of physical interest are the spatial translations,
rotations, and the time translations induced by arbitrary non-interacting 
Hamiltonians inclusive of external potentials.   

\medskip
The operators $U_\otimes(g)$ commute with the particle number operator~$N$ 
and do not mix tensor factors in Fock space $\cF$. 
The algebra of observables is therefore stable under the adjoint action
of the unitaries, $\ad U_\otimes(g)(\obfA) = \obfA$, $g \in G$. 
This fact follows from arguments given in \cite{Bu1} which we briefly recall 
here.  
Let  $U_{\otimes,n}(g) \doteq U_\otimes(g) \upharpoonright \cF_n$ and let
$C_1, C_2, \dots, C_m \in \fC_1$ be compact operators on $\cF_1$. Then 
\begin{align*}
& U_{\otimes,n}(g) \,( C_1 \otimes_s \cdots \otimes_s C_m  
\otimes_s \underbrace{1 \otimes_s \cdots \otimes_s 1}_{n-m} ) \, U_{\otimes,n}(g)^* \\
& = 
(U_{1}(g) C_1 U_{1}(g)^*) \otimes_s \cdots \otimes_s 
(U_{1}(g) C_m U_{1}(g)^*) 
\otimes_s \underbrace{1 \otimes_s \cdots \otimes_s 1}_{n-m} \, .
\end{align*}
Since compact operators are mapped into compact operators by the 
adjoint action of any unitary operator, relations
\eqref{e3.1} to \eqref{e3.3} imply that the adjoint action of 
$U_\otimes(g)$ maps the inverse limit $\bfK$ of the inverse system 
$\{\fK_n, \kappa_n \}_{n \in \NN_0}$ into itself. The isomorphism 
given in \eqref{e3.5} then implies that $\obfA$ is stable
under the adjoint action of $U_\otimes(g)$, $g \in G$.
It also follows from the preceding equality that if $G$ is a 
topological group and $g \mapsto U_1(g)$ is continuous in the 
strong operator topology on $\cF_1$, then 
$g \mapsto \ad U_{\otimes(g),n}(K_n)$ is norm continuous
for any $K_n \in \fK_n$, $n \in \NN_0$. Thus the restrictions 
of the functions $g \mapsto \ad U_{\otimes}(g)(A)$
to the subspaces 
${\textstyle \bigoplus_{k = 0}^n} \cF_k$ are norm continuous for 
any $A \in \obfA$, $n \in \NN_0$.

\medskip 
In order to see that the action of $G$ 
by the unitaries $U_\otimes$ can be extended to the field algebra, 
we make use of the following lemma which will also be used
in the appendix.
\begin{lemma} \label{l5.1}
  Let $f_1, f_2 \in \cD(\RR^s)$ be normalized, let
  $n \in \NN_0$, and let $P_n$ be the projection onto the
  subspace $\bigoplus_{k = 0}^n f_k \in \cF$. There is a
  constant $c_n$ which does not depend on the choice of $f_1, f_2$
  such that
  $$ \| (W_{f_1}^* - W_{f_2}^*) P_n \| \leq c_n \| f_1 - f_2 \|_2 \, . $$
  Thus any sequence
  $\{ f_k \in \cD(\RR^s) \}_{k \in \NN}$ of normalized functions,  which converges
  strongly in $L^2(\RR^s)$ to  $f_{\infty}$,   
  determines a Cauchy sequence  
  $k \mapsto W_{f_k}^* P_n$ in the norm topology,
  $n \in \NN_0$. The limit of $k \mapsto W_{f_k}^*$ exists on
  $\cF$ in the strong operator topology and determines an isometric
  tensor $W_{f_{\infty}}^*$ in the field algebra $\obfF$. 
\end{lemma}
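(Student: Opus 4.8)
The plan is to reduce everything to the single inequality $\| (W_{f_1}^* - W_{f_2}^*) P_n \| \leq c_n \| f_1 - f_2 \|_2$; once this is in hand, the two subsequent assertions follow by soft arguments. First I would establish the inequality by a direct computation on $\cF$. Write $W_f^* = (1+N_f)^{-1/2} a(f)$ and note that on $\cF_k$ the operator $(1+N_f)^{-1/2}$ has norm $\leq 1$ for $k \geq 1$ and is harmless on $\cF_0$ since $a(f) \cF_0 = 0$. The key estimate is a bound on $a(f_1) - a(f_2) = a(f_1 - f_2)$: restricted to $\cF_k$, one has $\| a(h) \upharpoonright \cF_k \| \leq \sqrt{k}\, \| h \|_2$ from the standard number-operator bound $\| a(h) (N+1)^{-1/2} \| \leq \| h \|_2$. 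Thus the difference $(W_{f_1}^* - W_{f_2}^*) P_n$ splits into a term where the annihilation operators differ, contributing $O(\sqrt{n}\,\| f_1 - f_2\|_2)$, and a term where the factors $(1+N_{f_1})^{-1/2}$ and $(1+N_{f_2})^{-1/2}$ differ. For the latter I would use that $N_{f_1} - N_{f_2} = a^*(f_1)a(f_1) - a^*(f_2)a(f_2)$, which on $\bigoplus_{k=0}^n \cF_k$ has norm $\leq 2n\,\| f_1 - f_2 \|_2$ (again by the number bound), together with the operator-monotone/Lipschitz behaviour of $x \mapsto (1+x)^{-1/2}$ on the positive spectrum, controlled via the integral representation $(1+A)^{-1/2} - (1+B)^{-1/2} = \pi^{-1}\int_0^\infty \! dt\, t^{-1/2} \big( (1+t+A)^{-1}(A-B)(1+t+B)^{-1} \big)$. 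This yields a bound $c_n\,\| f_1 - f_2 \|_2$ with $c_n$ polynomial in $n$ and, crucially, independent of the particular normalized functions $f_1, f_2$. This Lipschitz step on the resolvent difference is where I expect the main technical friction: one must keep the factor in front of $(A-B)$ integrable in $t$ and confirm that the estimate uses only $\| (N_{f_1}-N_{f_2}) P_n \|$, not finer spectral data.

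Granting the inequality, the Cauchy-sequence claim is immediate: if $f_k \to f_\infty$ strongly in $L^2$, then for each fixed $n$ the sequence $k \mapsto W_{f_k}^* P_n$ satisfies $\| (W_{f_k}^* - W_{f_l}^*) P_n \| \leq c_n \| f_k - f_l \|_2 \to 0$, hence converges in norm on $\bigoplus_{k=0}^n \cF_k$. Since the projections $P_n$ increase to $1$ strongly and the operators $W_{f_k}^*$ are uniformly bounded (each is a coisometry, norm $1$), a standard $\varepsilon/3$ argument shows that $k \mapsto W_{f_k}^*$ converges on all of $\cF$ in the strong operator topology; call the limit $W_{f_\infty}^*$. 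Its adjoint $W_{f_k} \to W_{f_\infty}$ strongly as well, because on the fixed finite-particle space the analogous norm estimate for $W_{f_k} - W_{f_l}$ holds (the same computation applies to $W_f = a(f)^*(1+N_f)^{-1/2}$, noting $W_f \cF_n \subset \cF_{n+1}$). From $W_{f_k}^* W_{f_k} = 1$ for all $k$ and strong convergence of both factors one gets $W_{f_\infty}^* W_{f_\infty} = 1$, so $W_{f_\infty}$ is an isometry; and passing to the strong limit in the gauge relation \eqref{e3.8} gives $e^{iuN} W_{f_\infty}^* e^{-iuN} = e^{-iu} W_{f_\infty}^*$, so $W_{f_\infty}^*$ is an elementary tensor for the value $m=-1$.

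It remains to see that $W_{f_\infty}^* \in \obfF$, \ie\ that it satisfies condition \eqref{e3.6}. For each $k$ the operator $W_{f_k}^*$ satisfies this condition for $M=1$, as shown in the text preceding Proposition~\ref{p3.3}; that is, for every $n$ there is $R_{k,n} \in \bfR_1$ with $W_{f_k}^* \upharpoonright \bigoplus_{l=0}^n \cF_l = R_{k,n} \upharpoonright \bigoplus_{l=0}^n \cF_l$. Fixing $n$ and letting $k \to \infty$, the left-hand sides converge in norm by the Cauchy estimate, so the sequence $R_{k,n} \upharpoonright \bigoplus_{l=0}^n \cF_l$ is norm-Cauchy in $\bfR_1 \upharpoonright \bigoplus_{l=0}^n \cF_l$; since $\bfR_1$ is a norm-closed subspace of $\bfR$ and restriction to the finite-particle space is a quotient map onto a Banach space, the limit agrees with the restriction of some $R_{\infty,n} \in \bfR_1$. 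Hence $W_{f_\infty}^* \upharpoonright \bigoplus_{l=0}^n \cF_l \in \bfR_1 \upharpoonright \bigoplus_{l=0}^n \cF_l$ for every $n$, which is exactly condition \eqref{e3.6} with $M=1$; therefore $W_{f_\infty}^* \in \obfF$, and likewise $W_{f_\infty} = (W_{f_\infty}^*)^* \in \obfF$ since $\obfF$ is a $*$-algebra by Lemma~\ref{l3.2}. This completes the argument.
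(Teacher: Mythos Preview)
Your argument for the Lipschitz inequality is essentially the paper's: both split $W_{f_1}^* - W_{f_2}^*$ into a term with $a(f_1)-a(f_2)=a(f_1-f_2)$ and a term with the difference of the factors $(1+N_f)^{-1/2}$, and both reduce the latter to a bound on $N_{f_1}-N_{f_2}$ restricted to finite particle number. You use the integral representation of $x\mapsto(1+x)^{-1/2}$ where the paper invokes analyticity of $(1+z)^{-1/2}$ on $\mathrm{Re}\,z>-1$; either route yields the same estimate. The strong-convergence and isometric-tensor claims then follow identically.

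The gap is in your final step, where you argue $W_{f_\infty}^*\in\obfF$ directly via condition~\eqref{e3.6}. You assert that the restriction $\bfR_1 \to \bfR_1\upharpoonright\bigoplus_{l=0}^n\cF_l$ is ``a quotient map onto a Banach space'' and hence that a norm-Cauchy sequence in the image has a limit still in the image. But $\bfR_1$ is only a closed \emph{subspace} of $\bfR$, not a C*-algebra, so there is no automatic reason its image under restriction is norm-closed: the induced injection from $\bfR_1/\ker$ into $\cB\big(\bigoplus_{l=0}^n\cF_l,\cF\big)$ is bounded but need not be bounded below, and nothing in the paper establishes that it is. Without this, you cannot conclude that the norm limit $W_{f_\infty}^*P_n$ lies in $\bfR_1\upharpoonright\bigoplus_{l=0}^n\cF_l$.

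The paper sidesteps this by factoring $W_{f_k}^* = W_f^*\,(W_f W_{f_k}^*)$ with a fixed $f\in\cD(\RR^s)$. The gauge-invariant operators $W_f W_{f_k}^*$ lie in $\obfA$, and their restrictions to $\cF_l$ give coherent sequences in the norm-closed algebras $\fK_l$. Norm limits preserve both membership in $\fK_l$ and the coherence condition (since each $\kappa_l$ is a homomorphism, hence continuous), so the isomorphism~\eqref{e3.5} yields $W_fW_{f_\infty}^*\in\obfA$, whence $W_{f_\infty}^*=W_f^*\,(W_fW_{f_\infty}^*)\in\obfF$. The closedness of the $\fK_l$ and the inverse-limit description of $\obfA$ supply precisely the completeness that your quotient-map claim does not.
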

\begin{proof}
One has 
$\| \big( a(f_1) - a(f_2) \big) P_n \|
= \| a(f_1 - f_2) P_n \| \leq n^{1/2} \, \|f_1 - f_2 \|_2 $ \,
and 
$$ \| \big( a^*(f_1) a(f_1) - a^*(f_2) a(f_2) \big) P_n  \|
\leq 8n \, \|f_1 - f_2 \|_2 \, .
$$
Since $a(f) P_n = P_{n-1} a(f)$ and $a^*(f) a(f)$ 
commutes with $P_n$, it follows by a
routine computation, taking into account that
$z \mapsto (1 + z)^{-1/2}$ is analytic for
$\text{Re} z > -1$, that there is some constant
$c_n'$ such that 
\begin{align*}
  & \|(W_{f_1}^* - W_{f_2}^*) P_n \| \\
& \leq 
2 \, \|\big(a(f_1) - a(f_2)\big) P_n\| +
  c_n' \,
  \| \big(a^*(f_1)a(f_1) - a^*(f_2)a(f_2)\big) P_{n-1} \| \\
& \leq (2 n^{1/2} + 8 n c_n') \, \| f_1 - f_2 \|_2 \, , 
\end{align*}
as stated. It is then clear that $k \mapsto W_{f_k}^* P_n$, $n \in \NN_0$,
are Cauchy sequences in the norm topology  
for any given strongly convergent sequence
$\{ f_k \in \cD(\RR^s) \}_{k \in \NN}$  in $L^2(\RR^s)$. 
This implies that the sequence $k \mapsto W_{f_k}^*$,
being bounded, converges on $\cF$ in the strong operator
topology to an isometric tensor $W_{f_\infty}^*$. 
In order to see that this
limit is an element of $\obfF$, we proceed to
$W_{f_k}^* = W_f^* \, W_f W_{f_k}^*$, where $f \in \cD(\RR^s)$
is normalized. The limit of the sequence of observables 
$k \mapsto W_f W_{f_k}^*$ 
defines an element of $\obfA$. For, making use of  
the isomorphism~\eqref{e3.5}, the 
sequences $k \mapsto W_f W_{f_k}^* \upharpoonright \cF_l$
determine sequences $k \mapsto K_{k,l} \in \fK_l$
which comply for fixed $k$ with the coherence condition
$\kappa_l(K_{k,l}) = K_{k, l-1}$, $l \in \NN_0$. 
As was shown in the preceding step, the limit
$\lim_{k \rightarrow \infty} K_{k,l}$ exists in the norm topology 
and is therefore   
an element $K_{\infty, l}$ of the norm-closed space $\fK_l$. Moroever,
since the inverse maps~$\kappa_l$ (being homomorphisms) are norm continuous,
the sequence $\{ K_{\infty, l} \}_{l \in \NN_0}$
complies with the coherence condition 
$\kappa_l(K_{\infty, l})  = K_{\infty, l-1}$, $l \in \NN_0$.
Applying again the isomorphism \eqref{e3.5}, now in the inverse 
direction,
shows that  $W_f  W_{f_{\infty}}^* \in \obfA$, whence
$W_{f_\infty}^* = W_f^* \, W_f  W_{f_{\infty}}^* \in \obfF$, completing the proof. 
\qed \end{proof}

Since the normalized elements of $\cD(\RR^s)$ are strongly dense 
in the unit ball of~$L^2(\RR^s)$, it follows from this lemma that 
the partial isometries $W_{f_1} W_{f_2}^*$ are contained in 
the algebra $\obfA$ for all normalized elements 
$f_1, f_2 \in L^2(\RR^s)$. Moreover, because of the tensor
product structure of the operators $U_\otimes(g)$ one 
has \ $\ad U_\otimes(g)(W_f^*) = W_{U_1(g) f}^*$
in an obvious notation. Hence, $g \in G$,  
$$
W_f \, \ad U_\otimes(g)(W_f^*) = W_f \, W_{U_1(g) f}^* \in \obfA \, .
$$
If $G$ is a topological group and $g \mapsto U_1(g)$ is continuous
on $\cF_1$ in the strong operator topology, it likewise follows from  
the preceding lemma that the functions
$g \mapsto W_f \, \ad U_\otimes(g)(W_f^*)$ as well as their adjoints  
are norm continuous on~$\ \bigoplus_{k = 0}^n \cF_k$, 
$n \in \NN_0$. 

\medskip
These observations imply according to 
Lemma~\ref{l4.1}(ii) that the adjoint action of the unitary 
group $U_\otimes$ on the observables can be extended to the 
field algebra $\obfF$;
there is no need to correct it by some unitary group in $\bfN$. In order 
to describe also the continuity properties of this action, we introduce 
an increasing family of seminorms on $\obfF$.

\medskip
\noindent \textbf{Definition:} \ 
Let $F \in \obfF$. For any $n \in \NN_0$, put 
$$                                              
\| F \|_n \doteq \sup \, 
\{ \| F \bPhi_n \|/\| \bPhi_n \| + 
\| F^* \bPsi_n \|/\| \bPsi_n \| : \bPhi_n,  \bPsi_n\in  
\textstyle{\bigoplus_{k = 0}^n} \cF_k\} \, .
$$
This family of seminorms defines a locally convex topology on 
$\obfF$ which is weaker than the norm topology. A function 
with values in $\obfF$ is said to be lct-continuous
if it is continuous with regard to this topology. Similarly,
a subset of $\obfF$ is said to be lct-dense if it is dense
in $\obfF$ with regard to this topology. 

\medskip
It is apparent that the seminorms are symmetric, $\|F^*\|_n = \|F\|_n$.
Moreover, if $F_m, F^{\, \prime}_{m^\prime} \in \obfF$ 
are tensors, $m,m^\prime  \in \ZZ$, one has (bearing in mind the changes of 
particle numbers induced by tensors)  
$$
\| F_m F^{\, \prime}_{m^\prime} \|_n \leq 
\| F_m  \|_{n + m^\prime} \, \| F^\prime_{m^\prime}  \|_n
+ \| F_m  \|_{n} \, \| F^\prime_{m^\prime}  \|_{n - m}
\, , \quad n \in \NN_0 \, ,
$$
where $\| \, \cdot \, \|_k \doteq 0$ if $k < 0 $.
So, in particular, the product of tensor-valued 
lct-continuous functions with values in  
$\obfF$ is again lct-continuous. Making use of these 
notions, the following theorem obtains. 

\begin{theorem} \label{t5.2}
Let $G$ be a group, let $U_1$ be a unitary representation 
of $G$ on $\cF_1$, and let $U_\otimes$ be its promotion
to $\cF$, defined above. Then 
$\ad U_\otimes(g)(\obfF) = \obfF$, \mbox{$g \in G$}. 
If $G$ is a topological group and 
$g \mapsto U_1(g)$ is continuous on $\cF_1$ in the strong 
operator topology, then $g \mapsto \ad U_\otimes(g)(F)$
is lct-continuous, $F \in \obfF$.
Furthermore, if $G$ is locally compact 
there exists a lct-dense sub-C*-algebra \
$\obfF_\otimes \subset \obfF$ on which the 
adjoint action of the unitary group 
$U_\otimes$ is pointwise norm continuous. 
\end{theorem}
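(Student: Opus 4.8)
The plan is to dispatch the first two assertions quickly from the material already assembled, and then to concentrate on the last one. For $\ad U_\otimes(g)(\obfF)=\obfF$ I would simply invoke Lemma~\ref{l4.1}: the identity $W_f\,\ad U_\otimes(g)(W_f^*)=W_f\,W_{U_1(g)f}^*\in\obfA$ recorded before the theorem is exactly condition~(ii) with the trivial representation $Z_{f,N}(g)=\one$, and the implication (ii)$\Rightarrow$(iii) (which here returns $U_N(g)=\one$, hence $V=U_\otimes$) yields the claim. For lct--continuity, note first that since $U_\otimes(g)$ leaves every $\cF_k$ invariant the seminorms are invariant, $\|\ad U_\otimes(g)(F)\|_n=\|F\|_n$. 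Lct--continuity of $g\mapsto\ad U_\otimes(g)(\,\cdot\,)$ holds on $\obfA$ (the restrictions of $\ad U_{\otimes,n}(g)$ to $\fK_n$ are norm continuous, as noted before Lemma~\ref{l5.1}) and on $W_f,W_f^*$ (by Lemma~\ref{l5.1} and its analogue for $W_f$, together with strong continuity of $g\mapsto U_1(g)f$); the submultiplicativity of the seminorms on tensors then gives it on the $*$--algebra $\cP$ generated by $\obfA$ and $W_f,W_f^*$, and a $3\varepsilon$--argument using $\|\,\cdot\,\|_n\le 2\|\,\cdot\,\|$, the seminorm invariance, and the norm density of $\cP$ in $\obfF$ (Proposition~\ref{p3.3}) propagates it to all of $\obfF$.

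For the last assertion I would take $\obfF_\otimes$ to be the maximal candidate, $\obfF_\otimes\doteq\{F\in\obfF:\ g\mapsto\ad U_\otimes(g)(F)\text{ is norm continuous}\}$; that this is an $\ad U_\otimes$--invariant sub--C*--algebra of $\obfF$ is routine. The real task is lct--density, which I would obtain by averaging against an approximate identity: fix a net $(h_V)$ in $C_c(G)$ with $h_V\ge 0$, $\int h_V\,dg=1$, $\supp h_V\subset V$, indexed by neighbourhoods $V$ of the unit of~$G$. Two families of averaged operators are needed. For $A\in\obfA$, with coherent sequence $\{A\upharpoonright\cF_n\}_n\in\bfK$, I would define $A_{h_V}$ by the coherent sequence whose $n$--th entry is the norm--convergent Bochner integral $\int h_V(g)\,\ad U_{\otimes,n}(g)(A\upharpoonright\cF_n)\,dg\in\fK_n$; this is again coherent because $\kappa_n$ intertwines the adjoint actions, $\kappa_n\circ\ad U_{\otimes,n}(g)=\ad U_{\otimes,n-1}(g)\circ\kappa_n$, which one reads off from formula~\eqref{e3.3} and the tensor structure $\ad U_{\otimes,n}(g)(C_k\otimes_s 1\otimes_s\cdots\otimes_s 1)=(\,\text{conjugate of }C_k\,)\otimes_s 1\otimes_s\cdots\otimes_s 1$, so $A_{h_V}\in\obfA$ via the isomorphism~\eqref{e3.5}. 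Secondly, since $W_{U_1(g)f}W_f^*\in\obfA$ for every $g$ and $g\mapsto(W_{U_1(g)f}W_f^*)\upharpoonright\cF_n$ is norm continuous with values in the norm--closed space $\fK_n$, the integrals $\int h_V(g)\,(W_{U_1(g)f}W_f^*\upharpoonright\cF_n)\,dg\in\fK_n$ form a coherent sequence (apply the norm--continuous homomorphism $\kappa_n$ under the integral sign) defining an element $Q_{h_V}\in\obfA$, and I would set $V_{h_V}\doteq Q_{h_V}\,W_f\in\obfF$. The gain from this last factorisation is that, because $W_f^*W_f=\one$, the operator $V_{h_V}$ agrees on each $\cF_n$ with $\int h_V(g)\,W_{U_1(g)f}\,dg$; consequently $\ad U_\otimes(g_0)(V_{h_V})=\int h_V(g)\,W_{U_1(g_0g)f}\,dg=V_{L_{g_0}h_V}$ by left invariance of Haar measure, so $\|\ad U_\otimes(g_0)(V_{h_V})-\ad U_\otimes(g_1)(V_{h_V})\|\le\|L_{g_0}h_V-L_{g_1}h_V\|_1\to 0$ and $V_{h_V}\in\obfF_\otimes$; likewise $A_{h_V}\in\obfF_\otimes$, the averaging turning into a left translation now at the level of the coherent sequences.

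Finally, as $V$ shrinks to $\{e\}$ one has $A_{h_V}\to A$ and $V_{h_V}\to W_f$ in the lct topology: for $A_{h_V}$ because $g\mapsto\ad U_{\otimes,m}(g)(A\upharpoonright\cF_m)$ is norm continuous at $g=e$ for each of the finitely many $m\le n$ that enter $\|\,\cdot\,\|_n$, and for $V_{h_V}$ by Lemma~\ref{l5.1} since $\sup_{g\in V}\|U_1(g)f-f\|_2\to 0$. Since every element of the norm--dense $*$--algebra $\cP$ is a finite sum of tensors $A\,W_f^{\,m}$ $(m\ge0)$ and $W_f^{*\,|m|}A$ $(m<0)$ with $A\in\obfA$ (as in the proof of Proposition~\ref{p3.3}), and the matching products $A_{h_V}V_{h_V}^{\,m}$, $V_{h_V}^{*\,|m|}A_{h_V}$ lie in $\obfF_\otimes$ and converge to them in the lct topology — all the nets involved being norm bounded, so that the submultiplicativity of the seminorms on tensors applies — it will follow that $\obfF_\otimes$ is lct--dense in $\obfF$, which completes the argument. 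I expect the crux to be the verification that the two averagings really land in $\obfA$, i.e.\ that they produce coherent sequences in the inverse limit $\bfK$; this is exactly where the inverse--limit description of the observables and the $\kappa_n$--equivariance of the adjoint action of $U_\otimes$ do the work, and it is what lets the operators $W_{U_1(g)f}$ — which do \emph{not} transform norm continuously — be reassembled, after averaging, into genuine elements of the field algebra on which the group acts norm continuously.
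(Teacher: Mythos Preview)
Your argument is correct and proceeds along the same lines as the paper's: invariance via Lemma~\ref{l4.1}(ii), lct-continuity by reducing to tensors and then to $\obfA$ and $W_f$, and the norm-continuous subalgebra via Haar averaging, with membership in $\obfF$ secured by passing through the inverse limit $\bfK\simeq\obfA$. The only cosmetic difference is that the paper mollifies arbitrary tensors $F_m$ directly (writing $F_m(k)=\bigl(\int k(g)\,\ad U_\otimes(g)(F_m)\,W_f^{*m}\,d\mu(g)\bigr)W_f^m$ and taking $\obfF_\otimes$ to be the C*-algebra these generate), whereas you mollify only the generators $A$ and $W_f$ and then multiply; both rest on the same coherent-sequence bookkeeping and the $L^1$-continuity of left translation on $C_c(G)$.
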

\begin{proof}
The first part of this statement follows from the preceding 
arguments. For the proof of 
the lct-continuity of the elements of 
$\obfF$ under the adjoint action of $U_\otimes$ 
we recall that  
any element $F \in \obfF$ and its adjoint 
$F^* \in \obfF$  can be approximated in norm by finite 
sums of tensors $F_m \in \obfF$, $m \in \ZZ$. 
Since the tensor character does not change 
under the adjoint action of~$U_\otimes$, it is 
sufficient to establish the lct-continuity for these tensors.
If $m = 0$, hence $F_0 \in \obfA$, this property was established  
prior to Lemma~\ref{l5.1}. If $m > 0$ one proceeds to 
$F_m = (F_m W_f^{* \, m}) \, W_f^m $. The operator 
$F_m W_f^{* \, m}$ is again an element of~$\obfA$, hence  
lct-continuous under the adjoint action of $U_\otimes$. Since 
$
\ad U_\otimes(g)(W_f) = (\ad U_\otimes(g)(W_f) \,  W_f^*)
\, W_f 
$,
it follows from the remarks after Lemma~\ref{l5.1} 
that $W_f$ and its adjoint are also lct-continuous 
under this action. But the 
product of tensor-valued lct-continuous functions is 
lct-continuous, which establishes the lct-continuity of~$F_m$
under the action of $\ad U_\otimes(g)$. If $m < 0$ 
one proceeds from $F_m = W_f^{* m} (W_f^m F_m)$ and a similar
argument as the preceding one establishes its lct-continuity as well. 
The lct-continuity of any $F \in \obfF$ under the adjoint action 
of $U_\otimes$ then follows.

\medskip 
If $G$ is locally compact, there exists a 
left invariant Haar measure $\mu$ on $G$, so 
one can smooth out the tensors $F_m \in \obfF$, $m \in \ZZ$.
To this end we pick any continuous
function $k : G \rightarrow \CC$ with compact support   
and proceed to the integrals (defined in the strong operator topology)
$$
F_m(k) \doteq \int \! d \mu(g) \, k(g) \, 
\ad U_\otimes(g)(F_m)  \, .
$$
The resulting functions 
$g \mapsto \ad U_\otimes(g)\big(F_m(k)\big)$
are norm continuous due to this smoothing procedure and have values in $\obfF$. 
For the proof of the latter assertion, let 
$m > 0$. We then proceed to the equality  
$$ 
F_m(k) 
= \Big( \int \! d \mu(g) \, k(g) \, \ad U_\otimes(g)(F_m) \, W_f^{* \, m} 
\Big) \, W_f^m \, .
$$ 
According to the preceding arguments,
the function $g \mapsto \ad U_\otimes(g)(F_m) \, W_f^{* \, m}$
has values in $\obfA$ and is lct-continuous. Its restrictions to 
the subspaces $\cF_l \subset \cF$ define coherent families of 
norm continuous functions in $\fK_l$, $l \in \NN_0$, so their  
integrals are defined in the norm topology. Making use again
of the facts that the spaces $\fK_l$ are norm complete and 
the inverse maps $\kappa_l$ are continuous, $l \in \NN_0$, 
the isomorphism in relation \eqref{e3.5} implies  
$\int \! d \mu(g) \, k(g) \, \ad U_\otimes(g)(F_m) \, W_f^{* \, m}
 \in \obfA$. Hence $F_m(k) \in \obfF$. 
A similar argument establishes this inclusion for $m \leq 0$.

\medskip 
Consider now the C*-algebra $\obfF_\otimes$  generated by
the operators $F_m(k)$ for arbitrary continuous functions
$k: G \rightarrow \CC$ with compact support and 
arbitrary $m \in \ZZ$. As we have shown, the unitary
group $U_\otimes$ acts norm continuously on these generating 
elements and hence pointwise on $\obfF_\otimes$.
Moreover, one can recover on any supspace 
${\textstyle \bigoplus_{l=0}^n} \cF_l$, $n \in \NN_0$, the 
original tensors 
$F_m$ in the norm topology from the mollified operators $F_m(k)$, letting
the measures 
$d\mu(g) k(g)$ tend to the Dirac measure at the unit element of $G$. 
Since the tensors $F_m$, $m \in \ZZ$, generate $\obfF$, 
the algebra~$\obfF_\otimes$ is lct-dense in $\obfF$, completing the
proof.  \qed \end{proof}

The preceding theorem shows that symmetry transformations
and dynamics of physical interest, involving arbitrary external 
forces, act as automorphisms on the 
field algebra $\obfF$ and have strong continuity properties. 
Moreover, one can proceed to subalgebras of $\obfF$ on which
this action is norm continuous, yielding C*-dynamical systems.
In the sequel we will demonstrate that these desirable features  
of the field algebra persist in the presence of interaction. 
For the sake of concreteness and limitation of technical  
difficulties, we restrict our attention  
to gauge invariant Hamiltonians $H$ which are defined 
on their standard domains of definition in $\cF$ by 
\begin{equation} \label{e5.1}
\begin{gathered} 
H = \int \! d\bx \, \bpartial a^*(\bx) \, \bpartial a(\bx) 
+ \kappa^2 \! \int \! d\bx \, \bx^2 \ a^*(\bx) \, a(\bx) \\
+ \int \! d\bx \! \! \int \! d\by \ a^*(\bx) a^*(\by) \, V(\bx - \by) \,
a(\bx) a(\by) \, .
\end{gathered}
\end{equation}
Here $\bpartial$ denotes the gradient and $V \in C_0(\RR^s)$ is a  
real, continuous, and symmetric function vanishing at infinity 
(describing a pair potential). In order to cover also trapped systems, 
we admit an external harmonic force, scaled by~$\kappa^2 \geq 0$.
These Hamiltonians cover a large class of  
attractive and repulsive two-body potentials, including potentials of   
long range. In order to reduce the technicalities, we exclude
potentials having singularities; methods to treat such 
potentials in the framwork of the resolvent
algebra were discussed in \cite[Sec.~6]{Bu2}.
We also do not consider non-harmonic trapping forces,
which can be handled by refinements of the present 
arguments. The assumption of gauge invariance of the Hamiltonians is,
however, essential for the present approach. In the last section 
we will therefore comment on the 
treatment of non-gauge invariant Hamiltonians,  
involving for example an additional linear term in the field, 
such as in the Nelson model and similar theories, cf.\ \cite{Sp}.

\medskip
The above Hamiltonians $H$ are selfadjoint, so given 
any such $H$ we can proceed to
the unitary group $t \mapsto e^{itH}$ of time translations on $\cF$.
As was shown in~\cite{Bu1}, the observable algebra
$\obfA$ is stable under the corresponding adjoint action 
$\ad e^{itH}$, $t \in \RR$. Hence we can lift it to the morphism 
$\rho_f = \ad \, W_f$,
cf.\ \eqref{e4.1}, where 
the normalized function $f \in \cD(\RR^s)$
will be kept fixed in the subsequent analysis. The lift 
is given by, cf.\ \eqref{e4.2},    
$$
{}^{t \!} \rho_f \doteq \ad e^{itH} \, \scirc \, \rho_f \,
\scirc \, \ad e^{-itH} \, , \quad t \in \RR \, ,
$$
and there exist intertwining operators $W_f \ad e^{itH}(W_f^*)$
between the morphisms $\rho_f$ and ${}^{t \!} \rho_f$, $t \in \RR$.
Thus, in order to prove that the time translations can be
extended to the field algebra $\obfF$ we need to
show that these intertwining operators comply with the two
conditions given at the end of the preceding section.
For notational convenience we consider here their adjoints  
\begin{equation} \label{e5.2}
\ad e^{itH} (W_f) \, W_f^* = e^{itH}  W_f  e^{-itH}  W_f^*
= e^{itH}  e^{-it W_f H W_f^*} \, E_f \doteq \Gamma_f(t) \, E_f \, ,
\end{equation}
where $\Gamma_f(t)$ denotes the product of the two exponentials
on the right hand side of the second equality.
Recalling that $E_f = W_f W_f^*$ is the gauge invariant projection
onto the orthogonal complement of the kernel of $a(f)$, it follows
that $E_f \big( \bigoplus_{k = 0}^n \cF_k \big) = | f \rangle \otimes_s 
\big( \bigoplus_{k = 0}^{n-1} \cF_k \big)$, $n \in \NN_0$, where we identify
$| f \rangle \otimes_s \Omega$ with $| f \rangle \in \cF_1$; thus, in 
a somewhat sloppy notation, $E_f \cF = | f \rangle \otimes_s \cF$.  
It will be crucial that the unitaries $\Gamma_f(t)$ 
are restricted to this subspace of Fock space, where at least
one particle is localized in a fixed region, the support
of $f$. We proceed then as follows.

\medskip
In a first step we restrict the unitaries $\Gamma_f(t)$ 
to the  subspaces of $E_f \, \cF$ with fixed particle number
$n \in \NN_0$, \ 
$\Gamma_{f,n}(t) \doteq \Gamma_f(t) \, E_{f,n}$,
where $E_{f,n}$ coincides with $E_f$ on $\cF_n$
and vanishes on $\cF_m$, $m \in \NN_0 \backslash \{ n \}$.
Note that the range of the operators $\Gamma_{f,n}(t)$ is 
contained in $\cF_n$, but not in
$E_{f,n} \, \cF_n = |f\rangle \otimes_s \cF_{n-1}$. 
In order to see that these restricted operators are elements of 
$\fK_n$, we analyze the difference between the 
generators of the underlying unitary groups, 
\begin{equation} \label{e5.3}
(H - W_f H W_f^*) \upharpoonright E_f \, \cF_n  =
(H_n -  W_f H_{n-1} W_f^*)  \, E_{f,n} \, .
\end{equation}
Because of the choice of $f$, these operators are
densely define. As a matter of fact, these differences 
are bounded operators. This puts us into the position to 
expand the unitaries $\Gamma_{f,n}(t)$ into a norm-convergent
Dyson series of time ordered integrals, involving the
interacting dynamics. 

\medskip 
Parts of the operators appearing in the difference between 
the generators are elements of $\fK_n$;
it then follows from results in \cite{Bu1} that their
time ordered integrals also belong to this algebra. But there appear 
also terms which are not of this type.  We shall show in the second step
of our argument that the time ordered integration     
improves the properties of these terms so that the integrated 
terms are likewise contained in $\fK_n$. 
Since the integrals involve the interacting dynamics, this step
requires another expansion of Dyson type, where the interacting
dynamics is expanded in terms of time ordered integrals 
involving the action of the non-interacting dynamics. 
(It is at this point where the restrictions on the trapping potential
lead to simplifications.)
These results, together with the convergence of the 
Dyson expansion, show that
\mbox{$\Gamma_{f,n}(t) \in \fK_n$}, $t \in \RR$, for
$n \in \NN_0$. It completes the proof of the first point in 
the list at the end of the preceding section. 

\medskip 
In the final step we check whether the operators 
$\Gamma_{f,n}(t) \in \fK_n$ are restrictions
of some observable on $\cF$ to the subspaces  
$\cF_n$, $n \in \NN_0$. To this end we apply to these operators
the inverse maps $\kappa_n$. In order to gain control on  
their respective images we need to prove that
one may interchange the action of 
$\kappa_n$ with the time ordered integrals involved in the 
computation of~$\Gamma_{f,n}(t)$. This requires some further analysis.
It results in the desired relation 
$\kappa_n\big(\Gamma_{f,n}(t)\big) = \Gamma_{f,n-1}(t)$, $n \in \NN_0$     
and thereby establishes the second point in the above check list. 
There is no need to modify the 
operators by characters of~$\RR$ in order to arrive 
at this conclusion. 
Lemma~\ref{l4.1} then implies that the field algebra 
$\obfF$ is stable under the action of 
$\ad U(t)$, $t \in \RR$. In the course of this analysis 
we keep also control on the continuity properties of the functions   
$t \mapsto \Gamma_{f,n}(t)$, $n \in \NN$, and thereby arrive at the
subsequent theorem. The details of proof are given in the appendix.

\begin{theorem} \label{t5.3}
  Let $t \mapsto e^{itH}$ be the unitary group on $\cF$
  which is determined by a Hamiltonian
  of the form given in \eqref{e5.1}. The
  adjoint action of this group leaves the field algebra
  invariant, $\ad e^{itH}(\obfF) = \obfF$, $t \in \RR$, and 
  the resulting functions $t \mapsto \ad e^{itH}(F)$ are
  lct-continuous, $F \in \obfF$. Moreover, there is a lct-dense
  sub-C*-algebra \, $\obfF_0 \subset \obfF$
  on which this action is pointwise norm continuous. 
\end{theorem}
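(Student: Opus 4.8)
The strategy I would follow is the one already sketched in the text preceding the theorem: reduce all three assertions, via Lemma~\ref{l4.1} and the two-item check list at the end of Section~4, to structural statements about the intertwiners $\ad e^{itH}(W_f)\,W_f^* = \Gamma_f(t)\,E_f$ of \eqref{e5.2}. Since $\ad e^{itH}(\obfA) = \obfA$ by \cite{Bu1}, the morphism $\rho_f = \ad\,W_f$ lifts to ${}^{t\!}\rho_f$ and the operators $\Gamma_f(t)E_f$ intertwine $\rho_f$ and ${}^{t\!}\rho_f$. By Lemma~\ref{l4.1}(ii) it then suffices to show that their restrictions $\Gamma_{f,n}(t) \doteq \Gamma_f(t)E_{f,n}$ lie in $\fK_n$ and satisfy the coherence relation $\kappa_n\big(\Gamma_{f,n}(t)\big) = \Gamma_{f,n-1}(t)$, $n \in \NN_0$; since no character of $\RR$ will be needed, the representation $U_N$ of Lemma~\ref{l4.1}(iii) is trivial, so that $\ad e^{itH}$ itself extends and $\ad e^{itH}(\obfF) = \obfF$.

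For the first point I would follow the reasoning around \eqref{e5.3}: because $f$ is a fixed test function and $W_f^* N W_f = N + \one$, the generator difference \eqref{e5.3}, although unbounded as written, is in fact a \emph{bounded} self-adjoint operator on $E_f\,\cF_n$ — the gradient and harmonic one-body terms for the distinguished $f$-particle contribute the finite constants $\|\bpartial f\|_2^2$ and $\kappa^2\langle f, \bx^2 f\rangle$ ($\supp f$ being compact), while the pair potential contributes only a bounded multiplication-type term because $V \in C_0(\RR^s)$. This boundedness permits the norm-convergent Dyson expansion of $\Gamma_{f,n}(t)$ into time-ordered integrals built from the interacting propagators. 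The terms coming from the interaction among the spectator particles stay in $\fK_n$ by the results of \cite{Bu1}; the terms attached to the distinguished $f$-particle are bounded multiplication operators and are \emph{not} in $\fK_n$ on their own, and the crucial point — deferred to the appendix — is that time-ordered integration against the interacting dynamics regularises them. Making this precise requires a secondary Dyson expansion of the interacting propagator in powers of $V$ around the non-interacting kinetic-plus-harmonic dynamics, after which one sees that the integrated terms acquire the compactness needed for membership in $\fK_n$; this is exactly where excluding singular potentials and restricting to a harmonic trap simplifies the estimates. Together with the convergence of the outer expansion this gives $\Gamma_{f,n}(t) \in \fK_n$ for all $t \in \RR$, $n \in \NN_0$.

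For the second point I would show that the inverse map $\kappa_n$, being a norm-continuous homomorphism, may be interchanged with all the time-ordered integrals occurring in $\Gamma_{f,n}(t)$; this reduces the claim to a termwise check on the Dyson series, where it follows from the explicit action \eqref{e3.3} of $\kappa_n$ on the generating compact operators together with the compatibility of the $H_n$- and $W_f H_{n-1}W_f^*$-propagator families with $\kappa_n$ already implicit in the $\obfA$-stability results of \cite{Bu1}. The outcome $\kappa_n\big(\Gamma_{f,n}(t)\big) = \Gamma_{f,n-1}(t)$ makes $\{\Gamma_{f,n}(t)\}_{n\in\NN_0}$ a coherent sequence, hence by the isomorphism \eqref{e3.5} the restriction of an element of $\obfA$; Lemma~\ref{l4.1} then yields $\ad e^{itH}(\obfF) = \obfF$. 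Running these estimates while tracking $t$-dependence in the seminorms $\|\cdot\|_n$ — each Dyson term is $\|\cdot\|_n$-continuous and the series converges in $\|\cdot\|_n$ uniformly on compact $t$-intervals — gives lct-continuity of $t \mapsto \ad e^{itH}(F)$ first for tensors and hence, by norm density of finite sums of tensors and norm continuity of $F \mapsto F_m$, for all $F \in \obfF$. Finally, for the lct-dense norm-continuous subalgebra I would repeat the mollification argument from the proof of Theorem~\ref{t5.2}: for continuous $k : \RR \to \CC$ of compact support set $F_m(k) \doteq \int\! dt\, k(t)\,\ad e^{itH}(F_m)$, note that $F_m(k) \in \obfF$ because its restriction to each $\cF_l$ is a norm-continuous coherent $\fK_l$-valued function and the $\fK_l$ are norm complete, check that $\ad e^{itH}$ acts norm-continuously on the $F_m(k)$, and let $k$ tend to the Dirac measure at $0$ to recover the tensors $F_m$ in every $\|\cdot\|_n$; the C*-algebra $\obfF_0$ generated by the $F_m(k)$ is then lct-dense in $\obfF$ and carries a pointwise norm-continuous action of $\ad e^{itH}$.

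The main obstacle is the middle step of the first point: proving that the Dyson terms attached to the singled-out particle — manifestly bounded multiplication operators, hence outside $\fK_n$ — are pushed into $\fK_n$ by time-ordered integration against the \emph{interacting} dynamics. This is what forces the secondary Dyson expansion in the pair potential and the delicate control of the non-interacting propagators (where the hypotheses on $V$ and on the trap shape are genuinely used), and it is the technically heaviest part of the argument, the reason the details are relegated to the appendix.
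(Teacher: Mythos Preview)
Your outline matches the paper's proof closely: the appendix carries out exactly the three-step programme you describe, with Proposition~A.5 giving the decomposition $(H-\beta_f(H))E_f\upharpoonright\cF_n = A_{f,n}+B_{f,n}$, Lemmas~A.6--A.9 and Proposition~A.10 establishing $\Gamma_{f,n}(t)\in\fK_n$ via the two nested Dyson expansions you anticipate, Lemmas~A.11--A.14 and Proposition~A.15 handling the coherence relation by interchanging $\kappa_n$ with the time-ordered integrals, and a final mollification paragraph reproducing the $\obfF_0$ construction from Theorem~\ref{t5.2}. Two minor corrections of detail that do not affect your argument structure: the one-body contribution to the generator difference is not a scalar constant but the second quantization of the rank-one operator $\bP_\kappa^2 E_{f,1}$ (Lemma~A.3), and the ``bad'' $B_{f,n}$ terms are not multiplication operators but localized pair-potential terms built from $V(E_{f,1}\otimes_s 1)$ and its variant $\check{V}_{f,2}$ (Lemmas~A.2--A.3) --- your placement of both pieces in the correct halves of the $A_{f,n}/B_{f,n}$ split is nonetheless right.
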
  

\section{Conclusions}
\setcounter{equation}{0}

We have completed here our construction of a 
C*-algebraic framework for infinite non-relativistic Bosonic systems 
and their dynamics. In a preceding step~\cite{Bu1}, 
we clarified the properties of the gauge 
invariant (particle number preserving) observables, which 
are elements of the 
resolvent algebra of a non-relativistic Bose field. 
They generate a C*-algebra $\obfA$ with a surprisingly simple structure: 
it is a (bounded) projective limit of the direct sum of    
observable algebras for finite particle number. The latter algebras 
are built from compact operators on the single particle space, tensored 
with unit operators. Such structures were also found by Lewin~\cite{Le}. 
Yet some important feature is missing in that 
analysis; namely the existence of homomorphisms, relating the 
algebras for different 
values of the particle number. These homomorphisms were constructed 
in \cite{Bu1}, making use of clustering properties
of the states on the resolvent algebra in Fock space. 
They were a vital ingredient in the proof 
that a large family of dynamics, involving two-body 
potentials, acts by automorphisms on $\obfA$. 

\medskip
In the present article we have exhibited operators in the resolvent algebra,
which transform as tensors under the action of the 
gauge group. The construction is based on a weak form of harmonic
analysis, the crucial point being that the resultant tensors are 
still elements of the original resolvent-C*-algebra. Extending the
method of construction used for the observable algebra to these tensors, 
we have obtained a C*-algebra $\obfF$, the field algebra. It  
is generated by the observables in $\obfA$  and a single additional 
isometric tensor $W$.
The choice of this tensor is largely arbitrary within certain limitations;
irrespective of its choice, 
one arrives at the same algebra. So also the field algebra $\obfF$ has a  
concrete and simple structure. 

\medskip
In order to reveal the importance of the basic isometric tensors $W$,
we adopted ideas from sector analysis in relativistic quantum 
field theory \cite{DoHaRo}. 
In the present non-relativistic setting, the algebra of 
observables $\obfA$ 
gives rise to disjoint (super\-selected) representations  
on Hilbert spaces with different particle numbers. 
Akin to the relativistic setting, these representations
are related by particle number decreasing morphisms 
$\rho$ of the algebra of observables, $\rho(\obfA) \subset \obfA$. 
They are induced by the basic isometries, $\rho = \ad W$. As in the 
relativistic setting, different morphisms 
$\rho_1 = \ad W_1$, $\rho_2 = \ad W_2$ are related by intertwining 
operators $W_1 W_2^*$, which are contained in the algebra 
of observables. This equivalence 
expresses the fact that the morphisms generate 
equivalent representations on any given 
representation space of $\obfA$. They describe 
indistinguishable sets of states, where a particle 
has been removed from the states in the initial representation.

\medskip
These insights were then applied to symmetry transformations 
$g$ of the observables which, in the case at hand, were primarily of 
geometric nature. Examples are spatial translations, rotations and the time 
translations. They 
act on the algebra of observables by automorphisms, which can be lifted
to the morphisms, $\rho \rightarrow  {}^g\rho$. In other words, the morphisms 
can be translated, rotated,
and time shifted. It is then an obvious question whether a transformed 
morphism~${}^g\rho$ and the original morphism $\rho$ 
lead on any given representation space to equivalent representations, \ie 
whether the geometric operations on the morphisms do not 
alter the resulting sets of states. Again, the answer is 
affirmative if the initial morphisms and 
the transformed ones are related by intertwining operators, which are 
contained in the algebra of observables. Such morphisms are said to
be covariant under the respective symmetry 
transformations \cite{DoHaRo}. 
Any representation of $\obfA$, in which the symmetry transformations
are unitarily implemented, then gives rise to a representation 
$\rho$ of $\obfA$ on the same 
space, in which these transformations are also unitarily 
implemented. Thus the invariance of a 
representation of $\obfA$ under some symmetry transformation
is preserved by the action of 
covariant morphisms $\rho$, and the corresponding
unitary operators implementing this symmetry 
are related by intertwiners in $\obfA$.

\medskip
We have restricted here our attention to symmetries which are unitarily 
implemented on Fock space and induce automorphisms of the algebra of 
observables. Making use of the fact that the observable 
algebra $\obfA$ is faithfully represented on Fock space, we could specify  
candidates for the corresponding intertwining operators between the 
morphisms, making use of the basic isometries $W$. Even though 
these candidate intertwiners exist as bounded operators on Fock space,
their existence does not answer the preceding question. An
affirmative answer requires that the intertwiners are 
elements of the C*-algebra of observables $\obfA$, which is only a 
small subalgebra of the algebra of all bounded operators on Fock space. 

\medskip 
The detailed analysis 
of the candidate intertwiners therefore constituted a substantial part of the 
present investigation. In case of basic symmetries, such as 
spatial translations, rotations, and non-interacting
time translations, it was not difficult to show 
that the candidate intertwiners are elements of $\obfA$.
In the more interesting case of interaction, we have established
this fact for dynamics involving arbitrary continuous two body potentials, 
vanishing at infinity. We believe that these results can be extended 
with some effort to singular potentials, such as the Coulomb potential 
and the Yukawa potential. 

\medskip 
Whenever the candidate intertwiners between the morphisms belong to 
the observable algebra, the unitaries implementing the 
action of the respective symmetry
transformations on Fock space define by their adjoint action automorphisms 
of the field algebra $\obfF$. So this kinematical C*-algebra, which is 
generated by the basic canonical operators underlying the theory, 
is compatible with the Heisenberg picture for a large family
of dynamics. More precisely, for all initial data in~$\obfF$, 
the corresponding solutions of the Heisenberg equation lie also 
in~$\obfF$.

\medskip 
The Heisenberg picture is particularly 
useful in case of infinite systems. For, states of physical interest, 
such as equilibrium states at different temperatures, in general 
require different, disjoint Hilbert space representations. 
In contrast, the C*-algebra $\obfF$ is fixed. 
Furthermore, the generators implementing a given dynamics 
on a representation space (such as the Liouvillians) 
depend in general on the underlying states, wheras its action 
on $\obfF$ is defined in a state independent manner. 
This fact leads 
for example to simplifications in the treatment of stationary states,
where the generators of the dynamics can be determined from the 
algebra $\obfF$ in the
corresponding  GNS representations by standard methods. 
So the algebra $\obfF$ provides a convenient framework which is superior to 
the Weyl algebra setting. The latter algebra admits only rather trivial  
dynamics given by symplectic transformations. 

\medskip 
There is, however, a point which deserves further studies. In the present
analysis we have considered dynamics, which are particle number preserving
(gauge invariant) and hence leave the algebra of observables
invariant. Yet, thinking of models, where the Bose field is coupled to
other quantum systems, this feature may fail. One frequently models such 
a situation by Hamiltonians of the form considered in
the present article with an additional term which is linear in the basic 
field. It is an open problem whether such dynamics, which 
do not preserve the observable algebra $\obfA$, still preserve the 
field algebra $\obfF$. 

\medskip
Thinking of the Trotter product formula, 
one may study this problem by looking at the alternating  
adjoint actions of the exponentials of a regularized  
field (Weyl operator) and of a gauge invariant Hamiltonian
on the algebra $\obfF$. 
As we have shown, gauge invariant Hamiltonians induce
automorphisms of $\obfF$, so one needs to show that this
algebra is also stable under the adjoint action of Weyl operators. 
It is not difficult to see that the resolvent algebra is 
stable under such action, inducing on 
the basic resolvents the maps
$$
R(\lambda,f) \mapsto R(\lambda +i \mu(f),f) \, , \quad 
\lambda \in \RR \backslash \{ 0 \} \, , f \in \cD(\RR^s) \, ,
$$
where $\mu : \cD(\RR^s) \rightarrow \RR$ is a real linear 
functional on the test function space. It follows from the 
remarks made after Theorem~3.6
in \cite{BuGr1} that the transformed resolvents are again elements
of the resolvent algebra. In order to cover also interacting systems,
one has to establish this result for the field algebra $\obfF$, 
however, which seems feasible. 

\medskip
Note that the automorphic actions of Weyl type are also meaningful
in cases, where the functional 
$\mu$ is a distribution and a corresponding Weyl operator
no longer exists. This is for example of interest in situations, where one 
describes condensates of Bosons in infinite space. In simple 
cases one can describe their presence by a spatially homogenous 
functional $\mu$. Thinking of models with repulsive two-body
potentials, where the Fock vacuum is still a ground state, one may therefore 
hope that similar transformations   
describe the vacuum in presence of a condensate also in 
case of interaction. 

\medskip
Lastly, in models where the field is coupled with other quantum 
systems, there often appear linear terms involving the field 
operator only in the interaction operators, 
whereas the remaining parts describe the free 
evolution of the subsystems. Examples are models 
of Nelson type \cite{Sp}, which are frequently used in studies of 
infrared problems, involving massless 
particles in infinite space. There a more direct 
approach to the proof of
the stability of $\obfF$, tensored with the algebra of the
coupled quantum system, seems possible.  
In fact, expansions of Dyson type seem to lead to the desired result. 
We hope to get back to these problems of continuing 
physical interest by making use of the present novel approach.  

\appendix
\section{Appendix}

In this appendix we give the proof of Theorem~\ref{t5.3}, 
carrying out the various steps outlined prior to its statement.
We begin by introducing the notation used in what follows. 

\vspace*{-5mm}
\subsection{Fields and particle picture} \label{ssa.1} \hfill

\noindent Since we will freely alternate between the field theoretic approach 
and the particle picture, based on the interpretation of Fock space, 
let us recall some standard formulas.
Given $f_1, \dots, f_n \in \cD(\RR^s)$ one has for the 
symmetric tensor
product of the corresponding single particle vectors the relation 
$$
|f_1 \rangle \otimes_s \cdots \otimes_s  | f_n \rangle =
(1 / n!)^{1/2} \ a^*(f_1) \cdots a^*(f_n) \, \bOmega \in \cF_n \, . 
$$
Next, let $O_1$ be a single particle operator on $\cF_1$
with (distributional) kernel 
\mbox{$\bx, \by \mapsto \langle \bx | O_1 | \by \rangle$}. 
Its canonical lift to 
$\cF_n$, $n \in \NN_0$, obtained by forming symmetrized tensor products 
with the unit operator and amplifying it with the appropriate weight factor 
$n$, is given by   
$$
n \ (O_1 \otimes_s \underbrace{1 \otimes_s \cdots \otimes_s 1}_{n-1}) 
= \int \! d\bx \! \int \! d\by \, a^*(\bx) \,
\langle \bx | O_1 | \by \rangle \, a(\by) \upharpoonright \cF_n \, .
$$
The field theoretic operator on the right hand side of this equality 
will be called \textit{second quantization} of~$O_1$.
Similarly, if $O_2$ is a two-particle operator acting on 
$\cF_2$ with kernel 
$\bx_1, \bx_2, \by_1, \by_2 \mapsto 
\langle \bx_1, \bx_2 | O_2 | \by_1, \by_2 \rangle$, one has, $n \in \NN_0$, 
\begin{align*}
& n(n-1) \ 
(O_2 \otimes_s \underbrace{1 \otimes_s \cdots \otimes_s 1}_{n-2}) \\
& =  
\int \! d\bx_1 \!  \! \int \! d\bx_2 \!  \! 
\int \! d\by_1 \!  \! \int \! d\by_2 \, 
a^*(\bx_1) a^*(\bx_2) \,
\langle \bx_1 , \bx_2 | O_2 | \by_1, \by_2 \rangle \, a(\by_1) a(\by_2) 
\upharpoonright \cF_n \, . 
\end{align*}
The operator on the right hand side 
will be called \textit{second quantization} of $O_2$.

\medskip
The Hamiltonians of interest here, given in
equation \eqref{e5.1}, have the form
\begin{align*}
 H & = \int \! d\bx \, 
\big( \bpartial a^*(\bx) \, \bpartial a(\bx) 
+ \kappa^2 \bx^2 \, a^*(\bx) \, a(\bx) \big)  \notag \\ 
& + \int \! d\bx \! \! \int \! d\by \ a^*(\bx) a^*(\by) \, V(\bx - \by) \,
a(\bx) a(\by) \, .
\end{align*}
The first integral is the second quantization 
of the operator $\bP_\kappa^2 \doteq \bP^2 + \kappa^2 \, \bQ^2$
on $\cF_1$, where $\bP$ is the momentum and $\bQ$ the 
position operator;
the second integral is the second quantization of the 
two-particle potential $V$ on~$\cF_2$. 
Note that the kernel of proper pair potentials 
on $\cF_2$ has the singular form 
$$
\bx_1, \bx_2, \by_1, \by_2 \mapsto 
(1/2) \, (\delta(\bx_1 - \by_1) \delta(\bx_2 - \by_2) + 
\delta(\bx_1 - \by_2)  \delta(\bx_2 - \by_1) ) \, V(\by_1 - \by_2) \, ,
$$
which reduces the second quantization of $V$ to a double integral.
(We will have occasion to consider also less singular potentials
whose second quantization requires more integrations.) 
Given $n \in \NN_0$, the restriction 
$H_n \doteq H \upharpoonright \cF_n$ can 
thus be presented in the form 
\begin{equation} \label{ea.1} 
H_n =  
\  n \, (\bP_\kappa^2 \otimes_s \underbrace{1 \otimes_s
  \cdots \otimes_s 1}_{n-1}) + \ n(n-1) \, 
(V \otimes_s \underbrace{1 \otimes_s \cdots \otimes_s 1}_{n-2})  \, .
\end{equation}
This version will be useful in our subsequent analysis,
where we need to decompose the operators $\bP_\kappa^2$ 
and $V$ into different pieces in order to relate them to 
elements of the algebras $\fK_n$, cf.~Eqn.~\eqref{e3.1}

\medskip
We will also make use of the second quantization $N_f$ of the 
one-particle operator $E_{f,1}$, the projection onto the ray of
$|f \rangle$ in $\cF_1$.
The restriction of this number operator to $\cF_n$ is given by
$ N_{f,n} \doteq N_f \upharpoonright \cF_n = n \,
(E_{f,1} \otimes_s \underbrace{1 \otimes_s \cdots \otimes_s 1}_{n-1})$. 
Hence all bounded functions of $N_{f,n}$ are elements of 
$\fK_n$, cf.\ Eqn.~\eqref{e3.1}.
We also note that the projection
$E_{f,n} = E_f \upharpoonright \cF_n$ is a function of this kind and can 
be expressed in terms of $E_{f,1}$ by the formula
$$
E_{f,n} = \one_n - \underbrace{(1-E_{f,1}) \otimes_s \cdots \otimes_s
(1 - E_{f,1}) }_n \, ,
$$
where $\one_n$ is the unit operator on $\cF_n$.

\vspace*{-4mm}
\subsection{Comparison of Hamiltonians} \label{ssa.2} \hfill

\noindent 
As outlined in the main text, we need to consider   
the difference of Hamiltonians  
$(H_n - W_f H_{n-1} W_f^*) \, E_{f,n}$,
cf.\ Eqn.~\eqref{e5.3}.
In our first technical lemma we focus on the second 
term in this difference and compute lifts of operators on 
$\cF_{n-1}$ to $E_{f,n} \, \cF_n \subset \cF_n$, which are 
induced by the adjoint action of $W_f$; recall that 
$E_{f,n} \, \cF_n = |f\rangle \otimes_s \cF_{n-1}$, 
$n \in \NN$. In the statement of the lemma there appear  
similarity transformations $\sigma_f$ of gauge invariant operators 
$O$ on $\cF$, given by
\begin{equation} \label{ea.2}
\sigma_f(O) \doteq (1 + N_f)^{-1/2} \, O \, (1 + N_f)^{1/2} \, .
\end{equation}
We put $\sigma_{f,n}$ for the restriction of $\sigma_f$ to 
gauge invariant operators on $\cF_n$, $n \in \NN_0$.

\begin{lemma}  \label{la.1} 
Let $n \in \NN$ and let $O_{n-1}$ be an operator with domain 
of definition $\cD_{n-1} \subset \cF_{n-1}$ which is stable under
the action of the spectral projections of $N_{f,n-1}$.
Then $\sigma_{f, n-1}(O_{n-1})$ and $\sigma_{f, n-1}^{-1}(O_{n-1})$
are defined on $\cD_{n-1}$. Moreover, one has for
any $\bPhi_{n-1} \in \cD_{n-1}$ the equalities 

\medskip
(i) \quad $W_f O_{n-1} W_f^* \ 
\big( | f \rangle \otimes_s \bPhi_{n-1} \big) \ = \ 
| f \rangle \otimes_s \sigma_{f,n-1}(O_{n-1}) \, \bPhi_{n-1}$ 

\medskip
\hspace*{-2mm} (ii) \quad $| f \rangle \otimes_s O_{n-1} \bPhi_{n-1} 
 \ = \ W_f \, \sigma_{f,n-1}^{-1}(O_{n-1}) \, W_f^* \  
\big( | f \rangle \otimes_s \bPhi_{n-1} \big)$. 
\end{lemma}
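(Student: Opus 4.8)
The plan is to reduce both identities to a few elementary relations between $W_f$, the creation operator $a^*(f)$ and $N_f$, exploiting the fact that $N_{f}$ restricted to any finite-particle space $\cF_{n-1}$ is bounded with spectrum contained in $\{0,1,\dots,n-1\}$. First I would settle the domain claim. On $\cF_{n-1}$ the operators $(1+N_{f,n-1})^{\pm 1/2}$ are finite linear combinations of the spectral projections of $N_{f,n-1}$, hence map $\cD_{n-1}$ into itself by the hypothesis on $\cD_{n-1}$. Consequently $\sigma_{f,n-1}(O_{n-1}) = (1+N_{f,n-1})^{-1/2}\,O_{n-1}\,(1+N_{f,n-1})^{1/2}$ and $\sigma_{f,n-1}^{-1}(O_{n-1}) = (1+N_{f,n-1})^{1/2}\,O_{n-1}\,(1+N_{f,n-1})^{-1/2}$ are well defined on $\cD_{n-1}$: the inner bounded factor carries $\cD_{n-1}$ into the domain $\cD_{n-1}$ of $O_{n-1}$, and the outer factor is bounded.

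For part (i) I would record three facts. First, for $\bPhi_{n-1}\in\cF_{n-1}$ one has $|f\rangle\otimes_s\bPhi_{n-1} = n^{-1/2}\,a^*(f)\bPhi_{n-1}$; this follows from the formulas recalled in Subsection~\ref{ssa.1}, extended by linearity and continuity over $\cF_{n-1}$, on which $a^*(f)$ is bounded. Second, since $\|f\|_2=1$ the canonical commutation relations give $a(f)a^*(f)=N_f+\one$, whence $W_f^*a^*(f) = (1+N_f)^{-1/2}(N_f+\one) = (1+N_f)^{1/2}$. Third, by the definition of $W_f$ in \eqref{e3.7}, $a^*(f)(1+N_f)^{-1/2}=W_f$. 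Using the first two facts, the left-hand side of (i) becomes $W_f O_{n-1}W_f^*\big(n^{-1/2}a^*(f)\bPhi_{n-1}\big) = n^{-1/2}W_f\,O_{n-1}\,(1+N_{f,n-1})^{1/2}\bPhi_{n-1}$, the intermediate step being legitimate because $(1+N_{f,n-1})^{1/2}\bPhi_{n-1}\in\cD_{n-1}$. Using the first and third facts, the right-hand side becomes $n^{-1/2}a^*(f)\,\sigma_{f,n-1}(O_{n-1})\bPhi_{n-1} = n^{-1/2}a^*(f)(1+N_{f,n-1})^{-1/2}\,O_{n-1}\,(1+N_{f,n-1})^{1/2}\bPhi_{n-1} = n^{-1/2}W_f\,O_{n-1}\,(1+N_{f,n-1})^{1/2}\bPhi_{n-1}$. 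The two sides coincide, which proves (i).

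Finally I would deduce (ii) from (i) by applying the latter with $O_{n-1}$ replaced by $\tilde O_{n-1}\doteq\sigma_{f,n-1}^{-1}(O_{n-1})$. One has to check that $\tilde O_{n-1}$ again satisfies the hypotheses of the lemma: its domain contains $\cD_{n-1}$ and is stable under the spectral projections of $N_{f,n-1}$ — those projections commute with the factor $(1+N_{f,n-1})^{-1/2}$ occurring in $\tilde O_{n-1}$, and $\cD_{n-1}$ is stable — and on $\cD_{n-1}$ one has $\sigma_{f,n-1}(\tilde O_{n-1}) = \sigma_{f,n-1}\circ\sigma_{f,n-1}^{-1}(O_{n-1}) = O_{n-1}$. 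Then (i), applied to $\tilde O_{n-1}$, reads $W_f\,\sigma_{f,n-1}^{-1}(O_{n-1})\,W_f^*\,\big(|f\rangle\otimes_s\bPhi_{n-1}\big) = |f\rangle\otimes_s O_{n-1}\bPhi_{n-1}$, which is exactly (ii).

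I do not anticipate any genuine obstacle here: all computations are short, and because $N_f$ is bounded with finite spectrum on each $\cF_{n-1}$, the only point requiring care is the domain bookkeeping — making sure each intermediate vector lies where it must, and that the compositions $\sigma_{f,n-1}\circ\sigma_{f,n-1}^{-1}$ really cancel on $\cD_{n-1}$. The hypothesis that $\cD_{n-1}$ be stable under the spectral projections of $N_{f,n-1}$ is precisely what makes this go through.
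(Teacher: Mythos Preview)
Your proof is correct and follows essentially the same route as the paper: establish that $(1+N_{f,n-1})^{\pm 1/2}$ preserve $\cD_{n-1}$ via the spectral projections, compute $W_f^*\big(|f\rangle\otimes_s\bPhi_{n-1}\big)=n^{-1/2}(1+N_{f,n-1})^{1/2}\bPhi_{n-1}$ and rewrite $W_f=a^*(f)(1+N_f)^{-1/2}$ to obtain (i), then deduce (ii) by substituting $\sigma_{f,n-1}^{-1}(O_{n-1})$ for $O_{n-1}$. The only difference is cosmetic --- you package the key identities as three numbered facts and add a line verifying that the substituted operator still meets the hypotheses, which the paper leaves implicit.
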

\begin{proof}
Noticing that the spectral decompositions
of $(\one_{n-1} + N_{f,n-1})^{\pm 1/2}$ are finite 
linear combinations of the spectral projections of $N_{f,n-1}$, the statement
concerning the domains of the similarity transformed operators follow.
For the proof of 
(i) we note that
$a(f) a^*(f) \, \bPhi_{n-1} = (\one_{n-1} + N_{f,n-1}) \, \bPhi_{n-1}$,  
hence 
$$
a(f) \ \big( | f \rangle \otimes_s \bPhi_{n-1} \big) =
a(f) \ n^{-1/2} a^*(f) \, \bPhi_{n-1} =
n^{-1/2} \, (\one_{n-1} + N_{f,n-1}) \, \bPhi_{n-1} \, .
$$
Thus, by the spectral properties of $N_{f,n-1}$, 
the vector $W_f^* \ \big( | f \rangle \otimes_s \bPhi_{n-1} \big) $
is also an element of~$\cD_{n-1}$. So one has  
\begin{align*}
W_f \, O_{n-1} \, & W_f^* \ \big( | f \rangle  \otimes_s \bPhi_{n-1} \big) \,
 \, = \, n^{-1/2} \, W_f \, O_{n-1} \, (\one_{n-1} + N_{f,n-1})^{1/2} \,
\bPhi_{n-1} \\
& = n^{-1/2} \, a^*(f) \, \sigma_{f,n-1}(O_{n-1}) \, \bPhi_{n-1} 
=  | f \rangle \otimes_s  
  \sigma_{f,n-1}(O_{n-1}) \, \bPhi_{n-1} \, ,
\end{align*}
proving the first statement. Statement 
(ii) follows from (i) if one replaces the operator $O_{n-1}$ by 
$\sigma_{f,n-1}^{-1}(O_{n-1})$, completing the proof.
\qed \end{proof}

We consider now the Hamiltonians $H_{n-1}$, $n \in \NN$. 
For them the spaces  
$$
\cD_{n-1} \doteq \underbrace{\cD(\RR^s) \otimes_s \cdots 
 \otimes_s \cD(\RR^s)}_{n-1} \subset \cF_{n-1}
$$ 
are domains of essential selfadjointness. In view of the
choice of the function $f$, it is also evident that these spaces
are stable under 
the action of the spectral projections of $N_{f,n-1}$. So 
the first part of the preceding lemma applies to 
$W_f \, H_{n-1} \, W_f^*$, giving the equality 
$$
W_f \, H_{n-1} \, W_f^* \upharpoonright
| f \rangle \otimes_s \cD_{n-1} 
= | f \rangle \otimes_s \big(\sigma_{f,n-1}(H_{n-1}) \upharpoonright \cD_{n-1}
\big)  \, .
$$
We compare now the operators $H_{n-1}$ and $\sigma_{f,n-1}(H_{n-1})$.
\begin{lemma} \label{la.2}
Let $n \in \NN$. Then 
$$
H_{n-1} - \sigma_{f,n-1}(H_{n-1}) 
= \check{A}_{f, n-1} + \check{B}_{f, n-1} \, .
$$
Here 
$\check{A}_{f. n-1} = \check{A}_f \upharpoonright \cF_{n-1}$, where  
$\check{A}_f = \big(\check{O}_f - \sigma_f(\check{O}_f) \big)$ 
and $\check{O}_f$ is the second  
quantization of one- and two-particle operators of finite rank;
so \mbox{$\check{A}_{f, n-1} \in \fK_{n-1}$}. If $n \geq 3$ one has 
$\check{B}_{f,n-1} = \check{B}_f \upharpoonright \cF_{n-1}$, 
where $\check{B}_f =  \big( \check{V}_f - \sigma_f(\check{V}_f) \big)$ 
and $\check{V}_f$ is the second quantization 
of the localized pair potential $V$ on $\cF_2$.
This localized potential is given by 
$$
\check{V}_{f,2} = 2 \, (E_{f,1} \otimes_s 1)  \, V  \, 
(E_{f,1}^\perp \otimes_s 1) + 2 \,
(E_{f,1}^\perp \otimes_s 1) \,   V  \, 
(E_{f,1} \otimes_s 1) \, ,
$$
where $E_{f,1}^\perp \doteq (1 - E_{f,1})$. So 
the restriction of the corresponding second quantized operator  
$\check{V}_{f, n-1} = \check{V}_f \upharpoonright \cF_{n-1}$ is 
$$
\check{V}_{f, n-1} 
= (n-1)(n-2) \, 
(\check{V}_{f,2} \otimes_s \underbrace{1 \otimes_s \cdots \otimes_s 1}_{n-3})
\, , 
$$
and the resulting operator $\check{B}_{f, n-1}$ is bounded. 
\end{lemma}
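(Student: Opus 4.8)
The plan is to exploit that the similarity transformation $\sigma_f$ of \eqref{ea.2} is conjugation by a bounded function of $N_f$ and therefore acts as the identity on every operator commuting with $N_f$. Hence $H_{n-1}-\sigma_{f,n-1}(H_{n-1})$ only picks up the pieces of $H_{n-1}$ that fail to commute with $N_f$. By the presentation \eqref{ea.1} of $H_{n-1}$ and the linearity of $\sigma_{f,n-1}$, I would treat the one-particle term $(n-1)\,(\bP_\kappa^2 \otimes_s 1\otimes_s\cdots\otimes_s 1)$ and the two-particle term $(n-1)(n-2)\,(V\otimes_s 1\otimes_s\cdots\otimes_s 1)$ separately; all domain questions for the unbounded $\bP_\kappa^2$ are dealt with, as in Lemma~\ref{la.1}, on the core $\cD_{n-1}$, which by the choice of $f$ is invariant under the spectral projections of $N_{f,n-1}$.

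For the one-particle term, I would split $\bP_\kappa^2$ on $\cF_1$ as $\bP_\kappa^2 = \big(E_{f,1}\bP_\kappa^2 E_{f,1} + E_{f,1}^\perp\bP_\kappa^2 E_{f,1}^\perp\big) + \check{O}_{f,1}$, where $\check{O}_{f,1} \doteq E_{f,1}\bP_\kappa^2 E_{f,1}^\perp + E_{f,1}^\perp\bP_\kappa^2 E_{f,1}$. Since $f\in\cD(\RR^s)$ we have $\bP_\kappa^2 f = (\bP^2+\kappa^2\bQ^2)f\in L^2(\RR^s)$, so $\check{O}_{f,1}$ extends to a bounded operator of rank at most two, whereas the first summand commutes with $E_{f,1}$; its second quantization therefore commutes with $N_f$ and drops out of the difference. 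What remains is the second quantization of $\check{O}_{f,1}$, restricted to $\cF_{n-1}$, minus its $\sigma_{f,n-1}$-image. Because $\check{O}_{f,1}$ has finite rank, this restriction lies in $\fC_1\otimes_s 1\otimes_s\cdots\otimes_s 1 \subset \fK_{n-1}$, cf.~\eqref{e3.1}; and since $\fK_{n-1}$ is a unital C*-algebra containing $N_{f,n-1}$, hence $(1+N_{f,n-1})^{\pm 1/2}$, it is stable under $\sigma_{f,n-1}$. This gives the part of $\check{A}_{f,n-1}$ originating from $\bP_\kappa^2$.

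For the two-particle term, I would decompose $V$ on $\cF_2$ along the spectral projections $P^{(0)},P^{(1)},P^{(2)}$ of $N_{f,2}$, onto the subspaces with $0$, $1$, or $2$ particles in the state $|f\rangle$. From $1_{\cF_1}=E_{f,1}+E_{f,1}^\perp$ on each factor one finds, on $\cF_2$, that $(E_{f,1}\otimes_s 1)=\tfrac12 P^{(1)}+P^{(2)}$ and $(E_{f,1}^\perp\otimes_s 1)=P^{(0)}+\tfrac12 P^{(1)}$; substituting this into the stated formula for $\check{V}_{f,2}$ and using $V=V^*$ yields
$$
\check{V}_{f,2} = \big(P^{(1)}VP^{(0)}+P^{(0)}VP^{(1)}\big) + P^{(1)}VP^{(1)} + 2\big(P^{(2)}VP^{(0)}+P^{(0)}VP^{(2)}\big) + \big(P^{(2)}VP^{(1)}+P^{(1)}VP^{(2)}\big),
$$
while the occupation-changing part of $V$ is $\sum_{i\neq j}P^{(i)}VP^{(j)}$. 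The occupation-preserving part $\sum_i P^{(i)}VP^{(i)}$ is fixed by $\sigma_f$; moreover the term $P^{(1)}VP^{(1)}$ inside $\check{V}_{f,2}$ commutes with $N_{f,2}$, so its second quantization commutes with $N_f$ and cancels in $\check{V}_f-\sigma_f(\check{V}_f)$. The only surviving mismatch is that $\check{V}_{f,2}$ carries the rank-one operators $P^{(2)}VP^{(0)}$ and $P^{(0)}VP^{(2)}$ with an extra factor. Reabsorbing these two finite-rank operators, together with $\check{O}_{f,1}$, into a single finite-rank one- and two-particle operator with second quantization $\check{O}_f$, and putting $\check{A}_f\doteq\check{O}_f-\sigma_f(\check{O}_f)$, $\check{V}_f$ = second quantization of $\check{V}_{f,2}$, $\check{B}_f\doteq\check{V}_f-\sigma_f(\check{V}_f)$, one arrives at $H_{n-1}-\sigma_{f,n-1}(H_{n-1})=\check{A}_{f,n-1}+\check{B}_{f,n-1}$. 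The restriction $\check{V}_{f,n-1}=\check{V}_f\upharpoonright\cF_{n-1}$ is the usual lift $(n-1)(n-2)\,(\check{V}_{f,2}\otimes_s 1\otimes_s\cdots\otimes_s 1)$ of a two-particle operator. Finally $\check{A}_{f,n-1}\in\fK_{n-1}$, exactly as in the one-particle step, now also using $\fC_2\otimes_s 1\otimes_s\cdots\otimes_s 1\subset\fK_{n-1}$; and $\check{B}_{f,n-1}$ is bounded because $V\in C_0(\RR^s)$ makes $V$, hence $\check{V}_{f,2}$, a bounded operator on $\cF_2$, so $\check{V}_{f,n-1}$ is bounded on $\cF_{n-1}$, and $N_{f,n-1}$ is bounded on $\cF_{n-1}$ so conjugation by $(1+N_{f,n-1})^{1/2}$ preserves boundedness. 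For $n\leq 2$ the two-particle term is absent and $\check{B}_{f,n-1}=0$.

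The step I expect to be the main obstacle is the two-particle bookkeeping: getting the numerical coefficients right when the symmetrized sandwich $2(E_{f,1}\otimes_s 1)V(E_{f,1}^\perp\otimes_s 1)+2(E_{f,1}^\perp\otimes_s 1)V(E_{f,1}\otimes_s 1)$ is resolved into the pieces $P^{(i)}VP^{(j)}$, then checking that every occupation-preserving remnant it contains is annihilated by $\sigma_f$ in the difference, and isolating precisely the finite-rank pieces that are over-counted and must be returned to $\check{A}_f$. The remaining ingredients — the functional calculus for $N_f$ and the structure \eqref{e3.1}--\eqref{e3.2} of the algebras $\fK_n$ — are routine.
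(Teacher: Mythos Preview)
Your proposal is correct and follows essentially the same approach as the paper: isolate the part of $H_{n-1}$ that commutes with $N_f$ (hence is fixed by $\sigma_{f,n-1}$), and split the remainder into a finite-rank piece $\check{O}_f$ and the localized potential $\check{V}_{f,2}$. The only differences are cosmetic bookkeeping. For the one-particle term the paper keeps all three pieces $E_{f,1}\bP_\kappa^2 E_{f,1}^\perp$, $E_{f,1}^\perp\bP_\kappa^2 E_{f,1}$, $E_{f,1}\bP_\kappa^2 E_{f,1}$ in $\check{O}_f$, whereas you put the (already $N_f$-commuting) diagonal piece $E_{f,1}\bP_\kappa^2 E_{f,1}$ with the commuting part; for the two-particle term the paper writes an explicit identity $V = (E_{f,1}^\perp\otimes_s E_{f,1}^\perp)V(E_{f,1}^\perp\otimes_s E_{f,1}^\perp) + (\text{finite rank}) + \check{V}_{f,2}$ using tensor-product projections, while you organize the same computation via the spectral projections $P^{(0)},P^{(1)},P^{(2)}$ of $N_{f,2}$. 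Your resulting $\check{O}_f$ differs from the paper's only by $N_f$-commuting summands, so the operators $\check{A}_f$ and $\check{B}_f$ agree.
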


\noindent \textbf{Remark:} Since the operator $\check{V}_{f,2}$
is not an element of $\cK_2$, it has to be treated separately. 
It will be crucial in the subsequent analysis that $\check{V}_{f,2}$
is effectively localized by the factor  $(E_{f,1} \otimes_s 1)$ next to $V$.
 
\begin{proof} According to relation \eqref{ea.1} we have 
$$
H_n = n \, (\bP_\kappa^2 \otimes_s \underbrace{1 \otimes_s
  \cdots \otimes_s 1}_{n-1})
+ n(n-1) \, (V \otimes_s \underbrace{1 \otimes_s \cdots \otimes_s 1}_{n-2}) \, .
$$ 
We decompose the operator $\bP_\kappa^2$, acting in $\cF_1$, into 
$$
\bP_\kappa^2 = E_{f,1}^\perp \, \bP_\kappa^2 \, E_{f,1}^\perp
+ E_{f,1}  \, \bP_\kappa^2 \, E_{f,1}^\perp 
+ E_{f,1}^\perp  \, \bP_\kappa^2 \, E_{f, 1}
+ E_{f,1}  \, \bP_\kappa^2  \, E_{f, 1} \, .
$$
This decomposition is meaningful since $| f \rangle$ lies in the 
domain of $\bP_\kappa^2$.  The first operator on the right hand side of 
this equality maps the orthogonal complement of the ray of 
$| f \rangle$ into itself; the three remaining operators are of 
rank one. Similarly, we decompose the pair potential $V$ 
on $\cF_2$ into 
\begin{align*}
V & = (E_{f,1}^\perp \otimes_s E_{f,1}^\perp) \, V \, 
(E_{f,1}^\perp \otimes_s E_{f,1}^\perp) 
- (E_{f,1} \otimes_s E_{f,1}) \, V \, (E_{f,1} \otimes_s E_{f,1}) \\
& - \, (E_{f,1} \otimes_s E_{f,1}) \, V \, ((1 - 2 E_{f,1}) \otimes_s 1)
- ((1 - 2 E_{f,1}) \otimes_s 1) \, V \, (E_{f,1} \otimes_s E_{f,1}) \\
& + \, 2 \, (E_{f,1} \otimes_s 1) \, V \, (E_{f,1}^\perp \otimes_s 1)
+ 2 \, (E_{f,1}^\perp \otimes_s 1) \, V \, (E_{f,1} \otimes_s 1) \, .
\end{align*}
The first operator on the right hand side of
this equality maps the orthogonal complement of 
$| f\rangle \otimes_s \cF_1 \subset \cF_2$ into itself. The 
second up to the fourth terms are operators of finite rank
due to the appearance of the factor ($E_{f,1} \otimes E_{f,1}$). 
The two terms in the last line form the 
localized pair potential $\check{V}_{f,2}$   
in the statement of the lemma. 
 
\medskip 
Tensoring these operators with unit operators $1$ and multiplying them 
with appropriate factors of $n$, we obtain a corresponding decomposition
of the operator
$\vartheta_{n-1} \doteq H_{n-1} - \sigma_{f,n-1}(H_{n-1})$.
Since the operators  
\begin{eqnarray*} 
& E_{f,1}^\perp \, \bP^2 \, E_{f,1}^\perp \otimes_s
\underbrace{1 \otimes_s  \cdots \otimes_s 1}_{n-2} \, , \\  
& (E_{f,1}^\perp \otimes_s E_{f,1}^\perp) \,  V \, (E_{f,1}^\perp \otimes_s 
E_{f,1}^\perp) \otimes_s \underbrace{1 \otimes_s \cdots \otimes_s 1}_{n-3} 
\end{eqnarray*}
commute with $N_{f,n-1}$ and consequently 
stay fixed under the action of the similarity transformation $\sigma_{f,n-1}$,
they do not contribute to $\vartheta_{n-1}$ and can be omitted
from $H_{n-1}$. The remaining terms in $H_{n-1}$ consist of two types.
The first one is, for any $n \in \NN$,  
a sum of fixed one- and two-particle operators of finite rank 
which are tensored with unit operators and multiplied by  
appropriate factors of~$n$. Denoting by $\check{O}_f$ the second
quantization of these one- and two-particle operators,
it follows from Eqn.~\eqref{e3.5} that
$\check{O}_{f, n-1} = \check{O}_f \upharpoonright \cF_{n-1} \in \fK_{n-1}$.
Since $(\one_{n-1} + N_{f,n-1})^{\pm 1/2} \in \fK_{n-1}$ it is
also clear that $\sigma_{f,n-1}(\check{O}_{f, n-1}) \in \fK_{n-1}$.

\medskip
The second type of terms in~$H_{n-1}$ which contribute to 
$\vartheta_{n-1}$ arise from 
the second quantization $\check{V}_f$ of the 
localized pair potential $\check{V}_{f,2}$.
The resulting operators 
$\check{V}_{f,n-1} = \check{V}_f \upharpoonright \cF_{n-1}$ 
and their similarity transformed images $\sigma_{f,n-1}(\check{V}_{f,n-1})$
are bounded since the pair potential $V$ and the  
operators $(\one_{n-1} + N_{f,n-1})^{\pm 1/2}$ are bounded. \qed
\end{proof}

Next, we compare the operators \
$H_n \upharpoonright  |f\rangle \otimes_s \cD_{n-1} $
and \ $ |f \rangle \otimes_s \big( H_{n-1} \upharpoonright \cD_{n-1} \big) $.

\begin{lemma} \label{la.3} 
Let $n \in \NN$. One has in $\cF_n$
the equality (pointwise on $\cD_{n-1}$)
$$ 
H_n \upharpoonright |f\rangle \otimes_s \cD_{n-1}
- | f \rangle \otimes_s \big( H_{n-1} \upharpoonright  \cD_{n-1} \big)
= \hat{A}_{f,n} + \hat{B}_{f,n} \, .
$$ 
Here $\hat{A}_{f,n} = 
\hat{A}_f \upharpoonright \cF_n \in \fK_n$, where    
$\hat{A}_f = \hat{O}_f N_f^{-1} E_f$ and $\hat{O}_f$ is
the second quanti\-zation of a 
one-particle operator of rank one. If~$n \geq 2$ one has 
\mbox{$\hat{B}_{f,n} = \hat{B}_f \upharpoonright \cF_n$}, where 
$\hat{B}_f = \hat{V}_f  N_f^{-1} E_f$ and $\hat{V}_f$ is the 
second quantization of the localized pair potential 
$\, \hat{V}_{f,2} = V \, (E_{f,1} \otimes_s 1)$. 
Its restriction 
$\ \hat{V}_{f,n} = \hat{V}_{f} \upharpoonright \cF_n$ 
is given by 
$\hat{V}_{f,n} = n(n-1) \, \big( \hat{V}_{f,2} \otimes_s 
\underbrace{1 \otimes_s \cdots \otimes_s 1}_{n-2} \big) $, 
so the operator 
$\hat{B}_{f,n} = \hat{V}_{f,n} \, N_{f,n}^{-1} E_{f,n}$ is \\[-2mm] 
bounded. 
\end{lemma}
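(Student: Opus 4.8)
The plan is to reduce the asserted operator identity to a single commutator computation. Using $|f\rangle\otimes_s\bPsi = n^{-1/2}\,a^*(f)\,\bPsi$, valid for $\bPsi\in\cF_{n-1}$ and recalled in Subsection~\ref{ssa.1}, I would apply $H$ to $|f\rangle\otimes_s\bPhi_{n-1} = n^{-1/2}a^*(f)\bPhi_{n-1}$ and write $H a^*(f) = a^*(f)\,H + [H, a^*(f)]$. The first term yields $n^{-1/2}a^*(f)\,H_{n-1}\bPhi_{n-1} = |f\rangle\otimes_s(H_{n-1}\bPhi_{n-1})$, so that, pointwise on $\cD_{n-1}$,
$$
H_n\upharpoonright|f\rangle\otimes_s\cD_{n-1}\;-\;|f\rangle\otimes_s\big(H_{n-1}\upharpoonright\cD_{n-1}\big)\;=\;n^{-1/2}\,[H, a^*(f)]\upharpoonright\cD_{n-1}\,.
$$
The right hand side is well defined on $\cD_{n-1}$ because $\bP_\kappa^2 f\in\cD(\RR^s)$ and $V$ is bounded. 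It thus remains to evaluate $[H, a^*(f)]$ and to identify it, via the identity in the last paragraph, as $(\hat A_f+\hat B_f)$ acting on $|f\rangle\otimes_s\cD_{n-1}$.

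I would compute the commutator term by term in \eqref{e5.1}. As the one-particle part of $H$ is the second quantization of $\bP_\kappa^2$, the canonical commutation relations give that its commutator with $a^*(f)$ is $a^*(\bP_\kappa^2 f)$; for the interaction term, $[a(\bx)a(\by), a^*(f)] = f(\by)a(\bx)+f(\bx)a(\by)$, the evenness $V(\bx-\by)=V(\by-\bx)$, and a change of integration variables give
$$
[H, a^*(f)]\;=\;a^*(\bP_\kappa^2 f)\;+\;2\int\! d\bx\int\! d\by\;\;a^*(\bx)\,a^*(\by)\,V(\bx-\by)\,f(\by)\,a(\bx)\,.
$$
I then set $\hat O_f\doteq a^*(\bP_\kappa^2 f)\,a(f)$, which is the second quantization of the rank-one operator $|\bP_\kappa^2 f\rangle\langle f|$ on $\cF_1$, and let $\hat V_f$ be the second quantization of $\hat V_{f,2}=V\,(E_{f,1}\otimes_s 1)$. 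A kernel computation identifies $\hat V_f$ with a normal-ordered expression of the form $\big(\int\! d\bx\int\! d\by\;a^*(\bx)a^*(\by)V(\bx-\by)f(\by)a(\bx)\big)\,a(f)$, so that the interaction part of $[H, a^*(f)]$ is, up to the trailing annihilator $a(f)$ and a numerical symmetrization factor, precisely $\hat V_f$. Matching this second quantization of the merely flip-symmetric (non-self-adjoint) localized operator $V\,(E_{f,1}\otimes_s 1)$ against the normal-ordered expression produced by the commutator, with all combinatorial factors, is the step I expect to be the main obstacle; the remaining manipulations are routine uses of the commutation relations.

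The bridge to $\hat A_f$ and $\hat B_f$ is the identity $a(f)\,N_f^{-1}E_f\,\big(|f\rangle\otimes_s\bPhi_{n-1}\big) = n^{-1/2}\,\bPhi_{n-1}$, which follows from $|f\rangle\otimes_s\bPhi_{n-1} = n^{-1/2}a^*(f)\bPhi_{n-1}$, from $a(f)a^*(f)=N_f+\one$, and from the fact that $|f\rangle\otimes_s\bPhi_{n-1}$ lies in the range $E_f\cF_n$ of $a^*(f)$. Inserting it into $\hat A_f = \hat O_f\,N_f^{-1}E_f = a^*(\bP_\kappa^2 f)\big(a(f)N_f^{-1}E_f\big)$ and into $\hat B_f = \hat V_f\,N_f^{-1}E_f$ removes the prefactor $|f\rangle$ and reproduces exactly $n^{-1/2}a^*(\bP_\kappa^2 f)\bPhi_{n-1}$, respectively $n^{-1/2}$ times the interaction commutator, establishing the claimed equality on $\cD_{n-1}$. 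For the remaining assertions, on $\cF_n$ one has $\hat O_f\upharpoonright\cF_n = n\,\big(|\bP_\kappa^2 f\rangle\langle f|\otimes_s 1^{\otimes(n-1)}\big)\in\fC_1\otimes_s 1^{\otimes(n-1)}\subset\fK_n$, while $N_f^{-1}E_f\upharpoonright\cF_n$ is a bounded function of $N_{f,n}$ and hence lies in $\fK_n$ by Subsection~\ref{ssa.1}; since $\fK_n$ is an algebra, $\hat A_{f,n}\in\fK_n$. Boundedness of $\hat B_{f,n}$ is immediate since $\hat V_{f,2}=V\,(E_{f,1}\otimes_s 1)$ is bounded (both factors are), hence so is its second quantization restricted to the fixed-particle-number space $\cF_n$, and $N_{f,n}^{-1}E_{f,n}$ has norm at most one.
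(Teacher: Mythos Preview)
Your argument is correct and reaches the same conclusion as the paper, but by a genuinely different route. The paper works in the particle picture: it evaluates the difference directly on product vectors $|f\rangle\otimes_s|f_1\rangle\otimes_s\cdots\otimes_s|f_{n-1}\rangle$ built from an orthonormal system containing $f$, reads off the image vectors, and then identifies the operator producing them as $\hat O_{f,n}\,N_{f,n}^{-1}E_{f,n}$ and $\hat V_{f,n}\,N_{f,n}^{-1}E_{f,n}$ by inspection of the tensor structure. You instead stay in the field picture: the relation $|f\rangle\otimes_s\bPhi_{n-1}=n^{-1/2}a^*(f)\bPhi_{n-1}$ converts the whole problem into the single commutator $n^{-1/2}[H,a^*(f)]$, which you then match against $\hat A_f+\hat B_f$ via the clean inversion identity $a(f)\,N_f^{-1}E_f\,\big(|f\rangle\otimes_s\bPhi_{n-1}\big)=n^{-1/2}\bPhi_{n-1}$.

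Your approach buys economy and generality: no orthonormal basis is needed, the kinetic term is handled in one line, and the argument applies uniformly to all $\bPhi_{n-1}\in\cD_{n-1}$ without reduction to product vectors. The paper's approach buys transparency in the bookkeeping: by working on explicit tensors it makes the identification of $\hat V_{f,2}=V(E_{f,1}\otimes_s 1)$ and the weight factor $n(n-1)$ visible from the start, whereas in your route this is exactly the ``main obstacle'' you flag --- unwinding the kernel of $V(E_{f,1}\otimes_s 1)$ in the four-fold second-quantization integral and collapsing it (using the commutativity of the $a(\by_j)$) to the normal-ordered expression $\big(\int a^*a^*Vf\,a\big)\,a(f)$ with the correct overall constant. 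That step is straightforward but does require care; once done, both proofs coincide at the level of $\hat A_{f,n}\in\fK_n$ and the boundedness of $\hat B_{f,n}$, which you justify exactly as the paper does.
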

\begin{proof}
It suffices to establish the statement for vectors of the special form 
$$
\bPhi_{n-1} = | f_1 \rangle \otimes_s \cdots \otimes_s | f_{n-1} \rangle \,
\in \, \cD_{n-1} \, ,
$$ 
where 
$f_1, \dots, f_{n-1} \in \cD(\RR^s)$ are members of some orthonormal 
basis in $L^2(\RR^s)$ which includes $f$. 
Making use of the fact that the  
Hamiltonians are symmetrized sums of the 
one- and two-particle operators, given above, we obtain 
\begin{align*}
& H_n \, \big(
|f \rangle \otimes_s | f_1 \rangle \otimes_s \cdot \!  \cdot \otimes_s 
| f_{n-1} \rangle \big)  \, - \, |f \rangle \otimes_s H_{n-1} \, 
\big( | f_1 \rangle \otimes_s \cdot \!  \cdot \otimes_s 
| f_{n-1} \rangle \big) \\
& = |\bP_\kappa^2 f \rangle \otimes_s  | f_1 \rangle \otimes_s \cdot \!   
\cdot \otimes_s | f_{n-1} \rangle \!
+ \! \sum_{i=1}^{n-1} \, ( V \, |f \rangle \otimes_s |f_i\rangle )
\otimes_s |f_1 \rangle \otimes_s \cdot \! \cdot \overset{i}{\vee} 
\cdot \! \cdot \otimes_s |f_{n-1} \rangle \, ,
\end{align*}

\vspace*{-2mm}  \noindent 
where the symbol $\overset{i}{\vee}$ indicates the omission of 
the single particle component $|f_i \rangle$. Thus we must determine
the operator on $\cF_n$ which maps the initial vectors 
\mbox{$\,|f \rangle \otimes_s | f_1 \rangle \otimes_s \cdot \cdot \otimes_s 
| f_{n-1} \rangle$} 
to the image vectors on the right hand side of the
preceding equality. Recalling that 
$f, f_1, \dots , f_{n-1}$ are members of some orthonormal basis, we have   
\begin{align*}
(\bP_\kappa^2 E_{f,1} \otimes_s 
\underbrace{1 \otimes_s \cdot \cdot \otimes_s 1}_{n-1} ) \,  \big(
|f \rangle  & \otimes_s | f_1 \rangle \otimes_s \cdots \otimes_s | f_{n-1} 
\rangle \big)  \\[-2mm] 
& = n_f/n \,  
|\bP_\kappa^2 f \rangle \otimes_s | f_1 \rangle \otimes_s \cdots \otimes_s 
| f_{n-1} \rangle  \, ,
\end{align*}

\noindent where $n_f$ is the number of factors $| f \rangle$ appearing in 
the initial vector. This equality holds for arbitrary components $\bPhi_{n-1}$ 
in $| f \rangle \otimes_s \bPhi_{n-1}$ if one replaces 
the number $n_f$ by the operator $N_{f, n}$.
Moreover, since the initial vector is
an element of the space $| f \rangle \otimes_s \cF_{n-1}$, it stays 
constant if
one multiplies it by the projection $E_{f, n}$. This gives     
\begin{align*}
|\bP_\kappa^2 f \rangle & \otimes_s  
| f_1 \rangle \otimes_s \cdots \otimes_s | f_{n-1} \rangle  \\[2mm]
& = n  \, 
(\bP^2 E_{f,1} \otimes_s \underbrace{1 \otimes_s \cdot \cdot \otimes_s 1}_{n-1}) 
 \
N_{f,n}^{-1} E_{f,n} \ 
\big( |f \rangle \otimes_s | f_1 \rangle \otimes_s 
\cdots \otimes_s | f_{n-1} \rangle \big) \\
& = \hat{O}_{f,n} \,  N_{f,n}^{-1} E_{f,n} \,   
\big( |f \rangle \otimes_s | f_1 \rangle \otimes_s 
\cdots \otimes_s | f_{n-1} \rangle \big) \, , 
\end{align*}
where $\hat{O}_{f,n} = \hat{O}_f \upharpoonright \cF_n$ 
and $\hat{O}_f$ is the second quantization of
the one-particle operator $\bP_\kappa^2 E_{f,1}$ on $\cF_1$, having rank one.
So the operator appearing on the right hand side of the second equality 
is the restriction of 
$\hat{A}_f \doteq \hat{O}_f \, N_f^{-1} E_f$ to $\cF_n$, as stated in
the lemma. In a similar manner 
\begin{align*}
& \sum_{i=1}^{n-1} \, (V \, |f \rangle \otimes_s |f_i\rangle ) 
\otimes_s |f_1 \rangle \otimes_s \cdot \cdot \overset{i}{\vee} 
\cdot \cdot \otimes_s |f_{n-1} \rangle \\
& = n(n-1) (V (E_{f,1} \otimes_s 1)  \otimes_s 
\underbrace{1 \otimes_s \cdots \otimes_s 1}_{n - 2} ) \, N_{f,n}^{-1} E_{f,n}
\, (|f \rangle \otimes_s | f_1 \rangle \otimes_s \cdot 
\cdot \otimes_s | f_{n-1} \rangle) \\[-1mm]
& = \hat{V}_{f, n} \,  N_{f,n}^{-1} E_{f,n} \, 
\big( |f \rangle \otimes_s 
| f_1 \rangle \otimes_s \cdot \cdot \otimes_s | f_{n-1} \rangle \big) \, .
\end{align*}
The operator appearing on the right hand side of the second
equality is the restriction $\hat{B}_{f,n}$ of 
$\hat{B}_f \doteq \hat{V}_f \, N_f^{-1} E_f$ to $\cF_n$. 
Since the two-body potential is bounded, 
$\hat{B}_{f,n}$ is bounded, completing the proof. \qed 
\end{proof}

In the last technical lemma of this subsection,
which will also be used further below, we consider  
the adjoint action $\beta_g \doteq \ad \, W_g$ 
on the algebra of bounded operators on $\cF$, which is induced by
the isometries $W_g \in \obfF$ for arbitrary 
normalized $g \in L^2(\RR^s)$, cf.\ Lemma~\ref{l5.1}.
The restrictions of these maps to the algebras 
of bounded operators $\cB(\cF_{n-1})$ 
on $\cF_{n-1}$, having range in $\cB(\cF_n)$, $n \in \NN$, 
are denoted by
\begin{equation} \label{ea.3}
\beta_{g,n} \doteq \beta_g  \upharpoonright \cB(\cF_{n-1}) 
= \ad W_f \upharpoonright \cB(\cF_{n-1}) \, .
\end{equation}
The norm of
arbitrary linear maps $\beta_n : \cB(\cF_{n-1}) \rightarrow 
\cB(\cF_n) $ is denoted by $\| \beta_n \|$, $n \in \NN$. 

\begin{lemma} \label{la.4}
Let $n \in \NN$ and let $g \in L^2(\RR^s)$ be
normalized. Then one has the 
inclusion \mbox{$\beta_{g,n}(\fK_{n-1}) \subset \fK_n$}. 
Moreover, there exists some constant
$c_n$ such that for any pair of normalized elements 
$g_1, g_2 \in L^2(\RR^s)$  
$$
\| \beta_{g_1,n} - \beta_{g_2,n} \| \leq c_n \, \|g_1 - g_2 \|_2  \, .
$$
\end{lemma}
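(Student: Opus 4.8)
The plan is to deduce the lemma from material already at hand, principally the identification $\obfA \upharpoonright \cF_m = \fK_m$ established in \cite{Bu1} together with the norm estimate of Lemma~\ref{l5.1}; no genuinely new estimate is needed, the statement being a repackaging of this material tailored to the appendix.

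For the inclusion $\beta_{g,n}(\fK_{n-1}) \subset \fK_n$ I first fix a normalized $f \in \cD(\RR^s)$. Given $K_{n-1} \in \fK_{n-1} = \obfA\upharpoonright\cF_{n-1}$, I pick some $A \in \obfA$ with $A\upharpoonright\cF_{n-1} = K_{n-1}$. Since $W_f^* W_f = 1$, one has $W_g A W_g^* = (W_g W_f^*)\,(W_f A W_f^*)\,(W_f W_g^*)$, where $W_f A W_f^* = \rho_f(A) \in \obfA$ by \eqref{e4.1} and the following discussion of the morphisms $\rho_f$, and $W_g W_f^*$ together with its adjoint $W_f W_g^*$ lie in $\obfA$ for every normalized $g \in L^2(\RR^s)$ by the remark following Lemma~\ref{l5.1} (which in particular also yields $W_g, W_g^* \in \obfF$ for such $g$). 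Hence $W_g A W_g^* \in \obfA$. Because $W_g^*$ lowers the particle number by one, this operator maps $\cF_n$ into $\cF_n$, and there it agrees with $W_g\,(A\upharpoonright\cF_{n-1})\,W_g^* = W_g K_{n-1} W_g^* = \beta_{g,n}(K_{n-1})$; in particular the latter depends only on $K_{n-1}$, not on the chosen extension $A$. Therefore $\beta_{g,n}(K_{n-1}) \in \obfA\upharpoonright\cF_n = \fK_n$.

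For the Lipschitz bound it suffices to estimate $\|\beta_{g_1,n}(X) - \beta_{g_2,n}(X)\|$ for $X \in \cB(\cF_{n-1})$ with $\|X\| \le 1$. Writing, as operators on $\cF_n$,
$$
\beta_{g_1,n}(X) - \beta_{g_2,n}(X) = W_{g_1}\,X\,(W_{g_1}^* - W_{g_2}^*) + (W_{g_1} - W_{g_2})\,X\,W_{g_2}^* \, ,
$$
and using $\|W_{g_i}\| = 1$, $\|X\| \le 1$, together with the identity $\|(W_{g_1} - W_{g_2})\upharpoonright\cF_{n-1}\| = \|(W_{g_1}^* - W_{g_2}^*)\upharpoonright\cF_n\|$ (these restrictions being mutually adjoint), I bound both summands on $\cF_n$ by $\|(W_{g_1}^* - W_{g_2}^*)\upharpoonright\cF_n\|$. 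Lemma~\ref{l5.1} supplies $\|(W_{f_1}^* - W_{f_2}^*) P_n\| \le c_n\|f_1 - f_2\|_2$ for normalized $f_1, f_2 \in \cD(\RR^s)$ with $c_n$ independent of the functions; approximating normalized $g_i \in L^2(\RR^s)$ strongly by normalized test functions and invoking the norm convergence of $W_{f_k}^* P_n$ from that lemma, the same estimate persists for all normalized $g_1, g_2 \in L^2(\RR^s)$. Since the projection onto $\cF_n$ is dominated by $P_n$, this gives $\|\beta_{g_1,n}(X) - \beta_{g_2,n}(X)\| \le 2 c_n\|g_1 - g_2\|_2$, i.e.\ the assertion after relabeling the constant.

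The main point requiring care — and there is no deep obstacle here — is the uniform passage from test functions, for which Lemma~\ref{l5.1} and the construction of $W_f$ were phrased, to arbitrary normalized elements of $L^2(\RR^s)$; this is handled throughout by the fact that the constants furnished by Lemma~\ref{l5.1} do not depend on the chosen functions, so the estimates survive the $L^2$-limit. Apart from the routine bookkeeping of which subspaces $W_g$ and $W_g^*$ raise and lower the particle number on, the argument is a direct application of the structure of $\obfA$ and $\obfF$ developed in Secs.~3--5.
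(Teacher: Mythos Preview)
Your proof is correct and follows essentially the same route as the paper's. For the inclusion the paper argues slightly more directly---it invokes that $W_g,\,W_g^*\in\obfF$ (from the final statement of Lemma~\ref{l5.1}) so that $W_g A W_g^*$ is a gauge-invariant element of $\obfF$, hence lies in $\obfA$ by Lemma~\ref{l3.2}---whereas you route the argument through the factorization $W_g A W_g^* = (W_g W_f^*)\,\rho_f(A)\,(W_f W_g^*)$; both are equivalent. For the Lipschitz estimate the two arguments are identical, and your explicit remark about extending the bound of Lemma~\ref{l5.1} from $\cD(\RR^s)$ to normalized $L^2$-elements by approximation is a point the paper leaves implicit.
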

\begin{proof}
According to \cite[Lemma~3.3]{Bu1} there exists for given $K_{n-1} \in \fK_{n-1}$
some observable $A_{n-1} \in \obfA$ such that 
$A_{n-1} \upharpoonright \cF_{n-1} = K_{n-1}$. It follows from Lemma~\ref{l5.1}
and \ref{l3.2} that 
$W_g A_{n-1} W_g^* \in \obfA$. Hence, applying 
the results in \cite[Lem.~3.3]{Bu1} another time, one obtains 
$$
\beta_{g,n}(K_{n-1}) = 
(W_g A_{n-1} W_g^*) \upharpoonright \cF_n \, \in \,  
\obfA \upharpoonright \cF_n = \fK_n \, ,
$$
as claimed. The continuity of the maps 
is a consequence of Lemma~\ref{l5.1}, which leads to the 
estimate 
$$
\| \beta_{g_1,n} - \beta_{g_2,n} \|
\leq 2 \, \| \big( W_{g_1}^* - W_{g_2}^* \big) P_n \|  
\leq c_n \, \|g_1 - g_2\|_2 \, ,
$$
completing the proof. \qed \end{proof}

We have gathered now the information needed for the description of 
the structure of the operator 
$\big(H - \beta_f(H)\big) \, E_f$. 

\begin{proposition} \label{pa.5} 
Let $n \in \NN_0$, \ then 
$$ \big(H - \beta_f(H)\big) \, E_f \upharpoonright \cF_n
=  A_{f, n} + B_{f, n} \, .
$$
Here $A_{f, n} = A_f \upharpoonright \cF_n$, where
$ A_f = \hat{A}_f + \beta_f \, \scirc \, \sigma_f^{-1}(\check{A}_f)$
and the operators $\check{A}_f$, 
$\hat{A}_f$ were defined in Lemmas~\ref{la.2} and \ref{la.3}, respectively.  
One has $A_{f, n} \in \fK_n$. 
In a similar manner, $B_{f, n} = B_f \upharpoonright \cF_n$, where
$B_f = \hat{B}_f + \beta_f \, \scirc \, \sigma_f^{-1}(\check{B}_f)$    
and the operators $\check{B}_f$,~$\hat{B}_f$ were likewise defined in these 
two lemmas. The operator $B_{f, n}$ is bounded.  
\end{proposition}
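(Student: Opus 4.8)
The plan is to interpolate the operator $|f\rangle \otimes_s H_{n-1}$ between $H_n$ and $W_f H_{n-1} W_f^*$ and then assemble the decomposition from Lemmas~\ref{la.1}--\ref{la.3}. Starting from Eqn.~\eqref{e5.3}, one has $\big(H - \beta_f(H)\big)\, E_f \upharpoonright \cF_n = (H_n - W_f H_{n-1} W_f^*)\, E_{f,n}$, and I would split this, as an identity valid pointwise on the dense domain $|f\rangle \otimes_s \cD_{n-1}$, into
$$
\big(H_n - |f\rangle \otimes_s H_{n-1}\big)\, E_{f,n} \ + \ \big(|f\rangle \otimes_s H_{n-1} - W_f H_{n-1} W_f^*\big)\, E_{f,n} \, .
$$
The first summand is precisely the content of Lemma~\ref{la.3}, so it equals $\hat{A}_{f,n} + \hat{B}_{f,n}$ with $\hat{A}_{f,n} = \hat{A}_f \upharpoonright \cF_n \in \fK_n$ and $\hat{B}_{f,n}$ bounded (and zero for $n \leq 1$). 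For the second summand I would invoke Lemma~\ref{la.1}(i), which identifies $W_f H_{n-1} W_f^*$ on $|f\rangle \otimes_s \cD_{n-1}$ with $|f\rangle \otimes_s \sigma_{f,n-1}(H_{n-1})$, so that the summand becomes $|f\rangle \otimes_s \big(H_{n-1} - \sigma_{f,n-1}(H_{n-1})\big)$; by Lemma~\ref{la.2} this equals $|f\rangle \otimes_s \big(\check{A}_{f,n-1} + \check{B}_{f,n-1}\big)$.

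Next I would read Lemma~\ref{la.1}(ii) in the opposite direction, which re-expresses $|f\rangle \otimes_s \check{A}_{f,n-1}$ and $|f\rangle \otimes_s \check{B}_{f,n-1}$ on $E_{f,n}\cF_n$ as the restrictions to $\cF_n$ of $\beta_f \scirc \sigma_f^{-1}(\check{A}_f)$ and $\beta_f \scirc \sigma_f^{-1}(\check{B}_f)$ respectively, the projection $E_f$ being absorbed into $W_f W_f^*$. Collecting the two summands then yields exactly $A_f = \hat{A}_f + \beta_f \scirc \sigma_f^{-1}(\check{A}_f)$ and $B_f = \hat{B}_f + \beta_f \scirc \sigma_f^{-1}(\check{B}_f)$. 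It remains to read off the two assertions. For $A_{f,n} \in \fK_n$: the term $\hat{A}_{f,n}$ lies in $\fK_n$ by Lemma~\ref{la.3}; and since $\check{A}_{f,n-1} \in \fK_{n-1}$ by Lemma~\ref{la.2}, since $\sigma_{f,n-1}$ maps $\fK_{n-1}$ into itself because its implementing operators $(\one_{n-1} + N_{f,n-1})^{\pm 1/2}$ are bounded functions of $N_{f,n-1}$ and hence lie in $\fK_{n-1}$, and since $\beta_{f,n}(\fK_{n-1}) \subset \fK_n$ by Lemma~\ref{la.4}, also $\beta_f \scirc \sigma_f^{-1}(\check{A}_f) \upharpoonright \cF_n \in \fK_n$. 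Boundedness of $B_{f,n}$ is then immediate: $\hat{B}_{f,n}$ and $\check{B}_{f,n-1}$ are bounded by Lemmas~\ref{la.3} and \ref{la.2}, $\sigma_{f,n-1}^{-1}$ is conjugation by a bounded operator with bounded inverse and preserves boundedness, and $\beta_{f,n}$ is norm-contractive, being conjugation by the isometry $W_f$.

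I expect the main obstacle to be the careful bookkeeping of domains and of the projection $E_{f,n}$. The intermediate identities in Lemmas~\ref{la.1}--\ref{la.3} are asserted only pointwise on $|f\rangle \otimes_s \cD_{n-1}$ and involve the unbounded operator $\bP_\kappa^2$, so the real work is to confirm that the unbounded contributions surviving in each difference are exactly those packaged into $\hat{O}_f$ and $\check{O}_f$, that they cancel against their $\sigma_f$-transforms up to the stated (bounded, finite-rank) remainders, and that the localizing factors $N_f^{-1} E_f$, $E_{f,1}$ and $W_f \, \cdot \, W_f^*$ appearing in $\hat{A}_f, \check{A}_f, \hat{B}_f, \check{B}_f$ indeed encode the restriction to $E_{f,n}\cF_n$ correctly. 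The boundary values $n = 0, 1$ I would dispose of directly --- for $n = 0$ one has $E_{f,0} = 0$ and both sides vanish, and for $n = 1$ only the rank-one term $\hat{A}_{f,1} = \bP_\kappa^2 E_{f,1}$ survives --- with the conventions $\fK_m = \{0\}$, $\cF_m = \{0\}$ for $m < 0$ handling the remaining degenerate terms.
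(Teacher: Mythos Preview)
Your proposal is correct and follows essentially the same route as the paper's proof: both interpolate $|f\rangle \otimes_s H_{n-1}$ between $H_n$ and $W_f H_{n-1} W_f^*$, identify the two resulting differences via Lemmas~\ref{la.3} and~\ref{la.2} (using Lemma~\ref{la.1}(i)), then invoke Lemma~\ref{la.1}(ii) to rewrite the second piece as $\beta_f \scirc \sigma_f^{-1}(\check{A}_f + \check{B}_f)$, and conclude $A_{f,n}\in\fK_n$ from Lemma~\ref{la.4} together with $(\one_{n-1}+N_{f,n-1})^{\pm 1/2}\in\fK_{n-1}$. Your additional remarks on domains and the degenerate cases $n=0,1$ are sound elaborations the paper leaves implicit.
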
 
\begin{proof}
Recalling that $E_f \cF_n = | f \rangle \otimes_s \cF_{n-1}$, one obtains
for $\bPhi_{n-1} \in \cD_{n-1}$
\begin{align*}
\big(H_n - \beta_f(H_{n-1})\big) \, ( | f\rangle \otimes_s \bPhi_{n-1} ) 
& = \big(H_n \, (|f \rangle \otimes_s \bPhi_{n-1}) - 
|f \rangle \otimes_s  H_{n-1} \, \bPhi_{n-1} \big) \\
& + |f\rangle\otimes_s\big(H_{n-1}-\sigma_{f,n-1}(H_{n-1})\big) \bPhi_{n-1} \, .
\end{align*}
The first term on the right hand side of this equality coincides  
according to Lemma~\ref{la.3} 
with $(\hat{A}_{f,n} + \hat{B}_{f,n}) \, ( | f \rangle \otimes_s \bPhi_{n-1})$, 
where $\hat{A}_{f,n} \in \fK_n$ and
$\hat{B}_{f,n}$ is bounded. \  
In the second term we made use of Lemma~\ref{la.1}(i) according to which
$
\beta_f(H_{n-1}) \ \big( | f\rangle \otimes_s \bPhi_{n-1} \big) 
=  | f\rangle \otimes_s \sigma_{f,n-1}(H_{n-1}) \, \bPhi_{n-1}  \, .
$
Hence this second term can be presented in the 
form \ $| f \rangle \otimes_s (\check{A}_{f,n-1} + \check{B}_{f,n-1}) 
\bPhi_{n-1}$, as was shown in Lemma~\ref{la.2}. According to 
Lemma~\ref{la.1}(ii),  the latter  vector coincides with 
the image of $| f \rangle \otimes \bPhi_{n-1}$ under the action of 
$\beta_f \, \scirc \, \sigma_f^{-1}(\check{A}_f + \check{B}_f) 
\upharpoonright \cF_n$.

\medskip
Turning to the proof that $A_{f,n} \in \fK_n$, we note that 
$$
\sigma_{f,n-1}^{-1}(\check{A}_{f,n-1}) = (\one_{n-1} + N_{f,n-1})^{1/2}
\check{A}_{f,n-1} (\one_{n-1} + N_{f,n-1})^{-1/2} \in \fK_{n-1} \, .
$$
It therefore follows from the preceding lemma that
$$
\beta_f \, \scirc \, \sigma_f^{-1}(\check{A}_f) \upharpoonright \cF_n
= \beta_{f,n} \, \scirc \, \sigma_{f,n-1}^{-1}(\check{A}_{f,n-1}) \in \fK_n \, .
$$ 
Since also $\hat{A}_{f,n} = \hat{A}_f \upharpoonright \cF_n \in \fK_n$, 
we obtain  $A_{f,n} \in \fK_n$. That $B_{f,n}$ is bounded is 
apparent, completing the proof. \qed
\end{proof}

\subsection{Dyson expansions with values in $\fK_n$} \label{ssa.3} \hfill

\noindent We turn now to the analysis of the operator function 
$t \mapsto \Gamma_f(t) E_f$, 
defined in Eqn.~\eqref{e5.2}. It is differentiable in $t$ in the sense of
sesquilinear forms between vectors in the domains of $H$, respectively
$W_f H W_f^*$. The derivatives are given by
\begin{align*}
\frac{d}{dt} \, \Gamma_f(t) \, E_f & =
i \, e^{itH}(H - W_f H W_f^*) \, e^{-it W_f H W_f^*} \, E_f \\
& = i \, e^{itH}(H - W_f H W_f^*) E_f \, e^{-it W_f H W_f^*} \, E_f \\
& = i \, e^{itH}(H - W_f H W_f^*) E_f \, e^{-itH} \ \Gamma_f(t) \, E_f \, ,
\end{align*}
where the second equality holds since   
$W_f H W_f^*$ commutes with~$E_f$. 
We restrict this equality to $\cF_n$ and
put $\Gamma_{f,n}(t) \doteq \Gamma_f(t) \, E_f \upharpoonright \cF_n$,
$n \in \NN_0$. In particular, 
$\Gamma_{f,n}(t) \upharpoonright (\one_n - E_{f,n}) \, \cF_n = 0$. 
By Proposition~\ref{pa.5} we have  
$$
(H - W_f H W_f^*) E_f \upharpoonright \cF_n = A_{f,n} + B_{f,n} \, ,
$$ 
where $A_{f,n} \in \cK_n$ and $B_{f,n}$ is a bounded operator. 
Putting 
$$
C_{f,n}(s) \doteq \ad e^{isH_n} (A_{f,n} + B_{f,n}) \, ,
\quad s \in \RR \, ,
$$
we can solve the above differential equation for
$t \mapsto \Gamma_{f,n}(t)$  on $E_{f,n} \, \cF_n$ by the Dyson series
of time ordered integrals, defined in the strong
operator topology,
\begin{equation} \label{ea.4}
\Gamma_{f,n}(t) 
= \big(E_{f,n} + \sum_{k=1}^\infty i^k \,
\int_0^t \! ds_k  \! \int_0^{s_k} \! ds_{k-1}  \dots  \! \int_0^{s_2} \! ds_1 \,
C_{f,n}(s_k) \cdots  C_{f,n}(s_1) \big) \, .  
\end{equation}
This series converges absolutely in norm since the operators $C_{f,n}$ are
bounded. 

\medskip 
We want to show that 
$\Gamma_{f,n}(t) \in \fK_n$, $n \in \NN_0$.
As we shall see, it is sufficient to prove that the functions \ 
$t \mapsto \int_0^t \! ds \, C_{f,n}(s)$ have range in 
$\fK_n$ and are norm continuous, $t \in \RR$. For the summand 
$A_{f,n} \in \fK_n$ of $C_{f,n}$ this property 
follows from the fact that the time evolution acts pointwise norm 
continuously on $\fK_n$, cf.\ Proposition~4.4 and the 
appendix in~\cite{Bu1}. 
The argument for the second summand~$B_{f,n}$ is more involved since
these operators are not contained in $\fK_n$. We begin with a 
technical lemma about integrals of functions having  
values in operators, respectively linear maps between 
C*-algebras. In order to avoid repetitions of technicalities, 
we make the following standing assertion.

\medskip 
\noindent \textbf{Statement:}  In the subsequent analysis all
integrals are defined in the strong operator (s.o.) topology of the 
underlying Hilbert spaces, unless otherwise stated.

\begin{lemma} \label{la.6}
For $k = 1,2$, let $\cH_k$ be a Hilbert space and  
let $\fB_k \subset \cB(\cH_k)$ be a C*-algebra. 
Moreover, let $s \mapsto B_1(s) \in \cB(\cH_1)$, $s \in \RR$,  be 
a s.o. continuous operator function such that 
$\int_0^t \! ds \,  B_1(s) \in \fB_1$, $t \in \RR$; 
and let $s \mapsto \lambda(s)$ be a norm continuous 
function with values in linear~maps from $\cB(\cH_1)$ into
$\cB(\cH_2)$, which, for fixed $s \in \RR$, are 
normal (s.o. continuous)  
and whose restrictions to $\fB_1$ have values in $\fB_2$, \ie
$\lambda(s)(\fB_1) \subset \fB_2$.

\medskip 
Then the function $s \mapsto \lambda(s)\big(B_1(s)\big) \in \cB(\cH_2)$ is 
s.o.\ continuous. Its integral 
$\, t \mapsto \int_0^t \! ds \, \lambda(s)\big(B_1(s)\big)$ is 
norm continuous and has values in 
$\fB_2$, $t \in \RR$. For fixed~$t$, 
it can be approximated in norm in the limit
$m \rightarrow \infty$ by the sums
$$
\sum_{l = 1}^m \lambda(lt/m)\Big(\int_{(l-1)t/m}^{lt/m} \! \! ds \, 
B_1(s) \Big) \in \fB_2 \, , \quad m \in \NN \, .
$$
(Note that the functions 
in this lemma are not necessarily defined by the action of some dynamics.)
\end{lemma}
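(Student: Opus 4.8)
The plan is to prove all three assertions at once by a Riemann-sum argument, the sums in question being exactly those displayed at the end of the lemma. First I would check that the integrand $s \mapsto \lambda(s)\big(B_1(s)\big)$ is s.o.\ continuous, so that its integral over $[0,t]$ is well defined in the s.o.\ topology. For this, fix $s_0$ and split
$$
\lambda(s)\big(B_1(s)\big) - \lambda(s_0)\big(B_1(s_0)\big)
= \big(\lambda(s) - \lambda(s_0)\big)\big(B_1(s)\big)
+ \lambda(s_0)\big(B_1(s) - B_1(s_0)\big) .
$$
Applied to a vector $\eta \in \cH_2$, the first term has norm at most $\|\lambda(s) - \lambda(s_0)\|\,\|B_1(s)\|\,\|\eta\|$, which tends to $0$ by norm continuity of $\lambda$ together with local boundedness of $s \mapsto \|B_1(s)\|$ (itself a consequence of the uniform boundedness principle applied to the s.o.\ continuous function $B_1$ on a compact neighbourhood of $s_0$); the second term tends to $0$ because $B_1(s) - B_1(s_0) \to 0$ strongly in a bounded way while $\lambda(s_0)$ is normal. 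The same boundedness remarks show that on any compact interval $[0,T]$ both $s \mapsto \|B_1(s)\|$ and $s \mapsto \|\lambda(s)\|$ are bounded, a fact I will use repeatedly, e.g.\ in the elementary estimate $\|\int_a^b ds\, F(s)\| \le \int_a^b ds\, \|F(s)\|$ for s.o.\ continuous $F$.

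Next I would introduce, for $m \in \NN$ and $t_l \doteq lt/m$, the approximants
$$
S_m \doteq \sum_{l=1}^m \lambda(t_l)\Big( \int_{t_{l-1}}^{t_l} \!\! ds\, B_1(s) \Big) .
$$
Since $\int_{t_{l-1}}^{t_l} ds\, B_1(s) = \int_0^{t_l} ds\, B_1(s) - \int_0^{t_{l-1}} ds\, B_1(s) \in \fB_1$ and $\lambda(t_l)(\fB_1) \subset \fB_2$, each summand lies in $\fB_2$, hence $S_m \in \fB_2$. The heart of the proof is to show that $S_m \to \int_0^t ds\, \lambda(s)\big(B_1(s)\big)$ in \emph{norm}. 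Writing the target integral as $\sum_{l=1}^m \int_{t_{l-1}}^{t_l} ds\, \lambda(s)\big(B_1(s)\big)$ and using that the fixed normal map $\lambda(t_l)$ commutes with the s.o.\ integral, so that $\lambda(t_l)\big(\int_{t_{l-1}}^{t_l} ds\, B_1(s)\big) = \int_{t_{l-1}}^{t_l} ds\, \lambda(t_l)\big(B_1(s)\big)$, one obtains
$$
S_m - \int_0^t ds\, \lambda(s)\big(B_1(s)\big)
= \sum_{l=1}^m \int_{t_{l-1}}^{t_l} ds\, \big(\lambda(t_l) - \lambda(s)\big)\big(B_1(s)\big) ,
$$
whose norm is bounded by $t \cdot \sup_{[0,t]} \|B_1\| \cdot \omega(t/m)$, where $\omega$ is the modulus of uniform continuity of $s \mapsto \lambda(s)$ on the compact interval $[0,t]$. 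This tends to $0$, which proves the claimed norm approximation; since $\fB_2$ is norm-closed, the limit $\int_0^t ds\, \lambda(s)\big(B_1(s)\big)$ lies in $\fB_2$.

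It remains to note that $t \mapsto \int_0^t ds\, \lambda(s)\big(B_1(s)\big)$ is norm continuous, which is immediate from $\|\int_t^{t'} ds\, \lambda(s)(B_1(s))\| \le |t'-t| \, \sup_{[0,T]} \|\lambda\|\,\|B_1\|$ on a compact interval $[0,T]$. The only step that uses a hypothesis beyond bookkeeping is the interchange $\lambda(t_l)\big(\int B_1\big) = \int \lambda(t_l)(B_1)$, and I expect this to be the one point requiring care. I would justify it by approximating $\int_{t_{l-1}}^{t_l} ds\, B_1(s)$ strongly by bounded Riemann sums $\tfrac{t}{m'} \sum_j B_1(\cdot)$, pushing the s.o.\ continuous map $\lambda(t_l)$ through each finite sum, and then passing to the limit $m' \to \infty$ in the s.o.\ topology on both sides, using that $s \mapsto \lambda(t_l)\big(B_1(s)\big)$ is itself s.o.\ continuous and bounded on $[t_{l-1}, t_l]$. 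Everything else — existence of the s.o.\ integrals, the norm-by-integral-of-norms bounds, and the local-boundedness facts from the uniform boundedness principle — is routine.
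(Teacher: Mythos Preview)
Your proof is correct and follows essentially the same route as the paper's: the same two-term splitting for s.o.\ continuity, the same Riemann-type approximants $S_m$, the same norm estimate via uniform continuity of $\lambda$ on $[0,t]$, and the same Lipschitz bound for continuity in $t$. You are in fact slightly more careful than the paper in two places --- you explicitly invoke the uniform boundedness principle to get local boundedness of $\|B_1(s)\|$, and you spell out why $\lambda(t_l)\big(\int B_1\big) = \int \lambda(t_l)(B_1)$ holds --- both of which the paper uses tacitly.
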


\begin{proof}
Let $s_0 \in \RR$. Then one has on $\cH_2$ the equality  
$$
\lambda(s) \big(B_1(s) \big) - \lambda(s_0) \big(B_1(s_0) \big) =
\lambda(s_0) \big(B_1(s) - B_1(s_0) \big) + 
\big(\lambda(s) - \lambda(s_0)\big) \big(B_1(s) \big) \, .
$$
Since the map $\lambda(s_0)$ is normal on
$\cB(\cH_1)$, the first term on the right hand side
of this equality vanishes in the s.o.\ topology in the limit    
$s \rightarrow s_0$. The second term vanishes in this limit as well, 
since $\lambda(s) \rightarrow \lambda(s_0)$ in the norm topology 
of $\cB(\cH_2)$, uniformly on 
bounded subsets of $\cB(\cH_1)$. Thus $s \mapsto \lambda(s) \big(B_1(s) \big)$ 
is continuous in the s.o.\ topology and the integrals exist.
Assuming without loss of generality that $t \geq 0$,
we partition $[0,t]$ into $m \in \NN$ intervals, giving the estimate 
in~$\cB(\cH_2)$ 
\begin{align*}
 \| \int_0^t \! ds \, \lambda(s) & \big(B_1(s)\big) 
  - \sum_{l = 1}^m \lambda(lt/m)\Big(\int_{(l-1)t/m}^{lt/m} \! \! ds \, 
B_1(s) \Big) \| \\
& = \|  \sum_{l = 1}^m  
\int_{(l-1)t/m}^{lt/m} \! ds \, 
\big(\lambda(s) - \lambda(lt/m) \big) \big(B_1(s)\big) \| \\
& \leq \| B_1 \|_\infty \sum_{l = 1}^m  \int_{(l-1)t/m}^{lt/m} \! ds \,
 \| \lambda(s) - \lambda(lt/m) \| \, ,
\end{align*}
where $\| B_1 \|_\infty \doteq \sup_{0 \leq s \leq t} \| B(s) \|$.
Because of the norm continuity of 
$s \mapsto \lambda(s)$, 
this shows that the expression on the first line tends to $0$ in the
limit $m \rightarrow \infty$. Since, by assumption, 
$\int_{(l-1)t/m)}^{lt/m} \! ds \, B_1(s) \in \fB_1$ and
$\lambda(lt/m)$ maps the C*-algebra $\fB_1$ into $\fB_2$, $1 \leq l \leq m$, 
it follows that 
$\int_0^t \! ds \, \lambda(s)\big(B_1(s)\big) \in \fB_2$.
Moreover, the integral can be approximated in norm by the finite sums
given in the lemma.

\medskip
The statement about the 
continuity properties follows from the estimate 
\begin{align*}
\| \int_0^{t_2} \! ds \, \lambda(s) & \big(B(s) \big) - \int_0^{t_1} \! ds \, 
\lambda(s) \big(B(s) \big) \| \\
& = \| \int_{t_1}^{t_2} \! ds \, \lambda(s) \big(B(s) \big) \|
\leq \| B \|_\infty \|\lambda\|_\infty |t_2 - t_1| \, ,
\end{align*}
where $\| B_1 \|_\infty$, $\|\lambda\|_\infty$
are the suprema of $s \mapsto \| B_1(s) \|$,
respectively \mbox{$s \mapsto \| \lambda(s) \|$}, 
on any given bounded subset of~$\RR$, containing the 
integration intervals. 
\qed \end{proof} 

This lemma will be applied to various types of functions 
and has therefore been formulated in general terms. It will 
allow us to determine the properties of integrals
involving the localized potentials, cf.~\eqref{ea.4}. In the subsequent 
lemma we consider integrals of operators evolving under 
the non-interacting time evolution $e^{isH_0}$, $s \in \RR$. Here 
$H_0$ is the second quantization of the single particle 
operator $\bP_\kappa^2 = \bP^2 + \kappa^2 \, \bQ^2$, where 
$\kappa \geq 0$ is kept fixed. We make use of the short
hand notation $B(s)_{0} = \ad e^{isH_0}(B)$ for arbitrary
bounded operators $B$ on $\cF$. If $B$ is gauge invariant 
(preserves the particle number), we denote its 
restriction to the $n$-particle space by 
$B_n(s)_{0} \doteq \ad e^{isH_0,n}(B_n)
= B(s)_{0} \upharpoonright \cF_n$, $n \in \NN_0$.
Note that the functions $s \mapsto B(s)_{0}$ are 
continuous in the s.o. topology.

\begin{lemma} \label{la.7}
Let $B_n \in \cB(\cF_n)$ such that the functions 
$t \mapsto \int_0^t \! ds \, B_n(s)_0$, $t \in \RR$, have values in $\fK_n$,
$n \in \NN_0$. 

\medskip 
(i) The functions 
$t \mapsto \int_0^t \! ds \, \big(K_n' B_n K_n'' \big)(s)_0$ 
have values in $\fK_n$ and are norm continuous for any
choice of $K_n', K_n'' \in \fK_n$. 

\medskip
(ii) Let $\beta_{g,n} : \cB(\cF_{n-1}) \rightarrow \cB(\cF_n)$ be the map  
defined in Eqn.~\eqref{ea.3}  
for normalized $g \in L^2(\RR^s)$. The functions 
$t \mapsto \int_0^t \! ds \, \big(\beta_{g,n}(K_{n-1}' B_{n-1} K_{n-1}'') 
\big)(s)_0$ have values in $\fK_n$ and are norm continuous for any choice of 
$K_{n-1}', K_{n-1}'' \in \fK_{n-1}$.
\end{lemma}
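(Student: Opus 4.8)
The plan is to derive both parts from the general integration result of Lemma~\ref{la.6}, taking the outer C*-algebra there to be $\fK_n$, by writing each integrand in the form $\lambda(s)\big(B_1(s)\big)$ for a suitable norm-continuous family of normal linear maps $\lambda(s)$. Two structural properties of the non-interacting evolution will carry the bookkeeping. First, since $H_{0,n}$ is the restriction to $\cF_n$ of the second quantization of $\bP_\kappa^2$ on $\cF_1$, Theorem~\ref{t5.2} together with the norm-continuity arguments preceding Lemma~\ref{l5.1} show that $\ad e^{isH_{0,n}}$ leaves $\fK_n$ invariant and acts pointwise norm continuously on it, $n \in \NN_0$. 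Second, because $e^{isH_0}$ implements the single-particle unitary $e^{is\bP_\kappa^2}$, one has $e^{isH_0}a^*(g)e^{-isH_0}=a^*(g_s)$ with $g_s\doteq e^{is\bP_\kappa^2}g$, hence $e^{isH_0}N_ge^{-isH_0}=N_{g_s}$ and therefore $e^{isH_0}W_ge^{-isH_0}=W_{g_s}$; note that $g_s$ is again normalized and that $s\mapsto g_s$ is norm continuous in $L^2(\RR^s)$ by Stone's theorem.

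For part (i), I would fix $K_n', K_n''\in\fK_n$, write $K_n'(s)_0\doteq\ad e^{isH_{0,n}}(K_n')$ and similarly for $K_n''$, and introduce the maps $\lambda(s):\cB(\cF_n)\to\cB(\cF_n)$, $\lambda(s)(Y)\doteq K_n'(s)_0\,Y\,K_n''(s)_0$. Each $\lambda(s)$ is normal, being left and right multiplication by fixed bounded operators, and sends $\fK_n$ into $\fK_n$ since $\fK_n$ is an algebra invariant under $\ad e^{isH_{0,n}}$; the map $s\mapsto\lambda(s)$ is norm continuous by the usual "add and subtract" estimate combined with the pointwise norm continuity of $s\mapsto K_n'(s)_0$ and $s\mapsto K_n''(s)_0$ on $\fK_n$. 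Since $\ad e^{isH_{0,n}}$ is multiplicative one has $\lambda(s)\big(B_n(s)_0\big)=(K_n'B_nK_n'')(s)_0$, and with $B_1(s)\doteq B_n(s)_0$ — which is s.o.\ continuous and has $\int_0^t\!ds\,B_1(s)\in\fK_n$ by hypothesis — Lemma~\ref{la.6} (with $\cH_1=\cH_2=\cF_n$, $\fB_1=\fB_2=\fK_n$) gives the assertion at once.

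For part (ii), the additional ingredient is the intertwining identity $\big(\beta_{g,n}(X)\big)(s)_0=\beta_{g_s,n}\big(X(s)_0\big)$ for $X\in\cB(\cF_{n-1})$, which follows from $e^{isH_0}W_ge^{-isH_0}=W_{g_s}$ and the multiplicativity of $\beta_{g,n}=\ad W_g$ (valid since $W_g^*W_g=\one$ on $\cF_{n-1}$). Applied with $X=K_{n-1}'B_{n-1}K_{n-1}''$, this rewrites the integrand as $\lambda(s)\big(B_{n-1}(s)_0\big)$ with $\lambda(s):\cB(\cF_{n-1})\to\cB(\cF_n)$, $\lambda(s)(Y)\doteq\beta_{g_s,n}\big(K_{n-1}'(s)_0\,Y\,K_{n-1}''(s)_0\big)$. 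This $\lambda(s)$ is normal (a composition of multiplications with $\ad W_{g_s}$) and maps $\fK_{n-1}$ into $\fK_n$, because $K_{n-1}'(s)_0\,\fK_{n-1}\,K_{n-1}''(s)_0\subset\fK_{n-1}$ and $\beta_{g_s,n}(\fK_{n-1})\subset\fK_n$ by Lemma~\ref{la.4}; with $B_1(s)\doteq B_{n-1}(s)_0$, Lemma~\ref{la.6} (now with $\cH_1=\cF_{n-1}$, $\cH_2=\cF_n$, $\fB_1=\fK_{n-1}$, $\fB_2=\fK_n$) finishes the argument.

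I expect the main obstacle to be the norm continuity of $s\mapsto\lambda(s)$ in part (ii), where two independent sources of continuity must be combined: besides the pointwise norm continuity of the evolution on $\fK_{n-1}$ (handled as in part (i), controlling the term $\beta_{g_s,n}\circ(\mu(s)-\mu(s_0))$ with $\mu(s)(Y)=K_{n-1}'(s)_0\,Y\,K_{n-1}''(s)_0$), one needs the Lipschitz estimate $\|\beta_{g_1,n}-\beta_{g_2,n}\|\le c_n\|g_1-g_2\|_2$ of Lemma~\ref{la.4} together with the $L^2$-continuity of $s\mapsto g_s$ to control the remaining term $(\beta_{g_s,n}-\beta_{g_{s_0},n})\circ\mu(s_0)$. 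A secondary point needing a line of justification is the relation $e^{isH_0}W_ge^{-isH_0}=W_{g_s}$; this is the routine second-quantization computation, noting that $N_g=a^*(g)a(g)$ is carried to $N_{g_s}$ and that the bounded function $(1+N_g)^{-1/2}$ is transformed accordingly. Everything else is a mechanical instantiation of Lemma~\ref{la.6}.
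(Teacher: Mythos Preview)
Your proposal is correct and follows essentially the same route as the paper: in both parts the paper defines exactly the same maps $\lambda(s)$ you do, verifies normality, norm continuity (via the pointwise norm-continuous action of the non-interacting evolution on $\fK_n$ for part~(i), and additionally via the Lipschitz bound of Lemma~\ref{la.4} for part~(ii)), and the inclusion $\lambda(s)(\fK_{n-1})\subset\fK_n$, and then invokes Lemma~\ref{la.6}. The only cosmetic difference is that the paper cites \cite[Prop.~4.4]{Bu1} for the norm-continuous action on $\fK_n$ rather than Theorem~\ref{t5.2}.
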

\begin{proof}
(i) Consider the function 
$\lambda_n : \RR \times \cB(\cF_n) \rightarrow
\cB(\cF_n)$, which is given by
$$ \lambda_n(s)(B_n') \doteq K_n'(s)_0 \, B_n' \, K_n''(s)_0
\, ,  \quad s \in \RR \, , \ B_n' \in \cB(\cF_n) \, .
$$
For fixed $s$ the linear 
map $\lambda_n(s)$ is clearly normal. Moreover, since 
the time translations leave the algebra of 
observables $\obfA$ invariant and act pointwise norm continuously on 
$\fK_n = \obfA \upharpoonright \cF_n$, cf.\ Proposition~4.4 and the
appendix in \cite{Bu1}, 
the function is norm continuous and its restriction to $\fK_n$
maps this subalgebra into itself. The function 
$s \mapsto B_n(s)_0$ is continuous in the s.o.\ topology
and, by assumption, $t \mapsto \int_0^t \! ds \, B_n(s)_0 \in \fK_n$, 
$t \in \RR$, 
so the first statement follows from Lemma~\ref{la.6}.

\medskip
(ii) Since we are dealing with the non-interacting time evolution, 
we have 
$$
\big(\beta_{g,n}(K_{n-1}' B_{n-1} K_{n-1}'') \big)(s)_0
= \beta_{g(s)_0 ,n} \, (K_{n-1}'(s)_0  \, B_{n-1}(s)_0 \, K_{n-1}''(s)_0 )  \, ,
$$
where $g(s)_0 \doteq e^{is \sbP_\kappa^2} \, g \in L^2(\RR^s)$ is 
normalized. We consider now the 
function of maps $\lambda_n : \RR \times \cB(\cF_{n-1}) \rightarrow
\cB(\cF_n)$, given by
$$ \lambda_n(s)(B_{n-1}') \doteq 
\beta_{g(s)_0 ,n}(K_{n-1}'(s)_0 \,  B_{n-1}' \, K_{n-1}''(s)_0 ) \, ,
\ \
s \in \RR \, , \ B_{n-1}' \in \cB(\cF_{n-1}) \, .
$$
For fixed $s$ the linear map $\lambda_n(s)$ is normal.
Moreover, it follows from the norm continuity of 
$s \mapsto \beta_{g(s)_0 ,n}$, established in 
the second part of Lemma~\ref{la.4},  
and the arguments in step (i) that $s \mapsto \lambda_n(s)$ is 
norm continuous. Making use of the first part of 
Lemma~\ref{la.4}, it is also clear that $\lambda_n(s)$ maps
$\fK_{n-1}$ into $\fK_n$, $s \in \RR$.  
Furthermore, the function 
$s \mapsto B_{n-1}(s)_0$ is continuous in the s.o.\ topology
and, by assumption, $t \mapsto \int_0^t \! ds \, B_{n-1} (s)_0 \in \fK_{n-1}$, 
$t \in \RR$.
So the second  statement follows likewise from Lemma~\ref{la.6}.
\qed \end{proof}

In the next lemma, we consider the non-interacting time evolution 
of localized pair potentials,
defined above, and study their integrals. 

\begin{lemma} \label{la.8}
Let $\check{V}_{f,2}$ and $\hat{V}_{f,2}$ be the localized pair potentials, 
defined in Lemmas~\ref{la.2} and \ref{la.3}, respectively.
Denoting by $V_{f,2}$ either one of these potentials, one has 

\medskip 
(i) the function 
$t \mapsto \int_0^t \! ds \, V_{f,2}(s)_0$  on $\cF_2$
is norm continuous and has values in the compact operators;

\medskip 
(ii) for any $n \in \NN$, $n \geq 2$, the function 
\ $t \mapsto \int_0^t \! ds \, V_{f, n}(s)_0$ on $\cF_n$ 
is norm continuous and 
has values in $\fK_n$, $t \in \RR$, where 
$$ 
s \mapsto V_{f,n}(s)_0 = 
n(n-1) \, (V_{f,2}(s)_0 \otimes_s 
\underbrace{1 \otimes_s \cdots \otimes_s 1}_{n-2}) \, . 
$$
\end{lemma}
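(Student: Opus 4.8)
The plan is to obtain (ii) from (i) and to prove (i) by combining the intertwining identities of Section~\ref{ssa.1} with a density reduction and a propagation estimate for the free two-particle evolution. For the reduction of (ii): the lifting map $C \mapsto n(n-1)\,(C \otimes_s 1 \otimes_s \cdots \otimes_s 1)$ from $\cB(\cF_2)$ to $\cB(\cF_n)$ is bounded, linear and normal (s.o.\ continuous on bounded sets), hence it interchanges with the s.o.\ integrals defining our operator functions. Thus $\int_0^t V_{f,n}(s)_0\, ds = n(n-1)\,[\,(\int_0^t V_{f,2}(s)_0\, ds) \otimes_s 1 \otimes_s \cdots \otimes_s 1\,]$. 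Since $\fC_2 \otimes_s 1 \otimes_s \cdots \otimes_s 1 \subset \fK_n$ by relation~\eqref{e3.1}, membership in $\fK_n$ follows at once from the compactness assertion of (i), and norm continuity in $t$ is inherited because the lifting map is bounded of norm $n(n-1)$.

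In (i) the norm continuity of $t \mapsto \int_0^t V_{f,2}(s)_0\, ds$ is just the Lipschitz bound $\| \int_{t_1}^{t_2} V_{f,2}(s)_0\, ds \| \le |t_2 - t_1|\, \| V_{f,2} \|$, immediate since $V_{f,2}(s)_0$ is a unitary conjugate of the bounded operator $V_{f,2}$. The substance is compactness of $\int_0^t V_{f,2}(s)_0\, ds$ for fixed $t$. The first step is to rewrite the integrand via $e^{isH_{0,2}}(E_{f,1}^\perp \otimes_s 1) e^{-isH_{0,2}} = E_{f(s)_0,1}^\perp \otimes_s 1$ (and the same with $E_{f,1}$ in place of $E_{f,1}^\perp$), where $f(s)_0 \doteq e^{is\bP_\kappa^2} f$, together with $e^{isH_{0,2}} V e^{-isH_{0,2}} = V(s)_0$; this presents $V_{f,2}(s)_0$ as a fixed finite sum of terms in which an evolved projection $E_{f(s)_0,1}$ or its complement sits on either side of $V(s)_0$ in the relevant slot. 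Splitting each factor $E_{f(s)_0,1} \otimes_s 1$ as $E_{f(s)_0,1} \otimes_s E_{f(s)_0,1}$ plus $E_{f(s)_0,1} \otimes_s E_{f(s)_0,1}^\perp$, the contributions involving the rank-one projection $E_{f(s)_0,1} \otimes_s E_{f(s)_0,1}$ are finite rank and depend norm continuously on $s$ (here one uses that $e^{-isH_{0,2}}$ conjugates $|f(s)_0 \otimes_s f(s)_0\rangle$ back to the fixed vector $|f \otimes_s f\rangle$), so their time integrals are compact. It then remains only to deal with the ``off-diagonal'' terms, in which an $E_{f(s)_0,1}^\perp$ stands next to $V(s)_0$.

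The second step is a density reduction: $V \mapsto V_{f,2}$ is linear and bounded in $\| \cdot \|_\infty$, and $V_{f,2} \mapsto \int_0^t V_{f,2}(s)_0\, ds$ is bounded in operator norm, so it suffices to treat $V \in \cD(\RR^s)$. For such $V$ one exploits the localization present in the problem: $f \in \cD(\RR^s)$ has compact support, so $E_{f,1} \otimes_s 1$ confines one particle, while $V$ decays in the relative coordinate; a propagation estimate for the free group $e^{isH_{0,2}}$ then forces the time-integrated off-diagonal terms to be norm limits of finite-rank operators. In the untrapped case $\kappa = 0$ this is carried by the dispersive decay of the free group in the relative variable together with $V \in C_0$. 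In the trapped case $\kappa > 0$, where $e^{is\bP_\kappa^2}$ is quasi-periodic and does not disperse, I would pass instead to an orthonormal eigenbasis $(u_\alpha)$ of $\bP_\kappa^2$ and use: the rapid decay of $\langle f, u_\alpha \rangle$ in $|\alpha|$; the vanishing of the averages $V * |u_\alpha|^2$ as $|\alpha| \to \infty$ (delocalization of high Hermite modes, using $V \in C_0$); and the bound $|\int_0^t e^{is\Delta\lambda}\, ds| \le \min(t, 2/|\Delta\lambda|)$ on the time factor, with $\Delta\lambda$ the difference of $H_{0,2}$-eigenvalues between the two indices, to suppress off-resonant matrix elements. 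Together these show that the matrix of $\int_0^t V_{f,2}(s)_0\, ds$ in that basis has entries tending to $0$, hence defines a compact operator. With the first step this gives (i), and then (ii) follows by the reduction above.

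The main obstacle is precisely the compactness in (i): $V_{f,2}$ is bounded but not compact --- the factor $E_{f,1} \otimes_s 1$ is only ``rank one in one slot, identity in the other'' --- so compactness of the time integral must be extracted from the two-particle dynamics. This is comparatively painless for $\kappa = 0$, where the free evolution carries the second particle away from the support of $V$ relative to the confined first particle, but for $\kappa > 0$ it relies essentially on the extra localization built into $\check V_{f,2}$ and $\hat V_{f,2}$ (this is the simplification that the harmonic trapping assumption was meant to afford); keeping the estimates quantitative and uniform on compact $t$-intervals --- which is also what yields the asserted norm continuity --- is the technically heaviest point.
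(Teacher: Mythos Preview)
Your reduction of (ii) to (i) and the Lipschitz bound for norm continuity are fine and match the paper. The compactness argument in (i), however, diverges from the paper's and has a genuine gap.

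The paper does \emph{not} split into cases $\kappa=0$ versus $\kappa>0$, nor does it use dispersive decay or a Hermite eigenbasis. Instead it first replaces $V$ by a compactly supported approximant (your density reduction) and then, crucially, uses the support of $f$ to insert a spatial cutoff $\chi$ in \emph{both} particle variables: since $\hat V_{f,2}=V\,(E_{f,1}\otimes_s 1)$ and $\operatorname{supp} f$ is compact, one may write $\hat V_{f,2}=V_{f,\chi}\,(E_{f,1}\otimes_s 1)$ with $V_{f,\chi}(\bx,\by)=V(\bx-\by)\chi(\bx)\chi(\by)$ continuous and compactly supported on $\RR^s\times\RR^s$. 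The key step is then purely algebraic: under the free evolution $\bQ_l\mapsto c_\kappa(s)\bQ_l+s_\kappa(s)\bP_l$, the operators $V_{f,\chi}(s')_0$ and $V_{f,\chi}(s'')_0$ are, for almost all $(s',s'')$, compactly supported functions of two \emph{canonically conjugate} pairs of operators (rescaled $\bQ$ and $\bP$), so their product is compact on $\cF_2$. Hence $\big|\int_0^t V_{f,\chi}(s)_0\,ds\big|^2$ is compact, and by polar decomposition so is the integral itself. Finally the factor $(E_{f,1}\otimes_s 1)(s)_0$ is reinserted via Lemma~\ref{la.6}, using that $s\mapsto E_{f(s)_0,1}$ is norm continuous (rank one). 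This works uniformly in $\kappa\ge 0$ with no spectral analysis.

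Your sketch, by contrast, rests at the decisive point on the assertion that ``the matrix of $\int_0^t V_{f,2}(s)_0\,ds$ in that basis has entries tending to $0$, hence defines a compact operator''. That implication is false in general: pointwise decay of matrix entries does not give compactness without a quantitative summability condition (Hilbert--Schmidt, Schur test with decaying row/column sums, etc.), and you have not supplied one. The three ingredients you list --- rapid decay of $\langle f,u_\alpha\rangle$, vanishing of $V*|u_\alpha|^2$, and the oscillatory bound $\min(t,2/|\Delta\lambda|)$ --- control different index directions, but you have not shown that together they force a norm approximation by finite-rank truncations; the off-diagonal time bound in particular is not summable on its own over the two-particle index set. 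Your $\kappa=0$ paragraph is also too sketchy: the projection $E_{f(s)_0,1}$ delocalizes under $e^{is\bP^2}$, so ``the confined first particle'' is not confined at later times, and you would need to say precisely which dispersive or Kato-smoothing estimate turns $\int_0^t V(s)_0\,(E_{f(s)_0,1}\otimes_s 1)\,ds$ into a compact operator. The paper's canonical-conjugacy trick sidesteps all of this.
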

\begin{proof} 
We give the proof for the potential 
$\hat{V}_{f,2} = V \, (E_{f,1} \otimes_s 1)$. Since  
$\check{V}_{f,2}$ also contains the localizing factor $(E_{f,1} \otimes_s 1)$,
the corresponding argument is similar and therefore omitted. 

\medskip 
(i) In a first step we consider potentials $V$, having compact support. 
Picking a smooth function $\bx \mapsto \chi(\bx)$ 
which is equal to $1$ for 
$\bx \in \text{supp} \, f \cup (\text{supp} \, f + \text{supp} \, V)$
and has compact support, we can proceed to  
$\hat{V}_{f,2} = V_{f, \chi} \, (E_{f,1} \otimes_s 1)$, 
where the potential 
$\bx, \by \mapsto V_{f, \chi}(\bx,\by) \doteq 
V(\bx - \by) \, \chi(\bx) \chi(\by)$ 
is symmetric in $\bx, \by$, continuous, and compactly supported 
on the two-particle configuration
space $\RR^s \times \RR^s$. The function $s \mapsto V_\chi(s)_0$
is continuous in the s.o.\ topology and the resulting integral 
$t \mapsto \int_0^t \! ds \, V_\chi(s)_0$ 
depends norm continuously on $t \in \RR$ and has values 
in the compact operators on $\cF_2$. Similar results were 
established in previous work, cf. for example the appendix
of \cite{BuGr1}. We briefly sketch here the argument for the 
case at hand. Consider the functions, $\kappa \geq 0$, 
$$
s \mapsto c_\kappa(s) \doteq \cos(2\kappa s) \, , \quad
s \mapsto s_\kappa(s) \doteq \sin(2\kappa s)/\kappa \, \quad s \in \RR \, ,
$$
where we put $s_0(s) = 2s$. The non-interacting time translations
act on the two-particle operator 
$V_{f, \chi} = V_{f, \chi}(\bQ_1, \bQ_2)$ according
to
$$
V_{f, \chi}(\bQ_1,\bQ_2)(s)_0 =
V_{f, \chi}(c_\kappa(s) \bQ_1 + s_\kappa(s) \bP_1, \ 
c_\kappa(s) \bQ_2 + s_\kappa(s) \bP_2) 
$$
in an obvious notation. For any $(s', s'') \in \RR^2$ the operators
$\big( c_\kappa(s') \bQ_1 + s_\kappa(s') \bP_1 \big) $ commute with 
$\big( c_\kappa(s'') \bQ_2 + s_\kappa(s'') \bP_2 \big)$ and one has,
$l = 1,2$,  
$$
[ \big( c_\kappa(s') \bQ_l + s_\kappa(s') \bP_l \big),
\big( c_\kappa(s'') \bQ_l + s_\kappa(s'') \bP_l \big)] 
=  i s_\kappa(s'' - s') \, \one 
\, . 
$$
Thus for almost all $(s', s'') \in \RR^2$ the operators in the 
latter commutator do not commute and 
are canonically conjugate (with rescaled Planck constant). 
Since $\bx,\by \mapsto V_{f,\chi}(\bx,\by)$ is continuous  
and has compact support it follows
from standard arguments that the function 
$$
s', s'' \mapsto V_{f, \chi}(\bQ_1,\bQ_2)(s')_0 \, V_{f, \chi}(\bQ_1,\bQ_2)(s'')_0
$$
has values in compact operators on $\cF_2$ 
for almost all $(s', s'') \in \RR^2$.
Since it is also uniformly bounded, this implies that 
$$
| \int_0^t \! ds \,  V_{f, \chi}(\bQ_1,\bQ_2)(s)_0 |^2 =
\int_0^t \! ds'  \!  \! \int_0^t \! ds''
 V_{f, \chi}(\bQ_1,\bQ_2)(s')_0 \, V_{f, \chi}(\bQ_1,\bQ_2)(s'')_0 
$$
is a compact operator. Taking its square root and making use of 
polar decomposition, we conclude that 
$\int_0^t \! ds \,  
V_{f, \chi}(s)_0 = \int_0^t \! ds \,  V_{f, \chi}(\bQ_1,\bQ_2)(s)_0 $
is compact on $\cF_2$. (It is this result where we made use of the special 
form of the external single particle potential; but we expect that it 
holds more generally.)

\medskip
In order to see that this conclusion holds also for the original 
localized potential $V_f = V (E_{f,1} \otimes_s 1)$, we introduce the function 
$\lambda : \RR \times \cB(\cF_2) \rightarrow \cB(\cF_2)$ given by
$$
\lambda(s)(B_2) \doteq B_2 \, 
(E_{f,1} \otimes_s 1)(s)_0 =  B_2 \, 
(E_{f,1}(s)_0 \otimes_s 1) \, ,   \quad s \in \RR \, , \, 
B_2 \in \cB(\cF_2) \, .
$$
It is linear, normal, continuous in norm (recall that $E_{f,1}$ is a
one-dimensional projection), and it maps compact operators on 
$\cF_2$ to compact operators. It therefore follows from 
Lemma~\ref{la.6} that 
$$ 
t \mapsto
\int_0^t \! ds \, \lambda(s)(V_{f,\chi}(s)_0) =  \!
\int_0^t \! ds \, \big(V_{f,\chi} \, (E_{f,1} \otimes_s 1)\big)(s)_0  
=  \! \int_0^t \! ds \, \big(V \, (E_{f,1} \otimes_s 1)\big)(s)_0 
$$
is norm continuous and has values in 
the compact operators on $\cF_2$ for pair potentials with
compact support. The last integral in the 
preceding equality is norm continuous on $\cF_2$
with regard to $V \in C_0(\RR^s)$, equipped with 
the supremum topology. Since the algebra of compact operators
is norm-closed, the preceding result for pair potentials with
compact support therefore extends to all 
potentials in $C_0(\RR^s)$. 

\medskip 
(ii) By the very definition of the spaces $\cK_n$,  
any compact operator $C$ on $\cF_2$ gives rise to elements 
$C \otimes_s \underbrace{1 \otimes \cdots \otimes 1}_{n-2} \in
\cK_n$, $n \in \NN$. Moreover, for the non-interacting dynamics
one has 
$$
\big( C \otimes_s \underbrace{1 \otimes \cdots \otimes 1}_{n-2} \big) (s)_0
= C(s)_0 \otimes_s \underbrace{1 \otimes \cdots \otimes 1}_{n-2} \, .
$$ 
So the second statement follows from the
preceding one. As has been mentioned, analogous arguments
apply to the localized pair potential $\check{V}_{f,2}$, 
completing the proof.  
\qed \end{proof}

\medskip
Next, we show that the assumptions in 
Lemma~\ref{la.7} imply that the statements
(i) and (ii) still hold if one replaces the non-interacting 
time evolution in the respective integrals by the interacting one.
This fact will enable us to show that the 
functions $t \mapsto \int_0^t \! ds \, C_{f,n}(s)$ in the 
Dyson expansion \eqref{ea.4} are norm continuous
and have values in $\fK_n$, $n \in \NN$. We recall the short hand notation  
$B_n(s) \doteq \ad e^{isH_{n}} (B_n)$ for the adjoint action of the
interacting dynamics and $B_n(s)_0 \doteq \ad e^{isH_{0,n}} (B_n)$
in case of no interaction, $n \in \NN_0$.

\begin{lemma} \label{la.9}
Let $n \in \NN_0$ and let $B_n \in \cB(\cF_n)$ be an operator 
such that the function 
$t \mapsto \int_0^t \! ds \, B_n(s)_0$, 
defined in terms of the non-interacting time evolution, has values in $\fK_n$.
Then the function $t \mapsto \int_0^t \! ds \, B_n(s)$,
involving the interacting time evolution, has values in $\fK_n$
and is norm continuous, $t \in \RR$.
\end{lemma}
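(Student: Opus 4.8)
The plan is to prove the statement by an interval–partition argument which, on each short time step, replaces the interacting evolution by the non-interacting one; this reduces the claim to the two facts already available, namely the hypothesis on the free time average $t\mapsto\int_0^t B_n(s)_0\,ds$ and the invariance $\ad e^{isH_n}(\fK_n)=\fK_n$, together with norm-closedness of $\fK_n$. I would first fix $t$, say $t\ge 0$ (the case $t<0$ is symmetric), and for $m\in\NN$ put $t_l\doteq lt/m$ and $\Delta t\doteq t/m$. Using $B_n(s)=\ad e^{it_{l-1}H_n}\big(\ad e^{i(s-t_{l-1})H_n}(B_n)\big)$ for $s\in[t_{l-1},t_l]$ and the fact that conjugation by the fixed unitary $e^{it_{l-1}H_n}$ commutes with the s.o.\ integral, one gets $\int_{t_{l-1}}^{t_l}B_n(s)\,ds=\ad e^{it_{l-1}H_n}\big(\int_0^{\Delta t}\ad e^{irH_n}(B_n)\,dr\big)$. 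I would then split $\int_0^{\Delta t}\ad e^{irH_n}(B_n)\,dr=\int_0^{\Delta t}\ad e^{irH_{0,n}}(B_n)\,dr+\int_0^{\Delta t}\big(\ad e^{irH_n}(B_n)-\ad e^{irH_{0,n}}(B_n)\big)\,dr=\int_0^{\Delta t}B_n(r)_0\,dr+R_l$.

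Summing over $l$, the main part assembles into $\sum_{l=1}^m\ad e^{it_{l-1}H_n}\big(\int_0^{\Delta t}B_n(r)_0\,dr\big)$. Here the crucial observation is that $\int_0^{\Delta t}B_n(r)_0\,dr\in\fK_n$ by the hypothesis of the lemma (applied at time $\Delta t$), while $\ad e^{it_{l-1}H_n}$ maps $\fK_n$ onto $\fK_n$ (Proposition~4.4 and the appendix of~\cite{Bu1}); hence this finite sum is an element of $\fK_n$ for every $m\in\NN$. It therefore only remains to show that the total error $\sum_l\ad e^{it_{l-1}H_n}(R_l)$ tends to $0$ in norm as $m\to\infty$.

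The error is controlled using that $H_n-H_{0,n}$ is a \emph{bounded} operator on $\cF_n$: in the difference of the Hamiltonian \eqref{e5.1} and its non-interacting part the kinetic and harmonic terms cancel, leaving only $V_n$, the (bounded) second quantization of the pair potential $V\in C_0(\RR^s)$ restricted to $\cF_n$. A standard Duhamel estimate then gives $\|e^{irH_n}e^{-irH_{0,n}}-\one\|\le\|V_n\|\,|r|$, so that, conjugating the uniformly bounded operator $B_n(r)_0$, one obtains $\|\ad e^{irH_n}(B_n)-\ad e^{irH_{0,n}}(B_n)\|\le 2\|V_n\|\,\|B_n\|\,|r|$. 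Integrating over $[0,\Delta t]$ yields $\|R_l\|\le\|V_n\|\,\|B_n\|\,(\Delta t)^2$, and since $\ad e^{it_{l-1}H_n}$ is isometric, summing the $m$ contributions bounds the total error by $m\cdot\|V_n\|\,\|B_n\|\,(\Delta t)^2=\|V_n\|\,\|B_n\|\,t^2/m\to 0$.

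Putting the pieces together, $\int_0^t B_n(s)\,ds$ is a norm limit of elements of $\fK_n$, hence lies in $\fK_n$ because this C*-algebra is norm closed; and the norm continuity of $t\mapsto\int_0^t B_n(s)\,ds$ is immediate from $\|\int_{t_1}^{t_2}B_n(s)\,ds\|\le\|B_n\|\,|t_2-t_1|$. The point that needs care — and the reason the lemma is not trivial — is precisely that $B_n$ itself need not belong to $\fK_n$ (only its free time average does, as is the case for the localized pair potentials of Lemma~\ref{la.8}), so the invariance $\ad e^{isH_n}(\fK_n)=\fK_n$ cannot be applied to $B_n$ directly; the interval-partition device repairs this, and the only mildly technical ingredient is the per-step Duhamel bound, which is where the boundedness of the interaction $V_n=H_n-H_{0,n}$ is used.
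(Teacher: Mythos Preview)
Your proof is correct. Both your argument and the paper's rest on the same three ingredients: the boundedness of $V_n=H_n-H_{0,n}$ on $\cF_n$, the stability $\ad e^{isH_n}(\fK_n)=\fK_n$, and the hypothesis that the free time-average $\int_0^t B_n(s)_0\,ds$ lies in $\fK_n$. The paper, however, packages the argument differently. It writes $B_n(s)=\lambda_n(s)\big(B_n(s)_0\big)$ with $\lambda_n(s)=\ad\big(e^{isH_n}e^{-isH_{0,n}}\big)$, shows via a Dyson expansion that $s\mapsto\lambda_n(s)$ is norm continuous, normal, and preserves $\fK_n$, and then invokes the general approximation Lemma~\ref{la.6} to conclude that $\int_0^t\lambda_n(s)\big(B_n(s)_0\big)\,ds\in\fK_n$. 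Your direct Duhamel bound $\|e^{irH_n}e^{-irH_{0,n}}-\one\|\le\|V_n\|\,|r|$ is effectively the first-order term of that Dyson series and accomplishes the same per-step control without the detour through Lemma~\ref{la.6}. What you gain is a shorter, self-contained argument; what the paper's route buys is modularity, since Lemma~\ref{la.6} is reused repeatedly (Lemmas~\ref{la.7}, \ref{la.12}--\ref{la.14}) and the norm continuity of $\lambda_n$ is needed again for the coherence verification in Lemma~\ref{la.13}.
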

\begin{proof}
Let $\Lambda_n(s)^{-1} \doteq  e^{isH_{0,n}} e^{-isH_n}$
and put $\lambda_n(s)^{-1}  \doteq \ad  
\Lambda_n(s)^{-1}$, $s \in \RR$. Given $B_n \in \cB(\cF_n)$, one obtains by the
familiar Dyson expansion the equality
\begin{align*}
& \lambda_n(s)^{-1}(B_n) = B_n \\
& + \! 
\sum_{k = 1}^\infty (-i)^k \! \! \int_0^s \! \!  du_k \! \! \int_0^{u_k}  
\! \! \! du_{k-1} \dots
\! \! \int_0^{u_2} \! \! \! \! \! du_1 \,
[V_{\! n}(u_k)_0, [V_{\!n}(u_{k-1})_0, [ \cdots [V_{\!n}(u_1)_0, B_n]]] \cdots ]
\hspace{0,2mm} ,
\end{align*}
where  $V_{\!n}$ is the restriction of the interaction potential to $\cF_n$.
Since the underlying pair potential is bounded, this series converges 
absolutely in the norm topology. Moreover, for $s_2 \geq s_1$ 
one has
$$
\| (\lambda_n(s_2)^{-1} - \lambda_n(s_1)^{-1})(B_n) \| 
\leq \| B_n \| \, 
\sum_{k=1}^\infty 2^k/k! \ \| V_{\!n} \|^k \, \int_{s_1}^{s_2} 
\! du \, |u|^{k-1} \, ,
$$
proving that the function $s \mapsto \lambda_n(s)^{-1}$ of linear maps 
on $\cB(\cH_n)$ is norm continuous. 
It is also apparent from the dominated convergence theorem 
that these maps are normal for fixed $s$. Finally, it
was shown in Proposition~4.4 and the appendix of \cite{Bu1} that 
the interacting and non-interacting time evolutions map the 
algebra $\fK_n$ onto itself, hence this is also true for 
$\lambda_n(s)^{-1}$, $s \in \RR$. 

\medskip
Thus the maps $\lambda_n(s)^{-1}$ are automorphisms, both, of $\cB(\cF_n)$ and
of~$\fK_n$. So all preceding statements hold also for
their inverse $\lambda_n(s)$, given by the adjoint action of 
$\Lambda_n(s) = e^{isH_n} e^{-isH_{0, n}}$, $s \in \RR$. Hence
the maps $s \mapsto \lambda_n(s)$ comply with all 
conditions given in Lemma~\ref{la.6}. Moroever, the 
function $s \mapsto B_n(s)_0$ is s.o. continuous and, by 
assumption, $\int_0^t \! ds  \,  B_n(s)_0 \in \fK_n$, $t \in \RR$. It therefore
follows from Lemma~\ref{la.6} that 
$$
t \mapsto \int_0^t \! ds  \,  B_n(s) = \int_0^t \! ds  \, 
\lambda_n(s)(B_n(s)_0) 
$$
is norm continuous and has values in $\fK_n$,
completing the proof. \qed \end{proof}

\medskip
With the help of the preceding three lemmas we can establish now the 
main result of this subsection.

\begin{proposition} \label{pa.10}
Let $n \in \NN_0$.  The function 
$t \mapsto \Gamma_{f,n}(t)$, given in Eqn.~\eqref{ea.4}, 
is norm-continuous
and has values in $\fK_n$, $t \in \RR$.
\end{proposition}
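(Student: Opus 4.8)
The plan is to first establish the property for the single time-ordered integral $t \mapsto \int_0^t \! ds \, C_{f,n}(s)$ and then to bootstrap it to the full Dyson series \eqref{ea.4}. Recall from Proposition~\ref{pa.5} that $C_{f,n}(s) = \ad e^{isH_n}(A_{f,n} + B_{f,n})$ with $A_{f,n} \in \fK_n$ and $B_{f,n}$ bounded. For the summand $A_{f,n}$ there is nothing to do: the interacting time evolution maps $\fK_n$ onto itself and acts pointwise norm continuously on it (Proposition~4.4 and the appendix in~\cite{Bu1}), so $t \mapsto \int_0^t \! ds \, \ad e^{isH_n}(A_{f,n})$ is a norm-convergent integral of a norm-continuous $\fK_n$-valued function and thus lies in the norm-closed space $\fK_n$ and depends norm continuously on $t$.

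The content is in the summand $B_{f,n}$, which by Proposition~\ref{pa.5} equals $\hat{B}_{f,n} + \beta_{f,n} \scirc \sigma_{f,n-1}^{-1}(\check{B}_{f,n-1})$ and is assembled from the second quantizations of the \emph{localized} pair potentials $\hat{V}_{f,2}, \check{V}_{f,2}$ (which themselves do not lie in $\fK_n$) together with the factors $N_{f,n}^{-1}E_{f,n} \in \fK_n$ and $(\one_{n-1}+N_{f,n-1})^{\pm 1/2} \in \fK_{n-1}$ and the map $\beta_{f,n}$. First I would treat the non-interacting evolution: by Lemma~\ref{la.8} the integrals $t \mapsto \int_0^t \! ds \, \hat{V}_{f,n}(s)_0$ and $t \mapsto \int_0^t \! ds \, \check{V}_{f,n-1}(s)_0$ take values in $\fK_n$ resp.\ $\fK_{n-1}$ and are norm continuous. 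Feeding these into Lemma~\ref{la.7}(i) (to absorb the factor $N_{f,n}^{-1}E_{f,n}$ in $\hat{B}_{f,n} = \hat{V}_{f,n} N_{f,n}^{-1} E_{f,n}$) and into Lemma~\ref{la.7}(ii) (to absorb $\beta_{f,n}$ together with the conjugations in $\sigma_{f,n-1}^{-1}(\check{B}_{f,n-1}) = \sigma_{f,n-1}^{-1}(\check{V}_{f,n-1}) - \check{V}_{f,n-1}$, writing $\sigma_{f,n-1}^{-1}(\check{V}_{f,n-1}) = (\one_{n-1}+N_{f,n-1})^{1/2} \check{V}_{f,n-1} (\one_{n-1}+N_{f,n-1})^{-1/2}$) yields $t \mapsto \int_0^t \! ds \, \ad e^{isH_{0,n}}(B_{f,n}) \in \fK_n$, norm continuous. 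Lemma~\ref{la.9} (applied with $B_n = B_{f,n}$) then upgrades this from the non-interacting to the interacting dynamics, so that $t \mapsto \int_0^t \! ds \, C_{f,n}(s) \in \fK_n$ and is norm continuous. (For small $n$ some of the terms $\hat{B}_{f,n}, \check{B}_{f,n-1}$ are absent by Lemmas~\ref{la.2} and~\ref{la.3}, and $\Gamma_{f,0}(t)=0$ since $E_f \cF_0 = 0$, so these cases are covered a fortiori.)

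To pass to the full series, write $H_k(t)$ for the $k$-fold time-ordered integral in \eqref{ea.4}; pulling the leftmost factor out of the inner integrations gives the recursion $H_k(t) = \int_0^t \! ds \, C_{f,n}(s) \, H_{k-1}(s)$, with base case $H_1(t) = \int_0^t \! ds \, C_{f,n}(s)$, the integral just treated. I would prove by induction on $k$ that $t \mapsto H_k(t)$ is norm continuous with values in $\fK_n$. Given the claim for $k-1$, apply Lemma~\ref{la.6} with $\cH_1 = \cH_2 = \cF_n$, $\fB_1 = \fB_2 = \fK_n$, the strongly continuous operator function $B_1(s) = C_{f,n}(s)$ (whose integral lies in $\fK_n$ by the previous step), and the linear maps $\lambda(s) : X \mapsto X \, H_{k-1}(s)$: right multiplication is normal, it maps $\fK_n$ into $\fK_n$ since $\fK_n$ is an algebra containing $H_{k-1}(s)$, and $s \mapsto \lambda(s)$ is norm continuous because $\| \lambda(s) - \lambda(s') \| = \| H_{k-1}(s) - H_{k-1}(s') \|$. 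Hence $H_k(t) = \int_0^t \! ds \, \lambda(s)(B_1(s)) \in \fK_n$, norm continuous. Since $E_{f,n}$ is a bounded function of $N_{f,n}$ and therefore lies in $\fK_n$, and the series $\sum_k i^k H_k(t)$ converges absolutely in norm, uniformly on compact $t$-intervals (from $\| H_k(t) \| \le \| C_{f,n} \|^k |t|^k / k!$), it follows that $\Gamma_{f,n}(t) = E_{f,n} + \sum_{k \ge 1} i^k H_k(t)$ lies in the norm-closed algebra $\fK_n$ and depends norm continuously on $t$. The main obstacle is the first step — showing that the integral over the non-$\fK_n$ part $B_{f,n}$ returns to $\fK_n$ — where the localization built into $\hat{V}_{f,2}, \check{V}_{f,2}$ is essential and the compactness is gained only through the time averaging of Lemma~\ref{la.8}; the bootstrap via Lemma~\ref{la.6} is then routine.
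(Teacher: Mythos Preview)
Your proposal is correct and follows essentially the same route as the paper's proof: treat $A_{f,n}$ directly via \cite[Prop.~4.4]{Bu1}, handle $B_{f,n}$ by first passing through the non-interacting dynamics using Lemmas~\ref{la.8} and~\ref{la.7} and then upgrading via Lemma~\ref{la.9}, and finally bootstrap the higher time-ordered integrals inductively with Lemma~\ref{la.6} applied to the right-multiplication maps $\lambda(s)(X)=X\,H_{k-1}(s)$. Your remarks about the small-$n$ cases and about $E_{f,n}\in\fK_n$ are a bit more explicit than the paper's, but the argument is otherwise the same.
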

\begin{proof}
We begin by studying the properties of the function 
$t \mapsto \int_0^t \! ds \, C_{f,n}(s)$, which appears in lowest non-trivial
order of the series expansion \eqref{ea.4}.
According to  Proposition~\ref{pa.5} one has   
\mbox{$C_{f,n} = A_{f,n} + B_{f,n}$}, where $A_{f,n} \in \fK_n$
and $B_{f,n}$ is bounded. 

\medskip
It was shown in  
\cite[Prop.~4.4]{Bu1} and the appendix of that   
reference that the function $ s \mapsto A_{f,n}(s)$, involving the
interacting dynamics, is norm continuous, $s \in \RR$.
Its integrals $t \mapsto \int_0^t \! ds \, A_{f,n}(s)$ are therefore 
defined in the norm topology, whence have values in $\fK_n$, and
depend norm continuously on $t \in \RR$.

\medskip
Turning to the operators $B_{f,n}$, we recall their form
\ $B_{f} = \hat{B}_{f} + \beta_{f} \, \scirc \, 
\sigma_{f}^{-1}(\check{B}_{f})$, established in 
Proposition~\ref{pa.5}. Plugging into this
equation the operators $\check{B}_f, \hat{B}_f$, given 
in Lemmas~\ref{la.2} and \ref{la.3}, as well as 
the maps $\sigma_f$, $\beta_f$,
defined in Eqns.~\eqref{ea.2} and \eqref{ea.3},
respectively, we obtain for $B_{f,n} = B_f \upharpoonright \cF_n$, $n \in \NN$,
$$
B_{f,n} = \hat{V}_{f,n} N_{f,n}^{-1} E_{f,n}
+ \beta_{f,n} \big((1 + N_{f,n-1})^{1/2} \check{V}_{f,n-1}
(1 + N_{f,n-1})^{-1/2} - \check{V}_{f,n-1} \big) \, .
$$
It was shown in Lemma~\ref{la.8} that the 
integrals of the localized potentials  
with regard to the non-interacting dynamics, 
$\int_0^t \! ds \, \hat{V}_{f,n}(s)_0$, \
$\int_0^t \! ds \, \check{V}_{f,n}(s)_0$, are elements of $\fK_n$, 
$t \in \RR$. Since the operators $N_{f,n}^{-1} E_{f,n}$ and
$(1 + N_{f,n})^{\pm 1/2}$ are elements of $\fK_n$,
Lemma~\ref{la.7} implies that this is also  
true for the integral $\int_0^t \! ds \, B_{f,n}(s)_0$. 
It then follows from Lemma~\ref{la.9}
that the integral with regard to the interacting 
dynamics, $\int_0^t \! ds \, B_{f,n}(s)$ has values 
in $\fK_n$. Combining the preceding results, we see that the 
function $t \mapsto \int_0^t \! ds \, C_{f,n}(s)$ 
has values in $\fK_n$. Since $C_{f,n}$ is bounded,
it is also clear that it is norm continuous, $t \in \RR$.

\medskip 
The proof that also the contributions of higher order in the 
series expansion~\eqref{ea.4} are contained in $\fK_n$ 
is accomplished by induction. Putting 
\begin{equation} \label{ea.5}
t \mapsto D_{n,k}(t) \doteq 
\int_0^t \! \! ds_k \! \! \int_0^{s_k} \! \!  \! ds_{k-1} \! \dots \! \!
\int_0^{s_2} \! \! ds_1 \,
C_{f,n}(s_k) \cdots  C_{f,n}(s_1) \, ,  \ \ k \in \NN \, ,
\end{equation}
we will show that these functions have values in $\cK_n$ and
are norm continuous, $t \in \RR$.
For $k=1$ this was shown in the preceding step.

\medskip 
For the induction step from $k$ to $k+1$, we make use of the fact that
relation~\eqref{ea.5} implies that \   
$D_{n,k+1}(t) = \int_0^t \! ds \, C_{f,n}(s) \, D_{n,k}(s)$. 
According to the induction hypothesis, the function 
$s \mapsto D_{n,k}(s)$ is norm  continuous and has values in~$\fK_n$.
Thus the function $s \mapsto \lambda_{k,n}(s)$ of normal linear maps
on $\cB(\cF_n)$ given by 
$\lambda_{k,n}(s)(B_n) \doteq B_n D_{n,k}(s)$, $B_n \in \cB(\cF_n)$,
is norm continuous and maps $\fK_n$ into itself.
The function $s \mapsto C_{f,n}(s)$ 
is s.o. continuous and $\int_0^t \! ds \, C_{f,n}(s) \in \fK_n$,
$t \in \RR$, as was shown in the initial step. Hence, according to 
Lemma~\ref{la.6}, the function 
$t \mapsto D_{n,k+1}(t) = \int_0^t \! ds \, C_{f,n}(s) \, D_{n,k}(s) = 
\int_0^t \! ds \, \lambda_{k,n}(s)\big(C_{f,n}(s)\big)$ has the
desired properties, completing the induction.

\medskip
So each term in the Dyson expansion \ref{ea.4}
is an element of $\fK_n$. Moreover, this series converges absolutely
in the norm topology, which implies $\Gamma_{f,n}(t) \in \fK_n$.
Since the operators $C_{f,n}$ are bounded, it is also
clear that the function $t \mapsto \Gamma_{f,n}(t)$ is norm continuous,
$t \in \RR$, 
cf.\ the argument in Lemma \ref{la.9}. This completes the
proof of the proposition.
\qed \end{proof}

\subsection{Verification of the coherence condition} \label{ssa.4} \hfill

\noindent Having seen that the operators 
$\Gamma_{f,n}(t)$, defined in Eqn.~\eqref{ea.4},
are elements of $\cK_n$, $t \in \RR$, 
we will show next that these operators form coherent sequences,
$n \in \NN_0$. 
At this point the inverse maps $\kappa_n: \cK_n \rightarrow \cK_{n-1}$,
defined in Eqn.~\eqref{e3.2}, 
enter. We recall that these maps are homomorphisms,
mapping $\cK_n$ onto $\cK_{n-1}$, and that a sequence of operators 
$\bK = \{ K_n \in \fK_n \}_{n \in \NN_0}$ is said to be coherent if
$\kappa_n(K_n) = K_{n-1}$, $n \in \NN_0$.

In order to establish the desired result, we make use again of the
Eqn.~\eqref{ea.5}, relating subsequent terms in the Dyson
expansion \eqref{ea.4}. The essential step in our argument 
consists of proving the equality \ 
$$ 
\kappa_n \Big( \int_0^t \! ds \, C_{f,n}(s) D_n(s) \Big) = 
\int_0^t \! ds \, C_{f,n-1}(s) \, \kappa_n\big(D_n(s)\big) 
$$
for any norm continuous function $s \mapsto D_n(s)$ with values in 
$\fK_n$, $n \in \NN_0$. Since the values of the functions 
$s \mapsto C_{f,n}(s)$ 
are not contained in $\fK_n$, this requires some
further arguments. We begin with a statement, involving
the non-interacting time evolution. 

\begin{lemma} \label{la.11} 
Let $m = 1$ or $2$, let $O_m$ be a bounded $m$-particle operator
on $\cF_m$ such that $\ \int_0^t \! ds \, O_m(s)_0 \, $ is a compact 
operator, $t \in \RR$, and let $O$ be the second quantization of $O_m$.  
Putting $O_n \doteq O \upharpoonright \cF_n$, one has 
$\, \int_0^t \! ds \, O_n(s)_0  \in \fK_n \, $, $t \in \RR$, and 
$$ \kappa_n \Big( \int_0^t \! ds \, O_n(s)_0 \Big) = 
\int_0^t \! ds \, O_{n-1}(s)_0 \, , \quad n \in \NN_0 \, . $$
\end{lemma}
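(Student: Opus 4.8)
The plan is to reduce the statement to two ingredients: the tensor‑product structure of the non‑interacting time evolution, and the explicit action \eqref{e3.3} of the inverse maps $\kappa_n$ on generators. First I would record that, since $H_0$ is the second quantization of the single particle operator $\bP_\kappa^2 = \bP^2 + \kappa^2 \bQ^2$, the non‑interacting time translations on $\cF_n$ factorize into $n$‑fold tensor products of $e^{is\bP_\kappa^2}$ and hence do not mix the tensor factors. Using $e^{isH_0} a^*(g) e^{-isH_0} = a^*(e^{is\bP_\kappa^2} g)$ together with the second‑quantization formulae recalled at the beginning of this appendix, it follows that for the second quantization $O$ of an $m$‑particle operator $O_m$ (with $m \in \{1,2\}$) one has, for $n \geq m$,
$$
O_n(s)_0 \;=\; \frac{n!}{(n-m)!}\,\Big( O_m(s)_0 \otimes_s \underbrace{1 \otimes_s \cdots \otimes_s 1}_{n-m}\Big)\,,
\qquad O_m(s)_0 \doteq \ad e^{isH_{0,m}}(O_m)\,,
$$
while $O_n(s)_0 = 0$ for $n < m$; in other words $O_n(s)_0$ is the second quantization of $O_m(s)_0$, restricted to $\cF_n$.

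The linear map $C \mapsto \frac{n!}{(n-m)!}\,\big(C \otimes_s \underbrace{1 \otimes_s \cdots \otimes_s 1}_{n-m}\big)$ from $\cB(\cF_m)$ into $\cB(\cF_n)$ is bounded and normal, hence commutes with the integral defined in the strong operator topology over the compact interval $[0,t]$ (the integrand $s \mapsto O_n(s)_0$ being uniformly bounded and strongly continuous), so that
$$
\int_0^t\! ds\; O_n(s)_0 \;=\; \frac{n!}{(n-m)!}\,\Big(\Big(\int_0^t\! ds\; O_m(s)_0\Big)\otimes_s \underbrace{1 \otimes_s \cdots \otimes_s 1}_{n-m}\Big)\,.
$$
By hypothesis $C_m(t) \doteq \int_0^t\! ds\; O_m(s)_0$ is a compact operator on $\cF_m$, so the right hand side lies in $\fC_m \otimes_s \underbrace{1 \otimes_s \cdots \otimes_s 1}_{n-m} \subset \fK_n$ by \eqref{e3.1}; for $n < m$ the operator vanishes and lies trivially in $\fK_n$. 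This establishes the first assertion.

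For the coherence relation I would then apply the homomorphism $\kappa_n$ to the generator $C_m(t) \otimes_s \underbrace{1 \otimes_s \cdots \otimes_s 1}_{n-m}$ by \eqref{e3.3} with $k = m$:
$$
\kappa_n\Big(\int_0^t\! ds\; O_n(s)_0\Big) \;=\; \frac{n!}{(n-m)!}\cdot\frac{n-m}{n}\,\Big(C_m(t)\otimes_s \underbrace{1 \otimes_s \cdots \otimes_s 1}_{n-m-1}\Big) \;=\; \frac{(n-1)!}{(n-1-m)!}\,\Big(C_m(t)\otimes_s \underbrace{1 \otimes_s \cdots \otimes_s 1}_{n-1-m}\Big)\,,
$$
and by the identity of the first step, read with $n-1$ in place of $n$, the right hand side equals $\int_0^t\! ds\; O_{n-1}(s)_0$. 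The boundary cases are immediate: for $n = m$ the factor $(n-m)/n$ vanishes, so $\kappa_m(\cdot) = 0 = \int_0^t\! ds\; O_{m-1}(s)_0$ since $O_{m-1} = 0$, and for $n < m$ both sides vanish.

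I do not anticipate a serious obstacle in this lemma; the content is essentially bookkeeping. The single point requiring care is that the combinatorial weight $n!/(n-m)!$ carried by the lift of an $m$‑particle operator to $\cF_n$ is compatible with $\kappa_n$, i.e.\ that $\frac{n!}{(n-m)!}\cdot\frac{n-m}{n} = \frac{(n-1)!}{(n-1-m)!}$ — which is exactly what the factor $(n-k)/n$ in \eqref{e3.3} is designed to produce. The interchange of the strong‑operator integral with the normal tensoring map, and the verification that this integral defines a bounded operator on $\cF_n$, are routine.
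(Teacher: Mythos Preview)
Your proof is correct and follows essentially the same route as the paper's own proof: exploit that the non-interacting evolution factorizes through the tensor product so that $O_n(s)_0$ is the lift of $O_m(s)_0$, pull the tensoring map through the integral to land in $\fK_n$, and then apply the formula \eqref{e3.3} for $\kappa_n$ together with the binomial identity $\frac{n!}{(n-m)!}\cdot\frac{n-m}{n}=\frac{(n-1)!}{(n-1-m)!}$. Your write-up is in fact slightly more explicit than the paper's in justifying the interchange of the strong-operator integral with the normal tensoring map and in handling the boundary cases $n\le m$, but the argument is the same.
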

\begin{proof}
The second quantizations of one- and two-particle operators and their 
restrictions to $\cF_n$ were explained in subsection \ref{ssa.1}. 
Since the non-interacting time evolution does not mix tensor factors, one has 
\begin{align*}
\int_0^t \! ds \, O_n(s)_0 & =  \mybinom[0.85]{n}{m}
\int_0^t \! ds \, \big( m \, O_m \otimes_s 
\underbrace{1 \otimes_s \cdots \otimes_s 1}_{n-m} \big)(s)_0 \\ 
& = \mybinom[0.85]{n}{m}  \Big( \int_0^t \! ds \, m \, O_m(s)_0 \Big) \otimes_s 
\underbrace{1 \otimes_s \cdots \otimes_s 1}_{n-m}  
\in \fK_n \, .
\end{align*}
Applying $\kappa_n$, cf.\ relation \eqref{e3.3}, one obtains
\begin{align*}
\kappa_n \Big(\! \! \int_0^t \! ds \, O_n(s)_0 \Big)
& = \big((n-m)/n \big) \, \mybinom[0.85]{n}{m}
 \Big( \! \! \int_0^t \! ds \, m \, O_m(s)_0 \Big) \otimes_s 
\underbrace{1 \otimes_s \cdots \otimes_s 1}_{n-m-1}  \\
& = \mybinom[0.85]{n-1}{m}
 \Big( \! \! \int_0^t \! ds \, m \, O_m(s)_0  \Big) \otimes_s 
\underbrace{1 \otimes_s \cdots \otimes_s 1}_{n-m-1} \\
& = \int_0^t \! ds \, O_{n-1}(s)_0 \, ,
\end{align*}
completing the proof.
\qed \end{proof}

In the next step we extend this result to operators 
$O_n$, which are sandwiched 
between elements of $\fK_n$ and are acted upon by the maps $\beta_{g,n}$, 
defined in Eqn.~\eqref{ea.3}. 

\begin{lemma} \label{la.12}
Let $n \in \NN_0$ and let 
$O_n \doteq O \upharpoonright
\cF_n$, $n \in \NN_0$, be the restriction of a second quantized one-
or two-particle operator with properties given in the preceding
lemma. Then

\medskip
(i) for $K_n', K_n'' \in \fK_n$ one has 
$$
\kappa_n\Big( \int_0^t \! ds \, \big(K_n' O_n K_n''\big)(s)_0 \Big)
= \int_0^t \! ds \, \big( \kappa_n(K_n') \,  O_{n-1} \, \kappa_n(K_n'') 
\big)(s)_0 \, , 
$$

\medskip
(ii) for $K_{n-1}', K_{n-1}'' \in \fK_n$ 
and any normalized \mbox{$g \in L^2(\RR^s)$} one has 
\begin{align*}
\kappa_n \Big( & \int_0^t \! ds \,
\beta_{g,n}(K_{n-1}' O_{n-1} K_{n-1}''\big)(s)_0 \Big) \\
& = \int_0^t \! ds \, \beta_{g,n-1}\big( 
\kappa_{n-1}(K_{n-1}') \,  O_{n-2} \, \kappa_n(K_{n-1}'') 
\big)(s)_0 \, , \quad t \in \RR \, .
\end{align*}
The integrals in (i) and (ii) are elements of $\fK_n$, 
respectively~$\fK_{n-1}$, cf.\ Lemmas~\eqref{la.11} and~\eqref{la.7}.
\end{lemma}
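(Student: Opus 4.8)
The plan is to reduce both statements to the commutation formula of Lemma~\ref{la.11} by stripping off the outer factors $K_n', K_n''$ (resp.\ $\beta_{g,n}$ composed with inner factors), exploiting that $\kappa_n$ is a homomorphism. For part~(i), I would apply Lemma~\ref{la.6} with $\cH_1 = \cH_2 = \cF_n$, $\fB_1 = \fB_2 = \fK_n$, the function $B_1(s) = O_n(s)_0$ (which satisfies $\int_0^t ds\, B_1(s) \in \fK_n$ by Lemma~\ref{la.11}), and the norm-continuous family of normal maps $\lambda(s)(B) \doteq K_n'(s)_0\, B\, K_n''(s)_0$ (normality is clear; norm continuity and $\lambda(s)(\fK_n) \subset \fK_n$ follow since the non-interacting dynamics acts pointwise norm-continuously on $\fK_n$, cf.\ Proposition~4.4 and the appendix of~\cite{Bu1}). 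Lemma~\ref{la.6} then gives $\int_0^t ds\, \big(K_n' O_n K_n''\big)(s)_0 \in \fK_n$ as a norm limit of the Riemann-type sums $\sum_l \lambda(lt/m)\big(\int_{(l-1)t/m}^{lt/m} ds\, O_n(s)_0\big)$. Since $\kappa_n$ is a norm-continuous homomorphism, I may pass it through the finite sum and the limit, obtaining $\kappa_n\big(\int_0^t ds\, (K_n' O_n K_n'')(s)_0\big) = \lim_m \sum_l \kappa_n\big(K_n'(lt/m)_0\big)\,\kappa_n\big(\int_{(l-1)t/m}^{lt/m} ds\, O_n(s)_0\big)\,\kappa_n\big(K_n''(lt/m)_0\big)$.

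Now the key point is the intertwining of $\kappa_n$ with the non-interacting time evolution: because the restriction of $e^{isH_{0,n}}$ to $\fK_n$ descends under $\kappa_n$ to $e^{isH_{0,n-1}}$ on $\fK_{n-1}$ (this is the statement that the non-interacting dynamics respects the inverse system $\{\fK_n,\kappa_n\}$, established in the appendix of~\cite{Bu1}), one has $\kappa_n\big(K_n(s)_0\big) = \big(\kappa_n(K_n)\big)(s)_0$ for any $K_n \in \fK_n$, and likewise $\kappa_n\big(\int_{(l-1)t/m}^{lt/m} ds\, O_n(s)_0\big) = \int_{(l-1)t/m}^{lt/m} ds\, O_{n-1}(s)_0$ by Lemma~\ref{la.11} applied on each subinterval. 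Substituting these, the $m$th sum becomes exactly the Riemann sum for $\int_0^t ds\, \big(\kappa_n(K_n')\, O_{n-1}\, \kappa_n(K_n'')\big)(s)_0$, and taking $m \to \infty$ (using Lemma~\ref{la.6} once more, now at level $n-1$, to identify this limit) yields the claimed equality. Part~(ii) is handled identically, with $\cH_2 = \cF_n$, $\cH_1 = \cF_{n-1}$, $\fB_2 = \fK_n$, $\fB_1 = \fK_{n-1}$, and the map $\lambda(s)(B) \doteq \beta_{g(s)_0,n}\big(K_{n-1}'(s)_0\, B\, K_{n-1}''(s)_0\big)$, whose norm continuity and mapping property $\lambda(s)(\fK_{n-1}) \subset \fK_n$ come from Lemma~\ref{la.4}; here I additionally need that $\kappa_n \circ \beta_{g,n} = \beta_{g,n-1} \circ \kappa_{n-1}$ on $\fK_{n-1}$, which follows from the explicit action~\eqref{e3.3} of $\kappa_n$ together with the formula $\beta_{g,n}(C_k \otimes_s 1 \otimes_s \cdots) = (\beta_{g} \text{ acting on the new slot})$ — more precisely from Lemma~\ref{la.4}'s proof, where $\beta_{g,n}(K_{n-1})$ arises by restricting $W_g A_{n-1} W_g^*$, an observable, so applying $\kappa$ commutes with the $W_g$-conjugation.

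I expect the main obstacle to be the bookkeeping in part~(ii): establishing $\kappa_n \circ \beta_{g,n} = \beta_{g,n-1} \circ \kappa_{n-1}$ cleanly, and in particular handling the fact that $g$ is time-dependent, $g(s)_0 = e^{is\bP_\kappa^2} g$, so that inside the Riemann sum the map $\beta_{g(lt/m)_0,n}$ itself varies with $l$ — but since $s \mapsto \beta_{g(s)_0,n}$ is norm continuous by Lemma~\ref{la.4} and the evaluation points $g(lt/m)_0$ are again normalized, this causes no trouble beyond careful notation. A secondary subtlety is merely to record that the hypotheses of Lemma~\ref{la.11} (compactness of $\int_0^t ds\, O_m(s)_0$ on $\cF_m$) are exactly what Lemma~\ref{la.8} has verified for the localized potentials $\hat V_{f,2}$ and $\check V_{f,2}$, so that the present lemma does apply to the operators that will actually occur in the Dyson expansion~\eqref{ea.4}. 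The final sentence of the statement — that the integrals lie in $\fK_n$ resp.\ $\fK_{n-1}$ — is then immediate from Lemmas~\ref{la.11} and~\ref{la.7}, as already noted.
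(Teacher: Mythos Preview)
Your proposal is correct and follows essentially the same approach as the paper: set up the map $\lambda(s)$ so as to invoke Lemma~\ref{la.6}, pass $\kappa_n$ through the resulting Riemann sums using that it is a norm-continuous homomorphism intertwining the non-interacting dynamics, apply Lemma~\ref{la.11} on each subinterval, and take the limit. For part~(ii) the paper organizes the argument slightly differently---it places only $\beta_{g(s),n}$ in the outer map and appeals to part~(i) for the inner integral $\int_0^t ds\,(K_{n-1}' O_{n-1} K_{n-1}'')(s)_0$, rather than lumping all three factors into $\lambda(s)$ as you do---but this is a cosmetic reordering, and your derivation of the intertwining relation $\kappa_n \circ \beta_{g,n} = \beta_{g,n-1} \circ \kappa_{n-1}$ via the observable lift $W_g A W_g^* \in \obfA$ is exactly the paper's argument.
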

\begin{proof}
(i) Let $s \mapsto \lambda_n(s)$, $s \in \RR$, be the function,
having values in 
normal linear maps on $\cB(\cF_n)$, which is given by
$$
\lambda_n(s)(B_n) \doteq (K_n')(s)_0 \, B_n \, (K_n'')(s)_0 \, , 
\quad B_n \in \cB(\cF_n) \, .
$$ 
In view of the norm continuous action of the time translations on $\fK_n$,
this function is norm continuous and 
maps $\fK_n$ into itself. Next, the function $s \mapsto O_n(s)_0$
is s.o.~continuous and $\int_0^t \! ds \, O_n(s)_0 \in \fK_n$ according to the
preceding lemma, $t \in \RR$. Thus, by Lemma \eqref{la.6}, the 
function $t \mapsto \int_0^t \! ds \, \big( K_n' \, O_n \, K_n'' \big)(s)_0$
has values in $\fK_n$. Moreover, it can be approximated  
in the limit of large $m \in \NN$ by finite sums of the form 
\begin{align*}
& \sum_{l = 1}^m \lambda(lt/m)\Big(
\int_{(l-1)t/m}^{lt/m} \! ds \, O_n(s)_0 \Big) \\ 
& = 
\sum_{l = 1}^m K_n'(lt/m)_0  \Big( \int_{(l-1)t/m}^{lt/m} \! ds \, 
O_n(s)_0 \Big)  K_n''(lt/m)_0 \, . 
\end{align*}
We apply to this equality the homomorphism $\kappa_n$, taking
into account that $\kappa_n \big(K_n(s)_0 \big) = 
\big(\kappa_n (K_n)\big)(s)_0$, $s \in \RR$, 
since the non-interacting dynamics does not mix tensor factors 
of the operators $K_n \in \fK_n$. 
So by Lemma \ref{la.11}, we obtain
\begin{align*}
\kappa_n & \Big( \sum_{l = 1}^m \lambda(lt/m)\Big(\int_{(l-1)t/m}^{lt/m} \! \! ds \, 
O_n(s)_0 \Big) \Big) \\
& = \sum_{l = 1}^m \kappa_n(K_n')(lt/m)_0 \, \Big( \int_{(l-1)t/m}^{lt/m} \! \! ds \, 
O_{n-1}(s)_0 \Big) \, \kappa_n(K_n'')(lt/m)_0 \, .
\end{align*}
Since $\kappa_n$ is norm continuous, we can 
proceed in the latter equality to the limit $m \rightarrow \infty$,
giving the first statement of the lemma.

\medskip
(ii) For the proof of the second statement, we make use of the fact
that for any $K_{n-1} \in \fK_{n-1}$ there exists some operator 
$A \in \obfA$ such that $A \upharpoonright \cF_{n-1} = K_{n-1}$,
cf.~\cite[Lem.~3.3]{Bu1}. Conversely, 
given an observable $A$ and any $l \in \NN_0$, there 
exists some operator $K_l \in \fK_l$ such that $K_l = A \upharpoonright \cF_l$, 
cf.~\cite[Lem.~3.2]{Bu1}. Moreover, the operators 
$K_l$ satisfy the coherence condition
$\kappa_l(K_l) = K_{l-1}$, cf.~\cite[Lem.~3.4]{Bu1}. 
Since $\beta_{g}(A) \in \obfA$, the latter fact implies     
$L_l \doteq \beta_{g,l}(K_{l-1}) = \beta_{g}(A) \upharpoonright \cF_l
\in \fK_l$. Thus  
\begin{align*}
\kappa_l\big(\beta_{g,l}(K_{l-1})\big)  =
\kappa_l\big(L_l) = L_{l-1} = \beta_{g,l-1}(K_{l-2}\big) 
= \beta_{g,l-1}\big(\kappa_l(K_{l-1})\big) \, ,
\end{align*} 
leading to the intertwining relations 
$\kappa_l \, \scirc \, \beta_{g,l} = \beta_{g,l-1} \, \scirc \, \kappa_l$,
$l \in \NN_0$. 

\medskip 
Bearing in mind that we are dealing with the non-interacting time evolution, 
we can proceed now as in the proof of Lemma \ref{la.7}, giving 
$$
\big( \beta_{g,n}(K'_{n-1} \, O_{n-1} \, K''_{n-1}) \big)(s)_0 =
\beta_{g(s),n} \big((K'_{n-1} \, O_{n-1} \, K''_{n-1})(s)_0 \big) \, ,
$$
where $g(s) = e^{i s \sbP^{\, 2}_{\kappa}} \, g$, $\, s \in \RR$. 
The function $s \mapsto \beta_{g(s),n}$ of normal linear maps on 
$\cB(\cF_{n-1})$ is norm continuous and maps 
$\fK_{n-1}$ into $\fK_n$, cf.\ Lemma~\ref{la.4}.
Moreover, $\int_0^t \! ds \, \big( K_{n-1}' O_{n-1} K_{n-1}'' \big)(s)_0 
\in \fK_{n-1}$
according to the preceding step. Hence, by Lemma~\ref{la.6},
the integrals 
$\int_0^t  \! ds \, \big(\beta_{g,n} (K_{n-1}' O_{n-1} K_{n-1}'') \big)(s)_0$
are elements of $\fK_n$, $t \in \RR$. They can be approximated in norm 
in the limit of large $m \in \NN$ by the sums 
$$
\sum_{l = 1}^m \beta_{g \, (lt/m), \, n} \Big( 
\int_{(l-1)t/m}^{lt/m} \! ds \, (K'_{n-1} \, O_{n-1} \, K''_{n-1})(s)_0  
\Big) \, .
$$
Applying to this relation the homomorphism $\kappa_n$, 
the initial remarks imply 
\begin{align*}
\kappa_n &  \Big(\sum_{l = 1}^m \beta_{g \, (lt/m), \, n} \Big( 
\int_{(l-1)t/m}^{lt/m} \! ds \, (K'_{n-1} \, O_{n-1} \, K''_{n-1})(s)_0  
\Big) \Big)  \\
 = \sum_{l = 1}^m &  \, \beta_{g \, (lt/m), \, n-1} \Big( 
\int_{(l-1)t/m}^{lt/m} \! ds \, (\kappa_n(K'_{n-1}) \, O_{n-2} \, 
\kappa_n(K''_{n-1}) \Big)(s)_0 \, . 
\end{align*}
Proceeding again to the limit of large $m$, this establishes
the second part of the statement.
\qed \end{proof}

The statements of the preceding two lemmas 
remain true if one replaces the non-interacting dynamics by the 
interacting one. The proof of this 
assertion is accomplished by the following result.

\begin{lemma} \label{la.13}
Let $B_n \in \cB(\cF_n)$, $B_{n-1} \in \cB(\cF_{n-1})$,   
such that \mbox{$\int_0^t \! ds \, B_n(s)_0 \in \fK_n$} and \ 
$\kappa_n \big(\int_0^t \! ds \, B_n(s)_0 \big) 
= \int_0^t \! ds \, B_{n-1}(s)_0$, $n \in \NN_0$. \
Then one has for the interac\-ting dynamics \
$\int_0^t \! ds \, B_n(s) \in \fK_n$  and \
$\kappa_n \big(\int_0^t \! ds \, B_n(s) \big) 
= \int_0^t \! ds \, B_{n-1}(s)$, $t \in \RR$.  
\end{lemma}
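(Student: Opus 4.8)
The plan is to reduce everything to Lemma~\ref{la.6} together with the behaviour of the interacting dynamics under the coherence maps. The membership statement $\int_0^t \! ds\, B_n(s) \in \fK_n$ is already Lemma~\ref{la.9}, so the only new content is the coherence relation $\kappa_n\big(\int_0^t \! ds\, B_n(s)\big) = \int_0^t \! ds\, B_{n-1}(s)$. Throughout I would work with the maps $\lambda_n(s) \doteq \ad\big(e^{isH_n}e^{-isH_{0,n}}\big)$ from the proof of Lemma~\ref{la.9}: they are norm continuous in $s$, their values are normal linear maps of $\cB(\cF_n)$ preserving $\fK_n$, and $B_n(s) = \lambda_n(s)\big(B_n(s)_0\big)$ for $s \in \RR$.

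The key auxiliary fact I would establish first is the intertwining identity $\kappa_n \circ \lambda_n(s) = \lambda_{n-1}(s) \circ \kappa_n$ on $\fK_n$. Since $\lambda_n(s) = \ad e^{isH_n} \circ \ad e^{-isH_{0,n}}$, it suffices to check the analogous identity for each factor. For the non-interacting factor this is immediate because $H_0$ does not mix tensor factors — this is exactly the computation underlying Lemmas~\ref{la.11} and~\ref{la.12}. For the interacting factor I would use the isomorphism \eqref{e3.5}: given $K_n \in \fK_n$, pick $A \in \obfA$ with $A\!\upharpoonright\!\cF_n = K_n$ (\cite[Lem.~3.3]{Bu1}); since the dynamics leaves $\obfA$ invariant \cite{Bu1}, both $A$ and $\ad e^{isH}(A)$ determine coherent sequences, and comparing these on $\cF_n$ and $\cF_{n-1}$ gives $\kappa_n\big(\ad e^{isH_n}(K_n)\big) = \ad e^{isH_{n-1}}\big(\kappa_n(K_n)\big)$. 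Composing the two identities yields the claim for $\lambda_n(s)$.

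With this in hand I would apply Lemma~\ref{la.6} twice. First, with $\fB_1 = \fB_2 = \fK_n$, the s.o.-continuous function $s \mapsto B_n(s)_0$ (whose integrals lie in $\fK_n$ by hypothesis), and the maps $\lambda_n(s)$, it expresses $\int_0^t \! ds\, B_n(s) = \int_0^t \! ds\, \lambda_n(s)\big(B_n(s)_0\big)$ as a norm limit of the sums $S_m \doteq \sum_{l=1}^m \lambda_n(lt/m)\big(\int_{(l-1)t/m}^{lt/m} \! ds\, B_n(s)_0\big)$. Applying the norm-continuous homomorphism $\kappa_n$ to $S_m$, using the intertwining identity, and using that the hypothesis $\kappa_n\big(\int_0^u \! ds\, B_n(s)_0\big) = \int_0^u \! ds\, B_{n-1}(s)_0$ holds for every $u$ — hence, by additivity of the integral and linearity of $\kappa_n$, on every subinterval, so that in particular $\int_0^t \! ds\, B_{n-1}(s)_0 = \kappa_n\big(\int_0^t \! ds\, B_n(s)_0\big) \in \fK_{n-1}$ — turns $\kappa_n(S_m)$ into $\sum_{l=1}^m \lambda_{n-1}(lt/m)\big(\int_{(l-1)t/m}^{lt/m} \! ds\, B_{n-1}(s)_0\big)$. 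A second application of Lemma~\ref{la.6}, now at level $n-1$ with the maps $\lambda_{n-1}(s)$ and the function $s \mapsto B_{n-1}(s)_0$, identifies the limit of these sums as $\int_0^t \! ds\, \lambda_{n-1}(s)\big(B_{n-1}(s)_0\big) = \int_0^t \! ds\, B_{n-1}(s)$; passing to the limit and invoking norm continuity of $\kappa_n$ then gives the assertion.

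The main obstacle is the intertwining identity $\kappa_n \circ \lambda_n(s) = \lambda_{n-1}(s) \circ \kappa_n$; everything else is bookkeeping around Lemma~\ref{la.6}. This identity amounts to saying that the interacting time evolution, already known \cite{Bu1} to be an automorphism of $\obfA$, corresponds under the isomorphism $\obfA \cong \bfK$ of \eqref{e3.5} to an automorphism of the inverse system $\{\fK_n, \kappa_n\}$. Care is needed here because the potential $V_n$ appearing in the Dyson series for $\lambda_n(s)$ is not itself an element of $\fK_n$, so one cannot argue term by term in the expansion and must instead route through the global observable $A$ as above.
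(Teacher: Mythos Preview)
Your proposal is correct and follows essentially the same route as the paper's proof: both rewrite $B_n(s) = \lambda_n(s)\big(B_n(s)_0\big)$, invoke Lemma~\ref{la.6} to pass to finite Riemann-type sums, push $\kappa_n$ through using the intertwining relation $\kappa_n \circ \lambda_n(s) = \lambda_{n-1}(s) \circ \kappa_n$, and take the norm limit. The only difference is that the paper obtains this intertwining relation by a direct citation of \cite[Lem.~4.5]{Bu1} (and its appendix), whereas you reprove it inline by factoring $\lambda_n(s)$ into its interacting and non-interacting parts and routing the interacting part through a global observable $A \in \obfA$; this is in fact the content of the cited lemma, so the arguments coincide.
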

\begin{proof}
As was shown in Lemma~\ref{la.9}, the condition  
$\int_0^t \! ds \, B_n(s)_0 \in \fK_n$ implies that 
$$
\int_0^t \! ds \, B_n(s) 
= \int_0^t \! ds \, \lambda_n(s) 
\big( B_n(s)_0 \big)  \in \fK_n \, , \quad t \in \RR \, , 
$$ 
where $\lambda_n(s) = \ad e^{isH_n} e^{-isH_{0,n}} = 
\alpha_n(s) \, \scirc \, \alpha_{0,n}(-s)$, $s \in \RR$.  
In view of the properties of the function
$s \mapsto \lambda_n(s)$, established in the proof of
Lemma~\ref{la.9}, and the anticipated properties of 
$s \mapsto B_n(s)_0$, we can apply again
Lemma~\ref{la.6}. It implies that 
the integral on the right hand side of the above 
equality can be approximated in the limit of large $m \in \NN$ 
in norm by the sums 
$$
\sum_{l = 1}^m \lambda_n(lt/m)\Big(\! \! 
\int_{(l-1)t/m}^{lt/m} \! ds \, B_n(s)_0 \Big) \, . 
$$
We apply to these sums the homomorphism $\kappa_n$, making use of 
Lemma~4.5 and the appendix in \cite{Bu1} according to which 
$$
\kappa_n \, \scirc \, \lambda_n(s)
= \kappa_n \, \scirc \, \alpha_n(s) \, \scirc \, \alpha_{0,n}(-s)
= \alpha_{n-1}(s) \, \scirc \, \alpha_{0,n-1}(-s) \, \scirc \, \kappa_n
=  \lambda_{n-1}(s) \, \scirc \, \kappa_n \, .
$$
Thus we obtain 
\begin{align*}
\kappa_n & \Big(\! \sum_{l = 1}^m \lambda_n(lt/m)\Big(\! 
\int_{(l-1)t/m}^{lt/m} \! ds \, B_n(s)_0 \Big) \Big) \\ 
& = \sum_{l= 1}^m  \lambda_{n-1}(lt/m) \, \scirc \, \kappa_n
\Big(\! 
\int_{(l-1)t/m}^{lt/m} \! ds \, B_n(s)_0 \Big) \\ 
& = \sum_{l = 1}^m  \lambda_{n-1}(lt/m) \Big(\! \int_{(l-1)t/m}^{lt/m} \! ds \, 
B_{n-1}(s)_0 \Big) \, .
\end{align*}
Bearing in mind that $\kappa_n$ is norm continuous, the statement
then follows in the limit of large $m$ by the norm convergence 
of the sums.
\qed \end{proof}

With the help of the precding three lemmas we can determine now the action
of $\kappa_n$ on 
integrals involving the function $s \mapsto C_{f,n}(s)$
in the Dyson expansion \eqref{ea.4}, $n \in \NN$. Recall that 
$C_{f,n} = A_{f,n} + B_{f,n}$,
cf.\ Proposition~\ref{pa.5}, where
\begin{equation} \label{ea.6}
A_{f,n} = \hat{O}_{f,n} \, N_{f,n}^{-1} E_{f,n} +
\beta_{f,n}\big(\check{O}_{f,n-1} -
  \sigma_{f,n-1}(\check{O}_{f,n-1})\big) \, .
\end{equation}
Here $\hat{O}_{f,n}, \check{O}_{f,n-1}$ are the restrictions to
$\cF_n$, respectively $\cF_{n-1}$, of sums of second quantizations
of compact one- and two-particle operators. Furthermore,
\begin{equation}  \label{ea.7}
  B_{f,n} = \hat{V}_{f,n} \, N_{f,n}^{-1} E_{f,n} +
\beta_{f,n}\big(\check{V}_{f,n-1} -
  \sigma_{f,n-1}(\check{V}_{f,n-1})\big) \, ,  
\end{equation}
where $\hat{V}_{f,n}, \check{V}_{f,n-1}$ are the restrictions to
$\cF_n$, respectively $\cF_{n-1}$, of the second quantizations
of the localized pair potential $V$, cf.\ Lemmas
\ref{la.2} and \ref{la.3}. We then have the following
result involving the interacting dynamics. 

\begin{lemma} \label{la.14}
Let  $n \in \NN$, let $C_{f,n} = A_{f,n} + B_{f,n}$ 
be the operator given above, 
and let $s \mapsto D_n(s)$ be a norm continuous
function with values in $\fK_n$. Then 
$$
\kappa_n \Big(\int_0^t \! ds \, C_{f,n}(s) \, D_n(s) \Big)
= \int_0^t \! ds \, C_{f,n-1}(s) \, \kappa_n\big(D_n(s)\big) \, , 
\quad t \in \RR \, ,
$$
where the integrals have values in $\fK_n$, respectively $\fK_{n-1}$. 
\end{lemma}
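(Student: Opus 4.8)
The plan is to reduce the statement to the building blocks assembled in Subsections~\ref{ssa.2} and~\ref{ssa.3}. First I would expand $C_{f,n}=A_{f,n}+B_{f,n}$ into the four summands of \eqref{ea.6} and \eqref{ea.7}: the two ``good'' terms $\hat{O}_{f,n}\,N_{f,n}^{-1}E_{f,n}$ and $\beta_{f,n}\big(\check{O}_{f,n-1}-\sigma_{f,n-1}(\check{O}_{f,n-1})\big)$, which lie in $\fK_n$ ($\hat{O}_{f,n}$ being the second quantization of a finite-rank operator, the second term being in $\fK_n$ by Lemma~\ref{la.4}), and the two ``potential'' terms $\hat{V}_{f,n}\,N_{f,n}^{-1}E_{f,n}$ and $\beta_{f,n}\big((1+N_{f,n-1})^{1/2}\check{V}_{f,n-1}(1+N_{f,n-1})^{-1/2}-\check{V}_{f,n-1}\big)$, which are merely bounded since $\hat{V}_{f,2}$ and $\check{V}_{f,2}$ are not compact. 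By linearity of $\kappa_n$ and of the integral it suffices to treat each summand $T_n$ separately, keeping the factor $D_n(s)$ in place; that the resulting integrals have values in $\fK_n$ (resp.\ $\fK_{n-1}$ on the right) was already obtained in the proof of Proposition~\ref{pa.10}, so the genuinely new content is only the intertwining identity with $\kappa_n$.

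For a summand $T_n\in\fK_n$ the argument is direct. The function $s\mapsto T_n(s)\,D_n(s)=\ad e^{isH_n}(T_n)\,D_n(s)$ is norm continuous with values in $\fK_n$, since the interacting dynamics acts norm continuously on $\fK_n$ (Proposition~4.4 and the appendix of \cite{Bu1}); hence the integral converges in norm and the norm-continuous homomorphism $\kappa_n$ may be carried inside, giving $\int_0^t\!ds\,\kappa_n\big(T_n(s)\big)\,\kappa_n\big(D_n(s)\big)$. Using the intertwiner $\kappa_n\scirc\ad e^{isH_n}=\ad e^{isH_{n-1}}\scirc\kappa_n$ on $\fK_n$ (Lemma~4.5 and the appendix of \cite{Bu1}) this equals $\int_0^t\!ds\,\ad e^{isH_{n-1}}\big(\kappa_n(T_n)\big)\,\kappa_n(D_n(s))$, so it remains to check $\kappa_n(T_n)=T_{n-1}$, the corresponding summand of $C_{f,n-1}$. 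For the $\hat{O}$-term this is the computation $\kappa_n\big(\hat{O}_{f,n}N_{f,n}^{-1}E_{f,n}\big)=\hat{O}_{f,n-1}N_{f,n-1}^{-1}E_{f,n-1}$, which follows from the formula \eqref{e3.3} together with the expressions for $N_{f,n}$ and $E_{f,n}$ recorded in Subsection~\ref{ssa.1} ($E_{f,n}$ being rewritten by inclusion--exclusion as a genuine element of $\fK_n$, which in passing yields $\kappa_n(N_{f,n}^{-1}E_{f,n})=N_{f,n-1}^{-1}E_{f,n-1}$); for the $\check{O}$-term it follows from the intertwiner $\kappa_n\scirc\beta_{f,n}=\beta_{f,n-1}\scirc\kappa_n$ (established in the proof of Lemma~\ref{la.12}) applied to the $\fK_{n-1}$-element $\check{O}_{f,n-1}-\sigma_{f,n-1}(\check{O}_{f,n-1})$.

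The main obstacle is the two potential summands, for which $T_n\notin\fK_n$ so $\kappa_n$ cannot be applied pointwise in $s$; here one must route the computation through the Riemann-sum approximation of Lemma~\ref{la.6}, exactly as in the proofs of Lemmas~\ref{la.9}, \ref{la.12} and~\ref{la.13}. Take the term $\hat{V}_{f,n}\,N_{f,n}^{-1}E_{f,n}$, put $\lambda_n(s)\doteq\ad\big(e^{isH_n}e^{-isH_{0,n}}\big)$ and $\widetilde{K}_n(s)\doteq\ad e^{isH_n}(N_{f,n}^{-1}E_{f,n})\,D_n(s)\in\fK_n$; then $T_n(s)D_n(s)=\mu_n(s)\big(\hat{V}_{f,n}(s)_0\big)$ with $\mu_n(s)(B)\doteq\lambda_n(s)(B)\,\widetilde{K}_n(s)$. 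The family $s\mapsto\mu_n(s)$ is norm continuous, each $\mu_n(s)$ is normal and maps $\fK_n$ into $\fK_n$, and $s\mapsto\hat{V}_{f,n}(s)_0$ is s.o.\ continuous with $\int_0^t\!ds\,\hat{V}_{f,n}(s)_0\in\fK_n$ by Lemma~\ref{la.8}; so Lemma~\ref{la.6} applies and the integral equals the norm limit of $\sum_{l=1}^m\lambda_n(lt/m)\big(\int_{(l-1)t/m}^{lt/m}\!ds\,\hat{V}_{f,n}(s)_0\big)\,\widetilde{K}_n(lt/m)$. Applying $\kappa_n$ to these sums and using $\kappa_n\scirc\lambda_n(s)=\lambda_{n-1}(s)\scirc\kappa_n$ (proof of Lemma~\ref{la.13}), $\kappa_n\big(\int_a^b\!ds\,\hat{V}_{f,n}(s)_0\big)=\int_a^b\!ds\,\hat{V}_{f,n-1}(s)_0$ (Lemma~\ref{la.11}, valid since $\hat{V}_{f,n}$ is the second quantization of the two-particle operator $\hat{V}_{f,2}$, whose free-evolution integral is compact by Lemma~\ref{la.8}(i)), and $\kappa_n\big(\widetilde{K}_n(s)\big)=\ad e^{isH_{n-1}}(N_{f,n-1}^{-1}E_{f,n-1})\,\kappa_n(D_n(s))$, one recognizes the images as the Riemann sums (in the sense of Lemma~\ref{la.6}) for $\int_0^t\!ds\,\ad e^{isH_{n-1}}\big(\hat{V}_{f,n-1}N_{f,n-1}^{-1}E_{f,n-1}\big)\,\kappa_n(D_n(s))$, i.e.\ the matching summand of the right-hand side; norm continuity of $\kappa_n$ then gives the claim in the limit $m\to\infty$. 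The $\beta_{f,n}$-potential summand is handled the same way, now with $B=\check{V}_{f,n-1}(s)_0$, wrapping the free-evolved $\beta$-map and the $\sigma_{f,n-1}$-conjugation (which contributes only the fixed multipliers $(1+N_{f,n-1})^{\pm1/2}\in\fK_{n-1}$) into $\mu_n(s)$, and using in addition $\kappa_n\scirc\beta_{f,n}=\beta_{f,n-1}\scirc\kappa_n$ and the Lipschitz estimate of Lemma~\ref{la.4}. Summing the four contributions yields the identity of the lemma.
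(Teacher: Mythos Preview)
Your argument is correct and uses the same ingredients as the paper (Lemma~\ref{la.6}, Lemma~\ref{la.8}, Lemma~\ref{la.11}, and the intertwining relations $\kappa_n\scirc\lambda_n(s)=\lambda_{n-1}(s)\scirc\kappa_n$ and $\kappa_n\scirc\beta_{g,n}=\beta_{g,n-1}\scirc\kappa_{n-1}$), but the organization differs. The paper does \emph{not} split $C_{f,n}$ into its four summands. Instead it first treats the special case $D_n(s)\equiv\one_n$, establishing $\kappa_n\big(\int_0^t ds\,C_{f,n}(s)\big)=\int_0^t ds\,C_{f,n-1}(s)$ in one stroke by invoking Lemmas~\ref{la.12} and~\ref{la.13} (which already package the sandwiching by $K',K''$ and the $\beta_{g,n}$-wrapping together with the passage from free to interacting dynamics). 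Then, for general $D_n$, it applies Lemma~\ref{la.6} a single time with the simple map $\mu_n(s)(B)=B\,D_n(s)$ and uses the $D_n=\one$ result on each Riemann-sum block. Your route is more explicit: you gain a genuine shortcut on the $A_{f,n}$-summands (since $A_{f,n}\in\fK_n$, the integrand is norm continuous in $\fK_n$ and $\kappa_n$ goes straight through, bypassing any Riemann-sum approximation), at the cost of running the Riemann-sum machinery twice for the two potential summands with a more elaborate $\mu_n(s)$ that folds in $\lambda_n(s)$, the multipliers $N_{f,n}^{-1}E_{f,n}$ or $(1+N_{f,n-1})^{\pm1/2}$, the free-evolved $\beta_{f(s)_0,n}$, and $D_n(s)$ all at once. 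Both approaches are valid; the paper's is more modular, yours is more granular about where the difficulty actually sits.
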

\begin{proof}
For the proof that the integrals in this lemma have values in 
$\fK_n$, respectively $\fK_{n-1}$, we make use of Lemma 
\ref{la.6}: the function 
$s \mapsto \lambda_n(s)$ of normal linear maps 
on $\cB(\cF_n)$, given by 
$\lambda_n(s)(B_n) \doteq B_n \, D_n(s)$, $B_n \in \cB(\cF_n)$, is continuous 
in norm and maps $\fK_n$ into itself; and, 
as was shown in the proof of Proposition \ref{pa.10}, 
$\int_0^t \! ds \, C_{f,n}(s) \in \fK_n$.
Thus the first integral in the statement is an element of
$\fK_n$. The same argument applies to the second integral 
since $s \mapsto \kappa_n\big(D_n(s)\big) \in \fK_{n-1}$ is norm continuous
and $\int_0^t \! ds \, C_{f,n-1}(s) \in \fK_{n-1}$. 

\medskip
In order to determine the action of $\kappa_n$,
we first restrict attention to the constant function
$s \mapsto D_n(s) \doteq 1 \upharpoonright \cF_n$, \ie  
the integral $\int_0^t \! ds \, C_{f,n}(s)$. 
In the contributions \eqref{ea.6} and \eqref{ea.7}
to this integral, there appear the operators 
$N_{f,n}^{-1} E_{f,n}$ and $(\one_{n-1} + N_{f,n-1})^{\pm 1/2}$.
Putting $l=n, n-1$,
these are bounded functions $b(N_{f,l}) \in \fK_l$ of the operators 
$N_{f,l} \in \fK_l$,  
which in turn are restrictions to $\cF_l$ of 
the second quantization $N_f$ of the 
one-dimensional projection $E_{f,1}$ on $\cF_1$. 
Since the maps $\kappa_l$ are homomorphisms,
it follows that $\kappa_l\big(b(N_{f,l})\big) = b(N_{f,l-1})$.
Furthermore, as was shown in Lemma \ref{la.11}, one has
$\kappa_n \, \scirc \, \beta_{f,n} =
\beta_{f,n-1} \, \scirc \, \kappa_n$. Finally,
the operators $\hat{O}_{f,n}, \check{O}_{f,n-1}$ and
$\hat{V}_{f,n}, \check{V}_{f,n-1}$ in 
relations  \eqref{ea.6} and \eqref{ea.7}
are of the type of operators $O$ considered in Lemma \ref{la.11},
cf.\ also Lemma \ref{la.8}. 
Thus, Lemmas \ref{la.12} and \ref{la.13} apply 
to the function $s \mapsto C_{f,n}(s) = A_{f,n}(s) + B_{f,n}(s)$. 
Whence, making also use of the preceding relations,  we arrive at 
$$
\kappa_n \Big(\int_0^t \! ds \, C_{f,n}(s)\Big) =
\int_0^t \! ds \, C_{f,n-1}(s) \, , \quad t \in \RR \, .
$$

Let us turn now to the case of arbitrary 
norm continuous functions $s \mapsto D_n(s)$ with values in $\fK_n$.
Adopting the notation in the beginning of this proof, we have
$$
\int_0^t \! ds \, C_{f,n}(s) \, D_n(s) = 
\int_0^t \! ds \, \lambda_n(s)\big(C_{f,n}(s)\big) \, .
$$
According to Lemma \ref{la.6}, 
the latter integral can be approximated in norm in the
limit of large $m \in \NN$ by
$$
\sum_{l = 1}^m 
\lambda_n(lt/m)\Big(\int_{(l-1)t/m}^{lt/m} \! ds \, C_n(s) \Big) =
\sum_{l = 1}^m 
\Big( \int_{(l-1)t/m}^{lt/m} \! ds \, C_n(s) \Big) D_n(lt/m) \, .
$$
Applying to the expression on the right hand 
side of this equality the homomorphism $\kappa_n$, we obtain
$$
\sum_{l = 1}^m 
\Big( \int_{(l-1)t/m}^{lt/m} \! ds \, C_{n-1}(s) \Big) \, 
\kappa_n\big(D_n(lt/m)\big) \, , 
$$
where we made use of the result obtained in the preceding step.
Since the function $s \mapsto \kappa_n\big(D_n(s)\big) \in \fK_{n-1}$
is norm continuous and $\int_0^t \! ds \, C_{n-1}(s) \in \fK_{n-1}$,
we can proceed in the latter sum again to the limit of large 
$m$. By Lemma \ref{la.6}, we thereby   
arrive at the integral on the right hand side 
of the equality in the statement
of the lemma, completing its proof. \qed \end{proof}

With the help of the preceding lemma we can establish now
the coherence condition for the operators $\Gamma_{f,n}(t)$,
which, according to Proposition \ref{pa.10}, are elements of
$\fK_n$, $n \in \NN$. 

\begin{proposition} \label{pa.15} 
Let $n \in \NN$ and let 
$\Gamma_{f,n}(t) \in \fK_n$ 
be the operators, given in Eqn.~\eqref{ea.4}.
Then $\kappa_n\big(\Gamma_{f,n}(t)\big) = \Gamma_{f,n-1}(t)$. 
\end{proposition}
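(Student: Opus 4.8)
The plan is to push the homomorphism $\kappa_n$ through the Dyson expansion \eqref{ea.4} term by term and then pass to the limit, using that a $*$-homomorphism between C*-algebras is norm contractive, hence continuous. As a preliminary I would observe that $\kappa_n(E_{f,n}) = E_{f,n-1}$: since $W_f, W_f^* \in \obfF$, the projection $E_f = W_f W_f^*$ is gauge invariant and therefore lies in $\obfA$, cf.\ Lemma~\ref{l3.2}, with $E_f \upharpoonright \cF_m = E_{f,m}$ for all $m$; the coherence of the sequence attached to an observable, i.e.\ the isomorphism \eqref{e3.5} together with \cite[Lem.~3.4]{Bu1}, then gives $\kappa_n(E_{f,n}) = E_f \upharpoonright \cF_{n-1} = E_{f,n-1}$. (Alternatively one can compute this directly from the formula $E_{f,n} = \one_n - (1-E_{f,1})^{\otimes_s n}$ of Subsec.~\ref{ssa.1} together with \eqref{e3.3}.)

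Next, writing $D_{n,k}(t)$, $k \in \NN$, for the $k$-th time ordered integral in \eqref{ea.4}, so that $\Gamma_{f,n}(t) = E_{f,n} + \sum_{k \geq 1} i^k D_{n,k}(t)$ with the series absolutely norm convergent, I would show by induction on $k$ that $\kappa_n\big(D_{n,k}(t)\big) = D_{n-1,k}(t)$ for all $t \in \RR$ and $n \in \NN$. The case $k = 1$ is immediate from Lemma~\ref{la.14} applied to the constant function $s \mapsto \one_n$ (which has values in $\fK_n$, is trivially norm continuous, and satisfies $\kappa_n(\one_n) = \one_{n-1}$ by \eqref{e3.3}); it yields $\kappa_n\big(\int_0^t ds\, C_{f,n}(s)\big) = \int_0^t ds\, C_{f,n-1}(s)$. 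For the inductive step I would invoke the recursion $D_{n,k+1}(t) = \int_0^t ds\, C_{f,n}(s)\, D_{n,k}(s)$ recorded after \eqref{ea.4}. By Proposition~\ref{pa.10} (more precisely, its proof) the function $s \mapsto D_{n,k}(s)$ is norm continuous with values in $\fK_n$, so Lemma~\ref{la.14} is applicable and gives $\kappa_n\big(D_{n,k+1}(t)\big) = \int_0^t ds\, C_{f,n-1}(s)\, \kappa_n\big(D_{n,k}(s)\big)$; substituting the induction hypothesis $\kappa_n\big(D_{n,k}(s)\big) = D_{n-1,k}(s)$ and using the recursion once more at level $n-1$ identifies the right hand side with $D_{n-1,k+1}(t)$, which closes the induction.

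Finally I would assemble the result: since $\kappa_n$ is contractive and \eqref{ea.4} converges absolutely in norm, $\kappa_n$ may be interchanged with the sum, so
$$
\kappa_n\big(\Gamma_{f,n}(t)\big) = \kappa_n(E_{f,n}) + \sum_{k \geq 1} i^k \kappa_n\big(D_{n,k}(t)\big) = E_{f,n-1} + \sum_{k \geq 1} i^k D_{n-1,k}(t) = \Gamma_{f,n-1}(t) \, .
$$
The one genuinely substantive ingredient is Lemma~\ref{la.14}, which is what allows $\kappa_n$ to be pulled through the time ordered integral despite the fact that $s \mapsto C_{f,n}(s)$ does not itself take values in $\fK_n$; the remainder is bookkeeping with the Dyson series, together with the elementary continuity of $\kappa_n$ and the low-$n$ conventions ($\fK_m = \{0\}$ for $m<0$, $\kappa_0 = 0$, $E_{f,0}=0$), under which the argument stays formally valid throughout.
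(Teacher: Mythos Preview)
Your proof is correct and follows essentially the same route as the paper's: both argue by induction on $k$ that $\kappa_n(D_{n,k}(t)) = D_{n-1,k}(t)$ using Lemma~\ref{la.14} and the recursion $D_{n,k+1}(t) = \int_0^t ds\, C_{f,n}(s)\, D_{n,k}(s)$, then conclude by norm convergence of the Dyson series. You are more explicit than the paper about the zeroth term $\kappa_n(E_{f,n}) = E_{f,n-1}$ and about the contractivity of $\kappa_n$ justifying the interchange with the infinite sum, but these are details the paper leaves implicit.
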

\begin{proof}
We make use again of the Dyson 
expansion \eqref{ea.4} and show that the multiple integrals
$D_{n,k}(s)$, $k \in \NN$, 
involving the operator $C_{f,n}$, cf.\ Eqn.~\eqref{ea.5}, are mapped by
$\kappa_n$ into corresponding integrals, where $C_{f,n}$ is
replaced by $C_{f,n-1}$
and $D_{n,k}(s)$ by
$D_{n-1,k}(s)$, $s \in \RR$. The statement 
then follows from the norm convergence of
the series. For its proof we make use of the inductive
argument given in the proof of 
Proposition~\ref{pa.10}. We have shown in the preceding lemma 
that 
$$
\kappa_n\big(D_{n,1}(t)\big) = 
\kappa_n \Big( \int_0^t \! ds \, C_n(s) \Big) 
= \int_0^t \! ds \, C_{n-1}(s) = D_{n-1,1}(t) \, , 
\quad n \in \NN  \, .
$$ 
Assuming that the analogous relation holds for 
the $k$-fold integrals of $C_n$, we represent 
the $(k+1)$-fold integral in the form 
$t \mapsto D_{n,k+1}(t) = \int_0^t \! ds \, C_n(s) \, D_{n,k}(s)$,
where $s \mapsto D_{n,k}(s) \in \fK_n$ is norm continuous. Thus it follows
from the preceding lemma that 
\begin{align*}
\kappa_{n}\big(D_{n,k+1}(t)\big) & =
\kappa_{n} \Big( \int_0^t \! ds \, C_n(s) \, D_{n,k}(s) \Big) \\
& = \int_0^t \! ds \, C_{n-1}(s) \, \kappa_n\big(D_{n,k}(s)\big)
= \int_0^t \! ds \, C_{n-1}(s) \, D_{n-1,k}(s) \, ,
\end{align*}
where in the last equality we made use of the induction hypothesis.
This establishes the coherence condition.
\qed \end{proof}

Let us summarize the results of this appendix. In order to
prove Theorem~\ref{t5.3}, we have analyzed the
properties of the operators (intertwiners between morphisms) 
$\Gamma_f(t) E_f$, $t \in \RR$, which 
were defined in equation \eqref{e5.2}. Since these operators
commute with the particle number operator $N$, we could proceed to
their restrictions 
$\Gamma_{f,n}(t) = \Gamma_f(t) E_f \upharpoonright \cF_n$, $n \in \NN_0$.  
We have shown in  Proposition \ref{pa.10} that 
$\Gamma_{f,n}(t) \in \fK_n$, and from Proposition \ref{pa.15}
we know that $\kappa_n\big(\Gamma_{f,n}(t)\big) = \Gamma_{f,n-1}(t) \in \fK_{n-1}$. 
Since $\Gamma_f(t) E_f$ is a bounded operator on $\cF$, this implies 
$\Gamma_f(t) E_f \in \obfA$. Hence 
$$
\ad e^{itH} (W_f) = \Gamma_f(t) \, W_f \in \obfF \, , \quad t \in \RR \, , 
$$
cf.\ also Lemma \ref{l4.1}. 
Since the observable algebra $\obfA$ is stable under the time 
translations \cite[Thm.~4.6]{Bu1} and the field algebra $\obfF$ is generated 
by $\obfA$ and the tensors $W_f$, $W_f^*$, this proves that 
$\ad e^{itH} (\obfF) = \obfF$, $t \in \RR$.

\medskip
It also follows  
from Proposition \ref{pa.10} that the functions 
$t \mapsto \Gamma_{f,n}(t) \in \fK_n$ are norm continuous, $n \in \NN_0$.
Hence 
$t \mapsto \Gamma_f(t) E_f$ is lct-continuous. 
It implies that the time translated tensors
$ t \mapsto \ad e^{itH} (W_f) = \Gamma_f(t) \, W_f$
are lct-continuous, $t \in \RR$.
Since the time translations act lct-continuously on the 
observable algebra $\obfA$, cf.\ \cite[Thm.\ 4.6]{Bu1}, 
and the finite polynomials in 
the basic tensors, multiplied with observables, 
are norm dense in the field algebra $\obfF$,
this establishes the lct-continuity of the time tranlsations 
$t \mapsto \ad e^{itH}$ on $\obfF$.

\medskip
For the proof of existence of an lct-dense 
sub-C*-algebra $\obfF_0 \subset \obfF$
on which the time translations act norm-continuously, we proceed as in 
Theorem \ref{t5.2}. Let 
$F_m \in \obfF$ be any tensor, $m \in \ZZ$, 
let $t \mapsto k(t)$ be any continuous function 
on $\RR$ with compact support, and consider the 
integral $F_m(k) \doteq \int \! dt \, k(t) \, \ad e^{itH}(F_m)$.
The resulting function $t \mapsto \ad e^{itH} \big(F_m(k)\big)$
is, due to the regularization, norm-continuous on the full Fock space $\cF$. 
It has values in~$\obfF$ and the C*-algebra $\obfF_0$ generated by 
the operators $F_m(k)$ is lct-dense in~$\obfF$. For the proof of the  
latter assertions, we make use of the fact that for any 
$l \in \NN_0$   
the restrictions of the gauge invariant operators, $m \leq 0$,    
\begin{align*}
G_l(t)  
\doteq W_f^m \, \ad e^{itH}(F_m)  \upharpoonright \cF_l  
= \ad e^{itH}\big( \big(\ad e^{-itH}(W_f) \big)^m \, F_m \big) \,
\upharpoonright \cF_l \in \fK_l 
\end{align*}
are norm continuous in $t \in \RR$ and satisfy 
$\kappa_l\big(G_l(t)\big) = G_{l-1}(t)$. 
So their integrals $G_l(k) = \int \! dt \, k(t) \, G_l(t)$ 
exist in the norm topology,
hence are elements of~$\fK_l$, and 
$\kappa_l\big(G_l(k)\big) = G_{l-1}(k)$. The coherent sequence 
$\{ G_l(k) \}_{l \in \NN_0}$ 
defines some bounded element $G(k) \in \obfA$ and
$F_m(k) = W_f^{* \, m} \, G(k) \in \obfF$, as claimed. A similar argument
applies to tensors with $m \geq 0$. 

\medskip
The assertion that the C*-algebra $\obfF_0$ is lct-dense in $\obfF$ follows
from the fact that one can proceed in the integrals 
$G_l(k) = \int \! dt \, k(t) \, G_l(t)$ with $k$ to the Dirac
measure, whereby the sequence $G_l(k) \in \fK_l$ converges in norm to $G_l$,    
$l \in \NN_0$. This establishes the lct-density of~$\obfF_0$
in $\obfF$ and completes the proof of Theorem \ref{t5.3}.

\newpage
\noindent {\Large \bf Acknowledgement} \\[1mm]
I would 
like to thank Dorothea Bahns and the Mathematics Institute of the   
University of G\"ottingen for their generous hospitality.   
I am also grateful to Mathieu Lewin for explaining to me his approach 
to the algebraic treatment of Bosonic systems and to Wojciech Dybalski 
for discussions on an extension of the present 
results to non-gauge invariant dynamics.

\end{document}